\newcommand{\defref}[1]{Definition~\ref{#1}}
\newcommand{\thmref}[1]{Theorem~\ref{#1}}
\newcommand{\lemref}[1]{Lemma~\ref{#1}}
\newcommand{\exref}[1]{Example~\ref{#1}}
\newcommand{\corref}[1]{Corollary~\ref{#1}}
\newcommand{\propref}[1]{Proposition~\ref{#1}}
 \definecolor{chameleond}{HTML}{4E9A06}
 \definecolor{skyblued}{HTML}{204A87}
 \definecolor{bluue}{HTML}{0047AB}
\definecolor{colorF}{HTML}{8bc28c} 
\definecolor{colorB}{HTML}{f18aad} 
\definecolor{colorCdark}{HTML}{ea6759} 
\definecolor{colorC}{HTML}{f4b3ac} 
\definecolor{colorD}{HTML}{f88f58} 
\definecolor{colorE}{HTML}{f3c65f} 
\definecolor{colorAdark}{HTML}{6667ab} 
\definecolor{colorA}{HTML}{b2b3d5} 
\definecolor{Salade}{HTML}{228B22}%
\colorlet{Salad}{Salade!90!black}%
\definecolor{Bleuh}{HTML}{15508F}%
\colorlet{Oranj}{orange!70!red}%
\colorlet{LightSalad}{colorF}%
\colorlet{LightBleuh}{colorA}%
\colorlet{LightOranj}{colorD}%
\newlength\theWidth
\newlength\theHeight
\newlength\theDepth
\newcommand\SetupBox[2]{%
  \settowidth\theWidth{#2}%
  \settoheight\theHeight{#2}%
  \settodepth\theDepth{#2}%
  \expandafter\newsavebox{#1}%
  \expandafter\savebox{#1}{\parbox[c][\theHeight]{\theWidth}{#2}}%
}%
\newcommand\maxparallel{\rule[-2pt]{2pt}{10pt}}
\newcommand{\asym}{f}                
\newcommand{\sym}[1]{\mathsf{#1}}    
\newcommand{\FZero}{\sym{Zero}}
\newcommand{\FS}{\sym{S}}
\newcommand{\FNil}{\sym{Nil}}
\newcommand{\FTree}{\sym{Tree}}
\newcommand{\FCons}{\sym{Cons}}
\newcommand{\Fplus}{\sym{plus}}
\newcommand{\Fsize}{\sym{size}}
\newcommand{\Fdoubles}{\sym{doubles}}
\newcommand{\FCom}{\sym{Com}}
\newcommand{\FComPar}{\sym{ComPar}}
\newcommand{\Fa}{\sym{a}}
\newcommand{\Fb}{\sym{b}}
\newcommand{\Fc}{\sym{c}}
\newcommand{\Fd}{\sym{d}}
\newcommand{\Ff}{\sym{f}}
\newcommand{\Fg}{\sym{g}}
\newcommand{\Fminus}{\sym{-}}
\newcommand{\Fs}{\sym{S}}
\newcommand{\Fzero}{\sym{0}}
\newcommand{\Fleq}{\sym{leq}}
\newcommand{\Ftrue}{\sym{True}}
\newcommand{\Ffalse}{\sym{False}}
\newcommand{\Fmod}{\sym{mod}}
\newcommand{\Fif}{\sym{if}}
\newcommand{\Fcond}{\sym{cond}}
\newcommand{\Fmax}{\sym{max}}
\newcommand{\tup}[1]{#1^{\sharp}}
\newcommand{\FPLUS}{\tup{\Fplus}}
\newcommand{\FSIZE}{\tup{\Fsize}}
\newcommand{\FDOUBLES}{\tup{\Fdoubles}}
\newcommand{\FD}{\tup{\Fd}}
\newcommand{\cost}{\textit{cost}}
\newcommand{\hole}{\Box}   
\newcommand{\sbot}{\alpha} 
\newcommand{\sDT}{\beta}  
\newcommand{\Dom}{\mathit{Dom}}
\newcommand{\Pol}{{\mathcal{P}ol}}
\newcommand{\Signature}{\ensuremath{\Sigma}}      
\newcommand{\DefSyms}{\ensuremath{\Signature_d}}      
\newcommand{\ConSyms}{\ensuremath{\Signature_c}}      
\newcommand{\symsof}[1]{\Signature^{#1}}
\DeclareMathOperator{\Dh}{dh}                     
\newcommand{\Cplx}{\mathit{Cplx}}                 
\newcommand{\rt}{\mathrm{root}}                   
\newcommand{\tsize}[1]{\ensuremath{\lvert#1\rvert}}
\newcommand{\irc}[1]{\mathrm{irc}_{#1}}
\newcommand{\ircR}{\irc{\RR}}
\newcommand{\pirc}[1]{\mathrm{pirc}_{#1}}
\newcommand{\pircR}{\pirc{\RR}}
\newcommand{\maxanf}[1]{#1\!\Downarrow} 
\newcommand{\detup}{\delta} 
\newcommand{\detupTRS}{\SSS/((\DD\setminus\SSS)\cup\RR)}
\newcommand{\msdc}[2]{\msdcSym(#1;#2)}
\newcommand{\msdcSym}{\mathit{MSDC}}
\newcommand{\lmsdc}{\langle} 
\newcommand{\rmsdc}{\rangle} 
\renewcommand{\msdc}[2]{\lmsdc #2 \rmsdc \in \msdcSym(#1)}
\newcommand{\TT}{\mathcal{T}}
\newcommand{\RR}{\mathcal{R}}
\newcommand{\DD}{\mathcal{D}}
\newcommand{\SSS}{\mathcal{S}}
\newcommand{\RRA}{{\RR_1}}  
\newcommand{\RRB}{{\RR_2}}  
\newcommand{\OO}{\mathcal{O}}
\newcommand{\Oh}{\OO}
\newcommand{\VV}{\mathcal{V}}
\newcommand{\BasicTerms}{\TT_{\mathrm{basic}}}    
\newcommand{\SharpTerms}{\TT^\sharp}              
\newcommand{\Pos}{\mathcal{P}\!\mathit{os}}       
\newcommand{\PosDef}{\Pos_d}                      
\newcommand{\DT}{\mathit{DT}}                     
\newcommand{\DTpar}{\mathit{PDT}}      
\newcommand{\ito}{
  \mathrel{\smash{\stackrel{\raisebox{2pt}{\scriptsize $\mathsf{i}\:$}}%
      {\smash{\rightarrow}}}_{\RR}}
}
\newcommand{\itodetup}{
  \mathrel{\smash{\stackrel{\raisebox{2pt}{\scriptsize $\mathsf{i}\:$}}%
      {\smash{\rightarrow}}}_{\detup(\langle \DD, \SSS, \RR \rangle)}}
}
\renewcommand{\itodetup}{
  \mathrel{\smash{\stackrel{\raisebox{2pt}{\scriptsize $\mathsf{i}\:$}}%
      {\smash{\rightarrow}}}_{\SSS/((\DD\setminus\SSS)\cup\RR)}}
}
\renewcommand{\itodetup}{\someito{\SSS/((\DD\setminus\SSS)\cup\RR)}}
\newcommand{\itosdetup}{
  \mathrel{\smash{\stackrel{\raisebox{2pt}{\scriptsize $\mathsf{i}\:$}}%
      {\smash{\rightarrow^*}}}_{\SSS/((\DD\setminus\SSS)\cup\RR)}}
}
\renewcommand{\itosdetup}{\someitos{\SSS/((\DD\setminus\SSS)\cup\RR)}}
\newcommand{\itop}{
  \mathrel{\smash{\stackrel{\raisebox{2pt}{\scriptsize $\mathsf{i}\:\:\:$}}%
      {\smash{\rightarrow^+}}}_{\RR}}
}
\renewcommand{\itop}{\someitop{\RR}}
\newcommand{\itos}{
  \mathrel{\smash{\stackrel{\raisebox{2pt}{\scriptsize $\mathsf{i}\:$}}%
      {\smash{\rightarrow^*}}}_{\RR}}
}
\renewcommand{\itos}{\someitos{\RR}}
\newcommand{\oldpto}{
  \mathrel{\smash{\stackrel{\raisebox{2pt}{\scriptsize $\:\:\ \:\:$}}%
      {\smash{\longrightarrow\hspace*{-14pt}{\parallel}\hspace*{8pt}}}}_{\RR}}
}
\newcommand{\pito}{
  \mathrel{\smash{\stackrel{\raisebox{2pt}{\scriptsize $\:\:\ \mathsf{i}\:$}}%
      {\smash{\longrightarrow\hspace*{-14pt}{\maxparallel}\hspace*{8pt}}}}_{\RR}}
}
\newcommand{\pitos}{
  \mathrel{\smash{\stackrel{\raisebox{2pt}{\scriptsize $\:\:\ \mathsf{i}\:$}}%
      {\smash{\longrightarrow\hspace*{-14pt}{\maxparallel}\hspace*{8pt}}}}^*_{\RR}}
}
\newcommand{\pitosbelow}{
  \mathrel{\smash{\stackrel{\raisebox{2pt}{\scriptsize $\:\:\ \mathsf{i}$}}%
      {\smash{\longrightarrow\hspace*{-14pt}{\maxparallel}\hspace*{8pt}}}}^*_{\RR,>\varepsilon}}
}
\newcommand{\someto}[1]{
  \mathrel{\smash{\rightarrow}_{#1}}
}
\newcommand{\sometos}[1]{
  \mathrel{\smash{\rightarrow}_{#1}^{*}}
}
\newcommand{\someito}[1]{
  \mathrel{\smash{\stackrel{\raisebox{2pt}{\scriptsize $\mathsf{i}\:$}}%
      {\smash{\rightarrow}}}_{#1}}
}
\newcommand{\someitos}[1]{
  \mathrel{\smash{\stackrel{\raisebox{2pt}{\scriptsize $\mathsf{i}\:$}}%
      {\smash{\rightarrow}}}_{#1}^{*}}
}
\newcommand{\someitop}[1]{
  \mathrel{\smash{\stackrel{\raisebox{2pt}{\scriptsize $\mathsf{i}\:$}}%
      {\smash{\rightarrow}}}_{#1}^{+}}
}
\newcommand{\someitom}[1]{
  \mathrel{\smash{\stackrel{\raisebox{2pt}{\scriptsize $\mathsf{i}\:$}}%
      {\smash{\rightarrow}}}_{#1}^{m}}
}
\newcommand{\aprove}{\textsc{AProVE}\xspace}
\newcommand{\tct}{\textsc{TcT}\xspace}
\newcommand{\starexec}{\textsc{StarExec}\xspace}
\newcommand{\cofloco}{\textsc{CoFloCo}\xspace}
\newcommand{\raml}{\textsc{RAML}\xspace}
\newcommand{\coq}{\textsc{Coq}\xspace}
\newcommand{\isabelleHOL}{\textsc{Isabelle/HOL}\xspace}
\newcommand{\nat}{\mathbb{N}}
\newcommand{\cp}[2]{#1 \rtimes #2}
\newcommand{\ocp}[2]{#1 \Join #2}
\definecolor{darkred}{HTML}{AA0000}
\newcommand{\journal}[1]{\textcolor{darkred}{#1}}
\renewcommand{\journal}[1]{#1}
\SetupBox\sizeoftwohightree{$\Fsize(\FTree(\FZero, \FNil, \FTree(\FZero, \FNil, \FNil)))$}%
\SetupBox\sizeofnil{$\Fsize(\FNil)$}%
\SetupBox\sizeofonehightree{$\Fsize(\FTree(\FZero, \FNil, \FNil))$}%
\SetupBox\pluszerozero{$\Fplus(\FZero, \FZero)$}%
\SetupBox\pluszeroone{$\Fplus(\FZero, \FS(\FZero))$}%
\begin{document}


\setcounter{page}{121}
\publyear{24}
\papernumber{2191}
\volume{192}
\issue{2}

\finalVersionForARXIV


\title{On Complexity Bounds and Confluence of Parallel \newline  Term Rewriting\thanks{
  This work was partially funded by the French National Agency of Research in
  the CODAS Project (ANR-17-CE23-0004-01).
  For Open Access purposes, our extended authors' accepted
  manuscript~\cite{versionArxivJournal} of this paper is available
  under Creative Commons CC BY licence.}}


\author{Tha\"{i}s Baudon \\
LIP (UMR CNRS/ENS Lyon/UCB Lyon1/Inria) \\
Lyon, France
\and
Carsten Fuhs\thanks{Address for correspondence: Birkbeck, University of London, London, United Kingdom.}
 \\
Birkbeck, University of London \\
London, United Kingdom
\and
Laure Gonnord\thanks{Also work:  LIP (UMR CNRS/ENS Lyon/UCB Lyon1/Inria), Lyon, France.}
\\
LCIS, University Grenoble  Alpes\\
Valence ,  France
}

\maketitle

\runninghead{T.\ Baudon et al.}{On Complexity Bounds and Confluence of Parallel Term Rewriting}

\vspace*{-5mm}
\begin{abstract}
  We revisit parallel-innermost term rewriting as a model of parallel
  computation on inductive data structures and provide a corresponding
  notion of runtime complexity parametric in the size of the start
  term.  We propose automatic techniques to derive both upper and
  lower bounds on parallel complexity of rewriting that enable a
  direct reuse of existing techniques for sequential complexity. \journal{Our
  approach to find lower bounds requires confluence of the
  parallel-innermost rewrite relation,
  thus we also provide effective sufficient criteria for proving
  confluence. }The applicability and the precision of the method are
  demonstrated by the relatively light effort in extending the program
  analysis tool \aprove and by experiments on numerous benchmarks from
  the literature.
\end{abstract}

\begin{keywords}
Term rewriting, confluence, complexity analysis, parallelism, static analysis
\end{keywords}

\section{Introduction}
\label{sec:intro}
Automated inference of complexity bounds for parallel computation has
seen a surge of attention in recent years \cite{BaillotPiESOP21,BaillotPiCONCUR21,GallagherLOPSTR19,AlbertTOCL18,HoffmannESOP15,HoffmannICFP18}. While
techniques and tools for a variety of computational models have been
introduced, so far there does not seem to be any paper in this area for
complexity of \emph{term rewriting} with
parallel
evaluation
strategies.
This paper addresses this gap in the literature. We consider
term rewrite systems (TRSs) as \emph{intermediate representation} for
programs with \emph{pattern-matching} operating on \emph{algebraic data
  types}
like the one depicted in \autoref{lst:rustsize}.

\begin{figure}[H]
\begin{lstlisting}[language=caml]
let rec size : tree -> int = function
 | Node (_, left, right) -> 1 + size left + size right
 | Empty -> 0
\end{lstlisting}
\vspace*{-2ex}
\caption[Size of Tree in OCaml]{Tree size computation in OCaml}
\label{lst:rustsize}
\end{figure}

In this particular example, the recursive calls
\lstinline[language=caml]{size left} and
\lstinline[language=caml]{size right} can be done
in parallel.
Building on previous work on parallel-innermost
rewriting~\cite{parallelRewriting,innermostOrdering} and first ideas about parallel
complexity~\cite{wst16trs}, we propose a new notion of Parallel
Dependency Tuples that captures such a behaviour, and methods to
compute both upper and lower \emph{parallel complexity bounds}.

Bounds on parallel complexity can provide insights about the
potentiality of parallelisation: if sequential and parallel complexity
of a function
(asymptotically) coincide, this information can be useful for a
parallelising compiler to refrain from parallelising the evaluation of
this function.
Moreover, evaluation of TRSs (as a simple
functional programming language) in massively
parallel settings such as GPUs is currently a topic of active research
\cite{gpu_trs,gpu_trs_journal}. In this context, a static analysis of parallel complexity can be
helpful to determine whether
to rewrite on a
(fast, but not very parallel) CPU or on a (slower, but massively parallel) GPU.

In this paper, we provide techniques for the synthesis
  of both upper and lower bounds for the parallel-innermost
  runtime complexity of TRSs.
  We motivate our focus on innermost rewrite strategies by the fact
  that innermost rewriting is closely related to call-by-value
  evaluation strategies as used by many programming languages,
  such as OCaml, Scala, Rust, C++, \ldots

  Our approach to finding
  lower bounds requires that the input TRS is \emph{confluent}. This
  means essentially that computations with the TRS have deterministic
  results. Thus, we also provide efficiently checkable
  sufficient criteria for proving confluence of
  parallel-innermost rewriting, which are of interest
  both as an ingredient for our complexity analysis and in their own right.
  These criteria
  capture the confluence of TRSs corresponding
  to programs with deterministic small-step semantics,
  the motivation of this work.

\smallskip
This paper is an extended journal version of a conference paper
published at LOPSTR 2022 \cite{lopstr2022}. We make the following
additional contributions over the conference version:
\begin{compactitem}
\item
We provide additional explanations, examples, and discussion
throughout the paper.
\item
We state a stronger new sufficient criterion for confluence of
parallel-innermost rewriting (\thmref{thm:pito_confluent_with_cps}).
\item
We have run more extended experiments.
\item
We provide proofs for all our theorems.
\end{compactitem}

\section{Illustrating example}
\label{sec:example}

We illustrate our approach informally with the help
of an example.

\medskip
Consider the following functional program, given as a
term rewrite system with the following rewrite rules.
\[
\begin{array}{rcl@{\hspace*{5ex}}|@{\hspace*{5ex}}rcl}
\Fdoubles(\FZero) & \to & \FNil &
  \Fd(\FZero) & \to & \FZero \\
\Fdoubles(\FS(x)) & \to & \FCons(\Fd(\FS(x)), \Fdoubles(x)) &
  \Fd(\FS(x)) & \to & \FS(\FS(\Fd(x)))
\end{array}
\]
Here we use the constructor symbols $\FZero$ and $\FS$
to represent natural numbers (with $\FZero$ for $0$ and
$\FS(x)$ for $x+1$).
Lists are represented via the constructors $\FNil$ and $\FCons$.
Then the function $\Fd$ computes the double of a natural number:
\[
  \Fd(\FS(\FS(\FZero))) \ito  \FS(\FS(\Fd(\FS(\FZero))))
                        \ito  \FS(\FS(\FS(\FS(\Fd(\FZero)))))
                        \ito  \FS(\FS(\FS(\FS(\FZero))))
\]
In other words, $2 \cdot 2 = 4$.
The function $\Fdoubles$
takes a number $n$ and computes a term representing the list
$[2n, 2(n-1), \ldots, 4, 2]$.
To evaluate the start term $\Fdoubles(\FS(\FZero))$,
we need four rewrite steps with a sequential evaluation strategy.
In the following rewrite sequence, coloured boxes indicate terms that are
reduced by a given step, called \emph{redexes}.\footnote{Boxes that keep the same colour throughout one
or more rewrite steps indicate terms that are reduced multiple (consecutive) times.}
\[
\begin{array}{lcl}
\Fdoubles(\FS(\FZero)) &\ito&
\FCons(\colorbox{colorF}{$\Fd(\FS(\FZero))$}, \Fdoubles(\FZero)) \\
&\ito&
\FCons(\FS(\FS(\colorbox{colorF}{$\Fd(\FZero)$})), \Fdoubles(\FZero)) \\
&\ito&
\FCons(\FS(\FS(\FZero)), \colorbox{colorD}{$\Fdoubles(\FZero)$}) \\
&\ito&
\FCons(\FS(\FS(\FZero)), \FNil)
\end{array}
\]
We want to get an upper bound on the number of evaluation steps
with our program for the general case.
We first consider existing methods for the classic case of
a call-by-value strategy in a \emph{sequential} model:
evaluate only a single redex at a time, as in the examples above.
We consider start terms where a defined
function is called on \emph{data terms},
i.e., terms that use only constructor symbols.
Here these terms have the form $\Fd(t)$ and $\Fdoubles(t)$,
where $t$ may contain only the constructor symbols
$\FZero$, $\FS$, $\FNil$, $\FCons$, and variables.
Our upper bound will be parametric in the size $n$ of the start
term: larger start terms usually have higher runtimes.

\emph{Dependency Tuples} \cite{DependencyTuple} are a standard technique
for finding such upper bounds for sequential evaluation.
The idea is to ``desugar''
the program by grouping all the function calls of a rule
together (hence the name ``tuple'') and then to analyse
how big the function call tree can become.

\medskip
For example, the rule
\[
\Fdoubles(\FS(x)) \to \FCons(\Fd(\FS(x)), \Fdoubles(x))
\]
has the dependency tuple
\[
\FDOUBLES(\FS(x)) \to \FCom_2(\FD(\FS(x)), \FDOUBLES(x))
\]
that groups the function calls $\FD(\FS(x))$ and $\FDOUBLES(y)$
into a single tuple, using a fresh constructor symbol
$\FCom_2$ for a tuple of 2 arguments
Here we ignore the constructor context $\FCons(...)$
from the original rule -- evaluating it does not cost anything.
We use the $\sharp$ symbol to indicate which
function calls must be ``paid for'' in the analysis.

\medskip
This dependency tuple says, informally:
\[
\cost(\Fdoubles(\FS(x))) = 1 + \cost(\Fd(\FS(x))) + \cost(\Fdoubles(x))
\]
So evaluating a call to $\Fdoubles(\FS(x)))$ with that
rule ``costs'' us $1$ step (for using the rule itself) $+$
the cost of evaluating the call $\Fd(\FS(x))$ $+$
the cost of evaluating the call $\Fdoubles(x)$.

\medskip
\emph{Polynomial interpretations} \cite{Lankford75} are
the working horse for finding upper bounds on the complexity
of such complexity problems represented by Dependency Tuples.
A polynomial interpretation $\Pol$ maps function symbols
to polynomial functions over the natural numbers, and extends naturally to terms.
If we can find an interpretation $\Pol$
such that, among other requirements,
\[
\Pol(\FDOUBLES(\FS(x))) \geq 1 + \Pol(\FD(\FS(x))) + \Pol(\FDOUBLES(x))
\]
then the highest degree of a polynomial in $\Pol$
for a symbol $\tup{f}$ is also an upper bound for
the size of the call tree possible for these
Dependency Tuples with the given program: the polynomial
function \emph{overapproximates} the cost function.

Such polynomial interpretations can be found automatically
using modern constraint solvers \cite{satPolo,smtPolo}.
For our example, we would find an interpretation of degree 2.
This tells us that the complexity for evaluating
a function in our program using a sequential call-by-value
strategy is bounded by $\mathcal{O}(n^2)$
for $n$ as the size of the start term.
The above is just an informal overview -- \autoref{sec:irc}
provides a formal introduction to this
approach to analysing complexity for
\emph{sequential} evaluation.

The bound is tight: from
$\Fdoubles(\FS(\FS(\ldots\FS(\FZero)\ldots)))$, we get
linearly many calls to the linear-time function $\Fd$
on arguments of size linear in the start term.
So this is as good as it gets\ldots for a sequential
evaluation strategy.

\medskip
Now, how about a \emph{parallel} call-by-value evaluation
strategy, where we evaluate function calls that happen at
independent positions \emph{at the same time} rather than
one after the other?
It turns out that with a \emph{parallel} strategy,
we can evaluate our start term
$\Fdoubles(\FS(\FZero))$ in just \emph{three} steps (the two coloured redexes are reduced in parallel):
\[
\begin{array}{lcl}
\Fdoubles(\FS(\FZero)) &\pito&
\FCons(\colorbox{colorF}{$\Fd(\FS(\FZero))$}, \colorbox{colorD}{$\Fdoubles(\FZero)$}) \\
&\pito&
\FCons(\FS(\FS(\colorbox{colorF}{$\Fd(\FZero)$})), \FNil) \\
&\pito&
\FCons(\FS(\FS(\FZero)), \FNil)
\end{array}
\]
Can we expect such speed-ups in the number of steps
also in the general case?
It turns out that, for our example, the answer is \emph{yes}.
To prove this automatically, we revisit Dependency Tuples
using a little trick. The reason why parallel evaluation
is faster here is that the rule
\[
\Fdoubles(\FS(x)) \to \FCons(\Fd(\FS(x)), \Fdoubles(x))
\]
makes two function calls at independent positions, which
then get evaluated \emph{in parallel}.
Whichever of the two calls to $\Fd(\FS(x))$ and to $\Fdoubles(x)$
finishes last is responsible for the overall cost of using
this rule and evaluating its function calls in parallel.
We express this by introducing two \emph{separate}
Dependency Tuples for our rewrite rule:
\[
\begin{array}{rcl}
\FDOUBLES(\FS(x)) &\to& \FCom_1(\FD(\FS(x)))\\
\FDOUBLES(\FS(x)) &\to& \FCom_1(\FDOUBLES(x))
\end{array}
\]
These two Dependency Tuples are enough to capture the worst case:
either the call to $\FD(\FS(x))$ takes longer to evaluate
than the call to $\FDOUBLES(x)$ (then the first one represents
the worst case), or it does not (then the second one represents
the worst case).

\medskip
We can now reuse the same analysis machinery as before
to search for a polynomial interpretation $\Pol$ that
solves the following constraints:
\[
\begin{array}{rcl}
\Pol(\FDOUBLES(\FS(x))) &\geq& 1 + \Pol(\FCom_1(\FD(\FS(x))))\\
\Pol(\FDOUBLES(\FS(x))) &\geq& 1 + \Pol(\FCom_1(\FDOUBLES(x)))
\end{array}
\]
Here our constraint solvers find a solution already
for a parametric interpretation with templates of degree 1.
This tells us that the complexity for evaluating
a function in our program using a \emph{parallel} call-by-value
strategy is bounded by $\mathcal{O}(n)$
for $n$ as the size of the start term, a strictly better
bound than is possible for sequential evaluation.

\medskip
However, this example is very benign: here \emph{all} function
calls triggered by a rule are at independent positions.
How about a function to compute the size of a tree with
a rule like the following?
\[
\Fsize(\FTree(v, l, r)) \to \FS(\Fplus(\Fsize(l), \Fsize(r)))
\]
Here $\Fplus$ must wait for the calls to $\Fsize$ to finish,
and we cannot evaluate $\Fplus$ in parallel with the calls to
$\Fsize$. As we shall see in \autoref{sec:para_complex},
a refinement of our method can be used to also deal with
such more complicated structural dependencies between function calls.

The above example shows that our analysis of parallel complexity
can reuse machinery from sequential complexity analysis
provided by the Dependency Tuple framework.
But Dependency Tuples are just one method out of a plethora
for analysis of sequential complexity of term rewrite systems.
Can we not go back from Dependency Tuples to term rewriting
systems whose \emph{sequential} complexity we can then analyse
in order to get upper bounds for the \emph{parallel}
complexity of our \emph{original} program?
\autoref{sec:dt_to_irc} shows us how this is possible.

\medskip
We would also like to find \emph{lower} bounds on
the complexity of parallel evaluation.
It turns out that if we are dealing with a \emph{confluent}
input program (roughly speaking, a program with deterministic
results), \autoref{sec:dt_to_irc} lets us reuse methods
to find lower bounds for sequential complexity of term rewriting
systems to get answers for lower bounds for the \emph{parallel}
complexity of our \emph{original} program.
But to be able to apply these methods, we must somehow
know that the input program is indeed confluent.
This is why \autoref{sec:confluence} introduces methods
for analysing \emph{confluence} of a term rewrite system
in our parallel evaluation strategy.
Then \autoref{sec:expe} gives experimental
evidence of the practicality of our methods
on large standard benchmark sets.
We discuss related work \journal{and conclude}
in~\autoref{sec:related}.

\emph{Limitations.}
Our approach to complexity analysis of parallel-innermost rewriting
is transformational, by generating problem instances that can be
handled by existing complexity analysis tools for (sequential)
innermost rewriting \cite{aprove-tool,tct} as backends.
Therefore, the precision of our analysis is
limited by the precision of these backend tools, and
improvements
to their precision should carry over directly
also to the analysis of parallel-innermost rewriting.

Moreover, while our contributions aim to be applicable to
complexity analysis of programming languages with algebraic data types that
use innermost/call-by-value evaluation strategies,
we present them in the setting of first-order term rewriting
without built-in data types, for parallel-innermost runtime complexity.
Extensions to first-order term rewriting with logical
constraints \cite{lctrs13,lctrsTocl},
to higher-order rewriting without \cite{thesisKop}
and with \cite{esop24lcstrs} logical constraints,
and to rewrite strategies that rewrite at even more positions
simultaneously than parallel-innermost rewriting \cite{gpu_trs_journal}
would be natural next steps.
Similarly, we anticipate extensions from term rewriting
to programming languages with call-by-value evaluation strategies,
such as OCaml or Scala.

\section{Term rewriting and innermost runtime complexity}
\label{sec:irc}

We assume basic familiarity with term rewriting (see, e.g.,
\cite{BaaderNipkow}) and recall standard definitions to fix notation, which we illustrate in Example~\ref{ex:size1}.
As customary for analysis of runtime complexity of rewriting,
we consider terms as \emph{tree-shaped} objects,
without sharing of subtrees.

\medskip
We first define \emph{Term Rewrite Systems} and \emph{Innermost Rewriting}.
$\TT(\Signature, \VV)$ denotes the set of \emph{terms} over a
finite signature $\Signature$ and the set of variables $\VV$.
For a term $t$, its
\emph{size} $\tsize{t}$
is defined by:
(a)
if $t \in \VV$, then
$\tsize{t} = 1$;
(b)
if $t = f(t_1, \ldots, t_n)$, then
$\tsize{t} = 1 + \sum_{i = 1}^n \tsize{t_i}$.
The set $\Pos(t)$ of the \emph{positions} of a term $t$
is defined by:
(a) if $t \in \VV$, then $\Pos(t) = \{ \varepsilon \}$,
and (b) if $t = f(t_1,\ldots,t_n)$, then
$\Pos(t) = \{ \varepsilon \} \cup
 \bigcup_{1 \leq i \leq n}\{ i.\pi \mid \pi \in \Pos(t_i) \}$.
The position $\varepsilon$ is
the \emph{root position}
of term $t$.

\medskip
If $t = \asym(t_1,\ldots,t_n)$,
$\rt(t) = \asym$ is the \emph{root symbol} of $t$.
The \emph{(strict) prefix order} $>$ on positions is
the strict partial order given by:
$\tau > \pi$ iff there exists $\pi' \neq \varepsilon$ such that
$\pi.\pi' = \tau$.
Two positions $\pi$ and $\tau$ are \emph{parallel} iff neither
$\pi > \tau$ nor $\pi = \tau$ nor $\tau > \pi$ hold.
For $\pi \in \Pos(t)$, $t|_\pi$ is the subterm of $t$
at position $\pi$, and we write $t[s]_\pi$ for the term that results
from $t$ by replacing the subterm $t|_\pi$ at position $\pi$ by the term $s$.
A \emph{context} $C[]$ is a term that contains exactly one occurrence of a
  special symbol $\hole$. Similar to $t[s]_\pi$ (but omitting
  the position $\pi$ because it is implied by the sole
  occurrence of $\hole$), we write $C[s]$ for the term
  obtained from replacing $\hole$ by the term $s$.

A substitution $\sigma$ is a mapping from $\VV$ to
$\TT(\Signature, \VV)$ with finite domain
$\Dom(\sigma) = \{ x \in \VV \mid \sigma(x) \neq x \}$.
We write
$\{x_1 \mapsto t_1; \ldots; x_n \mapsto t_n\}$ for
a substitution $\sigma$ with $\sigma(x_i) = t_i$ for
$1 \leq i \leq n$ and $\sigma(x) = x$ for $x \in \VV$ with $x\neq x_i$.
We extend substitutions to terms by
$\sigma(f(t_1,\ldots,f_n)) = f(\sigma(t_1),\ldots,\sigma(t_n))$.
We may write $t\sigma$ for $\sigma(t)$.

\medskip
For a term $t$, $\VV(t)$ is the set of variables
in
$t$.
A \emph{term rewrite system (TRS)} $\RR$ is a set of rules
   $\{ \ell_1 \to r_1,  \ldots, \ell_n \to r_n \}$
  with
    $\ell_i, r_i \in \TT(\Signature, \VV)$,
    $\ell_i \not\in \VV$,
  and $\VV(r_i) \subseteq \VV(\ell_i)$ for all $1 \leq i \leq n$.
The \emph{rewrite relation} of $\RR$ is
$s \to_\RR t$ iff
  there are
   a rule $\ell \to r \in \RR$,
   a position $\pi \in \Pos(s)$,
   and a substitution $\sigma$
  such that
   $s = s[\ell\sigma]_\pi$ and
   $t = s[r\sigma]_\pi$.
  Here, $\sigma$ is called the \emph{matcher} and the term $\ell\sigma$
  the \emph{redex} of the rewrite step.
  If  no proper subterm of  $\ell\sigma$ is
  a possible redex,
  $\ell\sigma$ is an \emph{innermost redex}, and the rewrite step
   is an \emph{innermost rewrite step}, denoted by $s \ito t$.

 $\DefSyms^{\RR}=\{ f \mid f(\ell_1,\ldots,\ell_n) \to r \in \RR \}$ and
 $\ConSyms^{\RR}= \Signature \setminus \DefSyms^{\RR}$
 are the \emph{defined}  and \emph{constructor} symbols of $\RR$.
 We may
 also just write $\DefSyms$ and $\ConSyms$.
 The set of positions with defined symbols of $t$ is $\PosDef(t) = \{ \pi \mid \pi \in \Pos(t), \rt(t|_\pi) \in \DefSyms \}$.

 For a relation $\to$, $\to^+$ is its transitive closure
 and $\to^*$ its reflexive-transitive closure. An object $o$ is a \emph{normal
 form} (also: in normal form)
 w.r.t.\ a relation $\to$ iff there is no $o'$ with $o \to o'$.
 A relation $\to$ is \emph{confluent} iff $s \to^* t$ and
 $s \to^* u$ implies that
 there exists an object $v$ with
 $t \to^* v$ and $u \to^* v$.
 A relation $\to$ is \emph{terminating} iff
 there is no infinite sequence $t_0 \to t_1 \to t_2 \to \cdots$.

\begin{example}[$\Fsize$]
\label{ex:size1}
Consider the TRS $\RR$ with the following rules modelling the code of~\autoref{lst:rustsize}.\vspace*{-1mm}
\[
\begin{array}{rcl@{\hspace*{5ex}}|@{\hspace*{5ex}}rcl}
\Fplus(\FZero, y) &\to& y & \Fsize(\FNil) &\to &\FZero \\
\Fplus(\FS(x), y) &\to& \FS(\Fplus(x, y)) &
\Fsize(\FTree(v, l, r)) &\to& \FS(\Fplus(\Fsize(l), \Fsize(r)))
\end{array}\vspace*{-1mm}
\]
Here $\DefSyms^{\RR} = \{ \Fplus, \Fsize \}$ and
$\ConSyms^{\RR} = \{ \FZero, \FS, \FNil, \FTree \}$.

\medskip
First, consider the term  $t=\FS(\Fplus(\FZero, \FS(\FZero)))$.
Its size is $5$. Its positions are $\Pos(t)=\{\varepsilon, 1, 1.1, 1.2, 1.2.1\}$.
In $t$, at position $\pi=1$ we have the subterm $t|_1 = \Fplus(\FZero, \FS(\FZero))$.
This term matches the first rule of our TRS $\Fplus(\FZero, y) \to y$ with
the substitution $\sigma=\{y \mapsto \FS(\FZero)\}$. We can therefore reduce
$t$ to $\FS(\FS(\FZero))$.

Beginning from $t'=\Fsize(\FTree(\FZero, \FNil, \FTree(\FZero, \FNil, \FNil)))$, we have the following innermost rewrite sequence, where
the used innermost redexes are put in coloured boxes:
\[
\begin{array}{rl}
&\colorbox{colorF}{\usebox\sizeoftwohightree}\\
\ito & \FS(\Fplus(\colorbox{colorB}{\usebox\sizeofnil}, \usebox\sizeofonehightree))\\
\ito & \FS(\Fplus(\FZero, \colorbox{colorE}{\usebox\sizeofonehightree}))\\
\ito & \FS(\Fplus(\FZero, \FS(\Fplus(\colorbox{colorD}{\usebox\sizeofnil}, \usebox\sizeofnil))))\\
\ito & \FS(\Fplus(\FZero, \FS(\Fplus(\FZero, \colorbox{colorA}{\usebox\sizeofnil}))))\\
\ito & \FS(\Fplus(\FZero, \FS(\colorbox{colorE}{\usebox\pluszerozero})))\\
\ito & \FS(\colorbox{colorF}{\usebox\pluszeroone})\\
\ito & \FS(\FS(\FZero))
\end{array}
\]
This rewrite sequence uses 7 steps to reach a normal
form as the result of the computation.
\end{example}

Our objective is to provide static bounds on the
length of the longest
rewrite sequence from terms of a specific size.
Here we
use innermost evaluation strategies,
which closely correspond to call-by-value strategies used in many
programming languages.
We focus
on rewrite sequences that start with \emph{basic terms},
corresponding to function calls where a function is applied to data
objects. The
resulting
notion of complexity for term
rewriting is known as \emph{innermost runtime complexity}.

\begin{definition}[Derivation Height $\Dh$, Innermost Runtime Complexity $\irc{}$ \cite{Hirokawa08IJCAR,DependencyTuple}]
  \label{def:rc}
  For all $P \subseteq \nat \cup \{\omega \}$, $\sup\, P$ is the
least upper bound of $P$, where $\sup\, \emptyset = 0$
and $\omega$ is the smallest infinite ordinal, i.e.,
$\omega > n$ holds for all $n \in \nat$.
The \emph{derivation height} of a term $t$ w.r.t.\ a relation $\to$ is 
\eject

\noindent  the length of the longest sequence of $\to$-steps
from $t$:
$\Dh(t, \to) = \sup \{ e \mid \exists\, t' \in \TT(\Signature, \VV).\;
t \to^e t' \}$ where $\to^e$ is the
$e$\textsuperscript{th}
iterate of $\to$.
If $t$ starts an infinite $\to$-sequence, we write $\Dh(t, \to) =
\omega$.

A term $f(t_1, \ldots, t_k)$ is \emph{basic (for a TRS $\RR$)}
iff $f\in\DefSyms^{\RR}$ and $t_1, \dots, t_k \in \TT(\ConSyms^{\RR}, \VV)$.
$\BasicTerms^{\RR}$ is the set of basic terms
for a TRS $\RR$.
For $n \in \mathbb{N}$,
  the \emph{innermost runtime complexity} function
  is
  $\ircR(n) = \sup \{ \Dh(t, {\ito}) \mid t \in \BasicTerms^{\RR}, \tsize{t} \leq
n \}$.
\end{definition}

 Many automated techniques have been proposed
\cite{Hirokawa08IJCAR,DependencyTuple,HirokawaMoser14,ava:mos:16,naa:fro:bro:fuh:gie:17,LowerBounds,MoserS20}
to analyse $\ircR$ and compute bounds on it.  We build on Dependency Tuples
\cite{DependencyTuple}, originally designed to find upper bounds
for (sequential) innermost runtime complexity.  A central
idea is to group all function calls\footnote{Here we use the term
  ``function call'' for a subterm $f(t_1,\ldots,t_n)$
  with a defined symbol $f$ at its root
  to capture the corresponding intuition from functional programming.
  In contrast to most standard functional programming languages,
  in term rewriting it is possible that such a function call can be
  evaluated in several ways (non-determinism), or not at all, so that
  $f$ need not describe a (total or even partial) function from terms to terms in the
  mathematical sense.}
by a rewrite rule \emph{together}
rather than to separate them\journal{, in contrast to Dependency Pairs}
for proving termination
\cite{DependencyPairs}. We use \emph{sharp terms} to represent these
function calls.

\begin{definition}[Sharp Terms $\SharpTerms$]
For every $f \in \DefSyms$, we introduce a fresh symbol
$\tup{f}$ of the same arity, called a \emph{sharp symbol}.
For a term $t = f(t_1,\ldots,t_n)$ with
$f \in \DefSyms$, we define $\tup{t} = \tup{f}(t_1,\ldots,t_n)$.
For all other terms $t$, we define $\tup{t} = t$.
$\SharpTerms = \{ \tup{t} \mid t \in \TT(\Signature, \VV),
\rt(t) \in \DefSyms \}$ denotes the set of \emph{sharp terms}.
\end{definition}

To get an upper bound for sequential complexity, we
``count'' how often
each rewrite rule is used.  The idea is
that when a rule $\ell \to r$ is used,
the cost (i.e., number of rewrite steps for the evaluation)
of the function call to the instance of $\ell$
is $1$ $+$ the sum
of the costs of all the function calls in the resulting instance of $r$,
counted separately.
To group $n$ function calls together, we use ``compound symbols''
$\FCom_n$ of arity $n$, which intuitively represent the sum of
the runtimes of their arguments.

\begin{definition}[Dependency Tuple, DT \cite{DependencyTuple}]
\label{def:dt}
A \emph{dependency tuple (DT)} is a rule of the form
$\tup{s} \to \FCom_n(\tup{t}_1,\ldots,\tup{t}_n)$
where $\tup{s}, \tup{t}_1,\ldots,\tup{t}_n \in \SharpTerms$.
Let $\ell \to r$ be a rule with $\PosDef(r) = \{ \pi_1, \ldots, \pi_n \}$
and $\pi_1 \gtrdot \ldots \gtrdot \pi_n$
where $\gtrdot$ is the standard lexicographic order on positions.
Then $\DT(\ell \to r) = \tup{\ell} \to \FCom_n(\tup{r|}_{\pi_1},\ldots,\tup{r|}_{\pi_n})$.\footnote{The
original definition of Dependency Tuples \cite{DependencyTuple}
allows for using an \emph{arbitrary} total order instead of
the lexicographic order on positions for $\gtrdot$.
The theory presented in this paper would work also with the
original definition.
The order $\gtrdot$ must be total to ensure
that the function $\DT$ is well defined w.r.t.\ the order of
the arguments of $\FCom_n$, so the (partial!)\ prefix order $>$
is not sufficient here.
}
For a TRS $\RR$, let $\DT(\RR) = \{ \DT(\ell \to r) \mid \ell \to r \in \RR \}$.
\end{definition}

\begin{example}
\label{ex:sizeDTs}
For $\RR$ from \exref{ex:size1}, $\DT(\RR)$ consists of the
following DTs:
 \[
  \begin{array}{rcl}
 \FPLUS(\FZero, y) & \to &\FCom_0 \\[-0.5pt]
 \FPLUS(\FS(x), y) & \to &\FCom_1(\FPLUS(x, y)) \\[-0.5pt]
 \FSIZE(\FNil) & \to &\FCom_0 \\[-0.5pt]
 \FSIZE(\FTree(v, l, r)) &\to& \FCom_3(\FSIZE(l), \FSIZE(r), \FPLUS(\Fsize(l), \Fsize(r)))
 \end{array}
 \]
\indent   Intuitively, the DT
$\FSIZE(\FTree(v, l, r)) \to
   \FCom_3(\FSIZE(l), \FSIZE(r), \FPLUS(\Fsize(l), \Fsize(r)))$
distils the information about the function calls that
we need to$\,$ ``count'' $\,$from the right-hand side$\,$ of the original

\vfill\eject
\noindent rewrite rule
$\Fsize(\FTree(v, l, r)) \to \FS(\Fplus(\Fsize(l), \Fsize(r)))$,
and the DT represents this information in a more structured way:
\begin{inparaenum}
\item the constructor context $\FS(\Box)$
on the right-hand side that is not needed for
counting function calls is removed;
\item now all the function calls
that need to be counted are present as the \emph{sharp terms}
$\FSIZE(l)$, $\FSIZE(r)$, and $\FPLUS(\Fsize(l), \Fsize(r))$;
\item
these sharp terms are direct arguments of the new compound symbol
$\FCom_3$; and
\item the function calls
$\Fsize(l)$ and $\Fsize(r)$ below $\FPLUS$ on the right-hand side
are now ignored for their direct contribution
to the cost (their cost is accounted for via $\FSIZE(l)$
and $\FSIZE(r)$), but are considered
only for their normal forms from innermost
evaluation that will be used for evaluating $\FPLUS$ in the
recursive call.
\end{inparaenum}
\end{example}

To represent the number of
rewrite steps
used in the worst case
to reduce a sharp term
according to a set of DTs and a TRS $\RR$,
\emph{chain trees} are used \cite{DependencyTuple}.
Intuitively, a chain tree for some sharp term is a dependency tree
of the computations involved in evaluating this term.
Each node represents a computation
(the function calls represented by the DT with its
special syntactic structure)
on some arguments (defined by the substitution).

We now use a \emph{tree} structure to represent
the computation represented by the Dependency Tuples
rather than a rewrite \emph{sequence} or a linear \emph{chain}
(as with Dependency Pairs). The reason is that we now have
\emph{several} function calls on the right-hand sides of our
Dependency Tuples, and we want to trace their computations
in the tree \emph{independently}. Thus, each function
call gives rise to a new subtree.

Each actual innermost rewrite sequence with $\RR$
will have a corresponding chain tree constructed using
$\DT(\RR)$, with $\RR$ used implicitly for the calls
to helper functions inside the sharp terms. This will make
chain trees useful for finding bounds on the maximum length
of rewrite sequences with $\RR$.

\begin{definition}[Chain Tree \cite{DependencyTuple}]
Let $\DD$ be a set of DTs and $\RR$ be a TRS.
Let $T$ be a (possibly infinite) tree where each node is labelled with
a DT $\tup{q}\! \to \FCom_n(\tup{w}_1,\ldots,\tup{w}_n)$ from $\DD$ and
a substitution $\nu$, written
$(\tup{q}\! \to \FCom_n(\tup{w}_1,\ldots,\tup{w}_n) \mid \!\nu)$.
Let the root node
be labelled with $(\tup{s} \to \FCom_e(\tup{r}_1,\ldots,\tup{r}_e) \mid \sigma)$.
Then
$T$ is a \emph{$(\DD,\RR)$-chain tree for $\tup{s} \sigma$} iff
the following conditions hold for any node of $T$,
where $(\tup{u} \to \FCom_m(\tup{v}_1,\ldots,\tup{v}_m) \mid \mu)$
is the label of the node:
\begin{itemize}
\itemsep=0.8pt
\item $\tup{u}\mu$ is in normal form w.r.t.~$\RR$;
\item if this node has the children
$(\tup{p}_1 \to \FCom_{m_1}(\ldots) \mid \delta_1), \ldots,
(\tup{p}_k \to \FCom_{m_k}(\ldots) \mid \delta_k)$, then there are
pairwise different $i_1,\ldots,i_k \in \{1,\ldots,m\}$ with
$\tup{v}_{i_j} \mu \itos \tup{p}_j \delta_j$ for all $j \in \{1,\ldots,k\}$.
\end{itemize}
\end{definition}
  A chain tree represents a
  rewrite
  sequence starting from some basic term.
  Each node captures exactly one rewrite step in this
  sequence. Its label consists of the DT corresponding to the applied rewrite rule
  and of the substitution with which the rule matches the term.
  A node has
  at most
  as many children as the arity of the compound symbol on the
  right-hand side of its DT. Each of its children captures one of the compound
  symbol's arguments and represents the remaining rewrite steps for this particular
  subterm. In the definition, we
  allow innermost
  rewrite
  steps to occur from
  compound symbols' subterms to their corresponding child node.
  This does not affect the overall cost of the computation represented
  by the chain tree. Indeed, the cost of every redex that appears as a
  result of rewriting is captured by the dependency tuple. Therefore,
  there is no need to ``count'' the cost of redexes that appear in one
  of a compound symbol's arguments: they have already been accounted for
  in another of its arguments.

  We illustrate this notion in Example~\ref{ex:chainTree}.
\eject

\begin{example}
\label{ex:chainTree}
    For $\RR$ from \exref{ex:size1} and $\DD\! = \DT(\RR)$ from
    \exref{ex:sizeDTs}, the following is a
    chain tree for the term
    $\tup{s} = \FSIZE(\FTree(\FZero, \FNil, \FNil))$:
\medskip

    \noindent
{\scalebox{0.88}{%
    \SetupBox\fsizenil{$\FSIZE(\FNil)$}%
\SetupBox\sizel{$\Fsize(l)$}%
\SetupBox\sizer{$\Fsize(r)$}%
\SetupBox\zeroBox{$\FZero$}%
\SetupBox\fpluszeroy{$\FPLUS(
  \colorbox{LightOranj}{\usebox\zeroBox},
  \colorbox{LightBleuh}{$y$})$}%
\begin{tikzpicture}[align=left, node distance=1.5cm and 1cm,
  hlterm/.style={rectangle, rounded corners=false, inner sep=2pt},
  chainlink/.style={draw, rectangle split, rounded corners,
  rectangle split parts=2, rectangle split horizontal=true}]
  \matrix [draw, rectangle, rounded corners, inner sep=2pt, nodes={anchor=base}]
  (root) {
    \node (lhs) {$\FSIZE(\FTree(v, l, r)) \to \FCom_3\big($}; &
    \node [hlterm, fill=LightOranj] (lsize) {$\FSIZE(l)$}; &[-1pt]
    \node {$,$}; &[2pt]
    \node [hlterm, fill=LightBleuh] (rsize) {$\FSIZE(r)$}; &[-1pt]
    \node {$,$}; &[2pt]
    \node [hlterm, fill=LightSalad] (pluss) {$\FPLUS(
      \colorbox{LightOranj}{\usebox\sizel},
      \colorbox{LightBleuh}{\usebox\sizer} )$}; &
    \node {$\big)$}; &
    \node (subst) {$\{v \mapsto \FZero; l \mapsto \FNil; r \mapsto \FNil\}$}; \\
  };
  \path (subst.north west) -- ++(0, 4pt) coordinate (line top);
  \path (subst.south west) -- ++(0, -4pt) coordinate (line bottom);
  \draw (line top) -- (line bottom);
  \node [below left=of lsize, chainlink] (left) {
    $\colorbox{LightOranj}{\usebox\fsizenil} \to \FCom_0$
    \nodepart{two} $\{\}$
  };
  \node [below=of rsize, chainlink] (right) {
    $\colorbox{LightBleuh}{\usebox\fsizenil} \to \FCom_0$
    \nodepart{two} $\{\}$
  };
  \node [below right=of pluss, chainlink, anchor=north] (plus) {
    $\colorbox{LightSalad}{\usebox\fpluszeroy} \to \FCom_0$
    \nodepart{two} $\{y \mapsto \FZero\}$
  };
  \draw[-Latex] (lsize.south) -- (left.north);
  \draw[-Latex] (rsize.south) -- (right.north);
  \draw[-Latex] (pluss.south) --
    (plus.north);
\end{tikzpicture}
 }
}\\
  The root node of the chain tree represents
  the rewrite step
\[
  \Fsize(\FTree(\FZero, \FNil, \FNil)) \ito
  \Fs(\Fplus(\Fsize(\FNil), \Fsize(\FNil)))
\]
  via the following DT:
\[
   \FSIZE(\FTree(v, l, r)) \to
   \FCom_3(\FSIZE(l), \FSIZE(r), \FPLUS(\Fsize(l), \Fsize(r)))
\]
  This node has three children, each corresponding to a redex
  that will be evaluated when rewriting
  $\Fs(\Fplus(\Fsize(\FNil), \Fsize(\FNil)))$ to normal form.
  Its first two children represent the reduction of the subterms at positions
  $1.1$ and $1.2$: $\Fsize(\FNil) \ito \FZero$, both with the
  corresponding DT $\FSIZE(\FNil) \to \FCom_0$.

  Its third child node represents the reduction of the rewritten subterm
  at position $1$ of the right-hand side of the rewrite rule,
  i.e., $\Fplus(\Fsize(\FNil), \Fsize(\FNil))$.
  In this node, we capture the rewrite step
  $\Fplus(\FZero, \FZero) \ito \FZero$ with the DT
  $\FPLUS(\FZero, y) \to \FCom_0$.
  In order to reach the term $\FPLUS(\FZero, \FZero)$
  from the sharp term
  $\FPLUS(\Fsize(\FNil), \Fsize(\FNil))$ that appears inside $\FCom_3$
  (and to reach a normal form w.r.t.\ $\ito$ in its arguments),
  we must first perform the rewrite steps
  $\Fsize(\FNil) \itos \FZero$ on each of its
  arguments.
  For the purposes of reaching the third child node,
  these rewrite steps are ``for free'', as they have already
  been accounted for by the two first children.
\end{example}

Analogous to the
derivation height $\Dh(t, \ito)$ for the number of rewrite steps
in the longest innermost rewrite sequence from a term $t$,
the notion of \emph{complexity}
$\Cplx(\tup{t})$
captures the maximum of the number of nodes of all chain trees
for a sharp term $\tup{t}$. As we shall see, this complexity
of a sharp term $\tup{t}$ w.r.t.\ chain trees
provides an upper bound on the
derivation height of its unsharped version $t$ for the
original TRS. One can lift $\Cplx$ to innermost runtime
complexity for Dependency Tuples as a function of term size,
and this notion for Dependency Tuples will
provide a bound on the innermost runtime complexity of the
original TRS.

\begin{definition}[$\Cplx$ \cite{DependencyTuple}]
Let $\DD$ be a set of DTs and $\RR$ be a TRS.
Let $\SSS \subseteq \DD$
and $\tup{s} \in \SharpTerms$. For a chain tree $T$,
$|T|_\SSS \in \mathbb{N} \cup \{\omega\}$
is the number of nodes in
$T$
labelled with a DT from $\SSS$. We define
$\Cplx_{\langle \DD, \SSS, \RR \rangle}(\tup{s}) = \sup \{ |T|_\SSS \mid
T \text{ is a } (\DD,\RR)\text{-chain tree for }
\tup{s} \}$.
For terms $\tup{s}$ without a $(\DD,\RR)$-chain tree,
we define $\Cplx_{\langle \DD, \SSS, \RR \rangle}(\tup{s}) = 0$.
\end{definition}
For automated complexity analysis with DTs, the following notion of
\emph{DT problems} is used as a characterisation of DTs
that we reduce in incremental proof steps
to a trivially solved problem.

\begin{definition}[DT Problem, Complexity of DT Problem \cite{DependencyTuple}]
\label{def:dt_problem}
Let $\RR$ be a TRS, $\DD$ be a set of DTs, $\SSS \subseteq \DD$.
Then $\langle \DD, \SSS, \RR \rangle$ is a DT problem.
Its complexity function is
$\irc{\langle \DD, \SSS, \RR \rangle}(n) =
\sup \{ \Cplx_{\langle \DD, \SSS, \RR \rangle}(\tup{t}) \mid
        t \in \BasicTerms^\RR, |t| \leq n \}$.
For any TRS $\RR$, the DT problem
$\langle \DT(\RR), \DT(\RR), \RR \rangle$ is called the
\emph{canonical DT problem} for $\RR$.
\end{definition}

For a DT problem $\langle \DD, \SSS, \RR \rangle$, the set $\DD$
contains all DTs that can be used in chain trees -- and whose complexity
we want to analyse.
$\SSS$ contains the DTs whose complexity remains to be
analysed. $\RR$ contains the rewrite rules for evaluating
the arguments of DTs. Here we focus on simplifying $\SSS$ (thus $\DD$ and $\RR$ are fixed during the process)
but techniques to simplify $\DD$ and $\RR$ are available as well
\cite{DependencyTuple,ava:mos:16}.

\begin{example}[\exref{ex:chainTree} continued]
Our chain tree from \exref{ex:chainTree}
for the term
$\tup{s} = \FSIZE(\FTree(\FZero, \FNil, \FNil))$
has 4 nodes. Thus, we can conclude that
$\Cplx_{\langle \DT(\RR), \DT(\RR), \RR \rangle}(\tup{s}) \geq 4$.
\end{example}

The main correctness statement in the sequential case
summarises our earlier intuitions. It has a special
case for \emph{confluent} TRSs, for which Dependency Tuples
capture innermost runtime complexity exactly. The reason that
confluence (intuitively: results of computations are deterministic) is required is that without confluence, there can also be
chain trees that do not correspond to real computations and
lead to higher complexities; for an example, see
\cite[Example 11]{DependencyTuple}.

\begin{theorem}[$\Cplx$ bounds Derivation Height for $\ito$
\cite{DependencyTuple}]
\label{thm:cplxSeq}
Let $\RR$ be a TRS, let $t = f(t_1,\ldots,t_n) \in \TT(\Signature,\VV)$
such that all $t_i$ are in normal form
(this includes all $t \in \BasicTerms^{\RR}$). Then we have $\Dh(t,\ito) \leq
\Cplx_{\langle \DT(\RR), \DT(\RR), \RR \rangle}(\tup{t})$.
If $\ito$ is
confluent,\footnote{The proofs for \thmref{thm:cplxSeq} and
\thmref{thm:canonical_dt_problem} from the literature
and for our new \thmref{thm:cplxPar} and
\thmref{thm:canonical_pdt_problem} require only the
property that the used rewrite relation has
\emph{unique normal forms (w.r.t.\ reduction)}
instead of confluence.
However, to streamline presentation, we follow the literature
\cite{DependencyTuple}
and state our theorems with confluence rather than the
property of unique normal forms for the rewrite relation.
Note that confluence of a relation is a sufficient condition for
unique normal forms, and confluence coincides with the unique normal form
property if the relation is terminating \cite{BaaderNipkow}.
Additionally, there is more readily available
tool support for confluence analysis.}
then  $\Dh(t,\ito) =
\Cplx_{\langle \DT(\RR), \DT(\RR), \RR \rangle}(\tup{t})$.
\end{theorem}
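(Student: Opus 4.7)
The plan is to establish the inequality $\Dh(t,\ito) \leq \Cplx_{\langle \DT(\RR),\DT(\RR),\RR\rangle}(\tup{t})$ by showing that every (finite or infinite) innermost rewrite sequence starting from $t$ can be ``mirrored'' by a chain tree of at least the same size, and then to strengthen this to equality under the confluence assumption by showing that every chain tree conversely gives rise to a real innermost rewrite sequence of the same length.

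For the first direction I would proceed by well-founded induction on $\Dh(t,\ito)$ (with the convention that $\omega$ dominates all finite ordinals, handled separately at the end). If $t$ is a normal form then $\Dh(t,\ito)=0$ and the claim is trivial. Otherwise, since the hypothesis guarantees that $t_1,\ldots,t_n$ are in normal form, every maximal innermost rewrite sequence from $t$ must begin with a root step $t = \ell\sigma \ito r\sigma$ using some rule $\ell \to r$ with $\rt(\ell)=f$, via a matcher $\sigma$ whose images are normal forms. The subsequent steps decompose into independent innermost rewrite sequences starting at the maximal defined subterms $r|_{\pi_1},\ldots,r|_{\pi_n}$ of $r\sigma$; call their lengths $m_1,\ldots,m_n$ so that $\Dh(t,\ito) = 1 + \sum_{j=1}^n m_j$. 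I label the root of the chain tree with $(\DT(\ell\to r)\mid\sigma)$, noting that $\tup{\ell}\sigma$ is trivially in normal form because all $t_i$ are. By the induction hypothesis applied to each $r|_{\pi_j}\sigma$ (which has strictly smaller derivation height and whose arguments are in normal form since $\sigma$ maps into normal forms and no root reduction is blocked by deeper redexes), I obtain chain trees of size $\geq m_j$ for the sharp versions $\tup{r|}_{\pi_j}\sigma$, which I attach as children via the identity reduction $\tup{r|}_{\pi_j}\sigma \itos \tup{r|}_{\pi_j}\sigma$. The total size is at least $1 + \sum_j m_j = \Dh(t,\ito)$.

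For the equality under confluence, I would conversely show that any chain tree $T$ for $\tup{t}$ with $|T|$ nodes can be flattened into an innermost rewrite sequence from $t$ of length $|T|$. The construction is again recursive on the tree: apply the DT at the root, which corresponds to a rule $\ell\to r$ that can be fired at the root of $t$, then inductively unfold the children into rewrite sequences on the corresponding defined subterms. The connecting $\tup{v}_{i_j}\mu \itos \tup{p}_j\delta_j$ reductions on the chain tree edges supply the ``setup'' reductions required before each child DT becomes applicable; crucially, confluence (i.e.\ uniqueness of normal forms) ensures that the intermediate reductions produced for sibling subtrees are compatible and can be composed into a single sequence without conflict, so that each node contributes exactly one step and no step is counted twice. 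Finally, I handle the infinite/$\omega$ case by noting that an infinite rewrite sequence gives rise to an infinite chain tree (by König's lemma applied in reverse: tracing the root steps), and vice versa.

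The main obstacle I expect is precisely the compatibility argument in the equality direction: without confluence, different branches of the chain tree could predict incompatible normal forms for the same shared subterm, so the flattened reduction would not be well defined. The footnote in the statement already signals that the weaker assumption of unique normal forms suffices, and I would exploit exactly this by observing that whenever a subterm of $r\sigma$ is visited in several branches of the tree, innermost normalisation on it produces a unique normal form, so its reductions are shared rather than duplicated. Once this is nailed down, gluing the subtree computations into a linear innermost sequence of the right length is a careful but standard bookkeeping exercise.
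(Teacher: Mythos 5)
The central step of your upper-bound direction does not go through as written. After the root step $t = \ell\sigma \ito r\sigma$ you decompose the remaining derivation into ``independent'' sequences at defined subterms of $r\sigma$ and apply the induction hypothesis to the terms $r|_{\pi_j}\sigma$, claiming their arguments are in normal form ``since $\sigma$ maps into normal forms''. That holds only for the \emph{innermost} defined positions of $r$. For a nested call such as $r|_1 = \Fplus(\Fsize(l),\Fsize(r))$ in the recursive $\Fsize$-rule, the instance $r|_1\sigma$ has reducible arguments $\Fsize(l)\sigma$ and $\Fsize(r)\sigma$, so the induction hypothesis (which requires argument normal forms) is not applicable to it; moreover the derivation at position $1$ is not independent of those at positions $11$ and $12$ --- it can only start after they have been normalised, and it then runs on the resulting normal forms rather than on $r|_1\sigma$ itself. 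For the same reason, attaching children ``via the identity reduction $\tup{r|}_{\pi_j}\sigma \itos \tup{r|}_{\pi_j}\sigma$'' violates the chain-tree condition that each node's instantiated left-hand side be in normal form wrt $\RR$. The missing ingredient is the notion of (maximal) argument normal form $\maxanf{u|_{\pi}}$ together with the nested-subterm decomposition $\Dh(u,\ito) \leq \sum_{\pi\in\PosDef(u)}\Dh(\maxanf{u|_{\pi}},\ito)$ (the sequential counterpart of \lemref{lem:nested}): the induction hypothesis is applied to the terms $\maxanf{u|_{\pi_i}}$, and the children of the root node are attached via the genuine edge reductions $\tup{r|_{\pi_i}}\sigma \itos \tup{\maxanf{u|_{\pi_i}}}$. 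Note also that $\DT(\ell\to r)$ collects \emph{all} of $\PosDef(r)$, not only the outermost positions, precisely so that each nested call contributes its own summand.

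On the equality part: you propose to flatten an arbitrary chain tree back into an innermost rewrite sequence of the same length, using unique normal forms to reconcile sibling branches. That is a legitimate route in principle (it is close in spirit to how \thmref{thm:upper_rel} and \thmref{thm:lower_rel} later relate chain trees to relative rewrite sequences), but it is only gestured at, and it differs from how the paper's proof of the analogous \thmref{thm:cplxPar} proceeds: there, confluence is used to turn the inequalities of the decomposition lemma into equalities (the argument normal form $\maxanf{u|_{\pi}}$, and hence the relevant maxima, are uniquely determined), so a single induction yields both bounds at once. In any case, the upper-bound direction needs the repair above before either version of the equality argument can be completed.
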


\thmref{thm:cplxSeq} implies the following link between
$\ircR$ and
$\irc{\langle \DT(\RR), \DT(\RR), \RR \rangle}$\journal{, which
also explains why $\langle \DT(\RR), \DT(\RR), \RR \rangle$ is
called the ``canonical'' DT problem for $\RR$}:

\begin{theorem}[Complexity Bounds for TRSs via Canonical DT
Problems \cite{DependencyTuple}]
\label{thm:canonical_dt_problem}
Let $\RR$ be a TRS with canonical DT problem
$\langle \DT(\RR), \DT(\RR), \RR \rangle$.
Then we have $\ircR(n) \leq \irc{\langle \DT(\RR), \DT(\RR), \RR \rangle}(n)$.
If $\ito$ is confluent, we have
$\ircR(n) = \irc{\langle \DT(\RR), \DT(\RR), \RR \rangle}(n)$.
\end{theorem}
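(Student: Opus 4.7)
The plan is to reduce this statement to \thmref{thm:cplxSeq} by lifting the pointwise inequality (respectively equality) to a supremum over basic start terms. First I would unfold the two definitions so that both sides of the desired (in)equality are expressed as suprema over the same index set. Namely, by \defref{def:rc},
\[
\ircR(n) = \sup \{ \Dh(t, \ito) \mid t \in \BasicTerms^\RR,\; |t| \leq n \},
\]
and by \defref{def:dt_problem},
\[
\irc{\langle \DT(\RR), \DT(\RR), \RR \rangle}(n) = \sup \{ \Cplx_{\langle \DT(\RR), \DT(\RR), \RR \rangle}(\tup{t}) \mid t \in \BasicTerms^\RR,\; |t| \leq n \}.
\]

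Next I would verify that every basic term satisfies the hypothesis of \thmref{thm:cplxSeq}: if $t = f(t_1,\ldots,t_k) \in \BasicTerms^\RR$, then $f \in \DefSyms^\RR$ and each $t_i \in \TT(\ConSyms^\RR, \VV)$. Since rule left-hand sides are not variables and constructor-rooted terms cannot match any left-hand side (whose root is a defined symbol), each $t_i$ is already in $\ito$-normal form. Hence \thmref{thm:cplxSeq} applies to every $t \in \BasicTerms^\RR$, yielding the pointwise bound $\Dh(t,\ito) \leq \Cplx_{\langle \DT(\RR), \DT(\RR), \RR \rangle}(\tup{t})$, and with equality whenever $\ito$ is confluent.

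Finally, I would take the supremum of the pointwise bound over the set $\{ t \in \BasicTerms^\RR \mid |t| \leq n \}$. Since $\sup$ is monotone on $\nat \cup \{\omega\}$, the pointwise $\leq$ yields $\ircR(n) \leq \irc{\langle \DT(\RR), \DT(\RR), \RR \rangle}(n)$; and when $\ito$ is confluent, the pointwise equality yields equality of the two suprema, since both are taken over the same index set with identical summand values. No obstacle of substance arises here: the entire content of the theorem is packaged inside \thmref{thm:cplxSeq}, and the only care needed is the routine check that basic terms fit its hypothesis. If anything, the mild subtlety to flag is the handling of $\omega$ (when $\Dh(t,\ito) = \omega$ for some $t$), where one uses the convention that $\omega$ is an element of the codomain and $\sup$ respects the order $n < \omega$ for $n \in \nat$, so the inequality and equality still hold in this case.
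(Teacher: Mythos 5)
Your proposal is correct and matches the paper's route: the paper states this theorem as a direct consequence of \thmref{thm:cplxSeq} (``\thmref{thm:cplxSeq} implies the following link\ldots''), and your write-up simply fills in the routine details --- that constructor-term arguments of basic terms are $\ito$-normal forms so the hypothesis of \thmref{thm:cplxSeq} is met, and that the pointwise (in)equality lifts to the suprema over the common index set, including the $\omega$ case.
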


In practice, the focus
is on
finding
asymptotic bounds  for $\ircR$.
For example, \exref{ex:size:irc} will show that
for our TRS $\RR$
from \exref{ex:size1}
we have $\ircR(n) \in \OO(n^2)$.

A DT problem $\langle \DD, \SSS, \RR \rangle$ is said to be \emph{solved}
iff $\SSS = \emptyset$: we always have $\irc{\langle \DD, \emptyset, \RR \rangle}(n) = 0$.
To simplify and finally solve DT problems in an incremental fashion,
complexity analysis techniques called \emph{DT processors} are used. A
DT processor takes a DT problem as input and returns a (hopefully
simpler) DT problem as well as an asymptotic complexity bound as an
output. The largest asymptotic complexity bound returned over this
incremental process is then also an upper bound for
\journal{$\irc{\langle \DT(\RR), \DT(\RR), \RR \rangle}(n)$
and hence also }$\ircR(n)$
\cite[Corollary 21]{DependencyTuple}.
In all examples that we present in \autoref{sec:irc} and \autoref{sec:para_complex},
  a single proof step with a DT processor suffices to
  solve the given DT problem.
  Thus, the complexity bound that is found by this proof
  step is trivially the largest bound among all proof steps
  and directly carries over to the original problem as an
  asymptotic bound. In general, \emph{several}
  proof steps using potentially different DT processors may be
  needed to find an asymptotic complexity bound for the input
  TRS, and then an explicit check for the largest bound is needed.

For our examples in \autoref{sec:irc} and
  \autoref{sec:para_complex}, we use the
reduction pair processor using polynomial interpretations
\cite{DependencyTuple}.
This DT processor applies a restriction
of polynomial interpretations to $\nat$ \cite{Lankford75} to infer
upper bounds on the number of times that DTs can occur
in a chain tree for terms of size at most~$n$.

\begin{definition}[Polynomial Interpretation, CPI]
A \emph{polynomial interpretation} $\Pol$ maps
every $n$-ary function symbol to a polynomial with variables
$x_1,\ldots,x_n$ and coefficients from $\nat$.
$\Pol$ extends
to terms via $\Pol(x)=x$ for
$x \in \VV$ and $\Pol(f(t_1,\ldots,t_n)) =
\Pol(f)(\Pol(t_1),\ldots,\Pol(t_n))$.
$\Pol$ induces an order $\succ_\Pol$ and a quasi-order
$\succsim_\Pol$ over terms where
$s \succ_\Pol t$ iff $\Pol(s) > \Pol(t)$
and
$s \succsim_\Pol t$ iff $\Pol(s) \geq \Pol(t)$
for all instantiations of variables with natural numbers.

\medskip
A \emph{complexity polynomial interpretation (CPI)} $\Pol$
is a polynomial interpretation where:
\begin{itemize}
\item
$\Pol(\FCom_n(x_1,\ldots,x_n)) = x_1 + \dots + x_n$, and
\item
for all
$f \in \ConSyms$,
$\Pol(f(x_1,\ldots,x_n)) = a_1\cdot x_1 + \dots + a_n \cdot x_n + b$
for some $a_i \in \{0,1\}$ and $b \in \nat$.
\end{itemize}
\end{definition}

The restriction for CPIs regarding constructor symbols
enforces that the interpretation of a constructor term $t$ (as an argument
of a term for which a chain tree is constructed) can
exceed its size $\tsize{t}$ only by at most a constant factor.
This is crucial for soundness.
Using a CPI, we can now define and state correctness of the
corresponding reduction pair processor
\cite[Theorem 27]{DependencyTuple}.

\begin{theorem}[Reduction Pair Processor with CPIs
  \cite{DependencyTuple}]
\label{thm:redpair}
Let $P = \langle \DD, \SSS, \RR \rangle$ be a DT problem,
let $\succsim$ and $\succ$ be induced by a CPI $\Pol$.
Let $k \in \nat$ be the maximal degree of all polynomials
$\Pol(\tup{f})$ for all $f \in \DefSyms$.
Let $\DD \cup \RR \subseteq {\succsim}$.
If $\SSS \cap {\succ} \neq \emptyset$,
the reduction pair processor returns the DT problem
$P' = \langle \DD, \SSS \setminus{\succ}, \RR \rangle$
and the complexity $\OO(n^k)$.
Then the reduction pair processor is sound, i.e.,
the maximum of the asymptotic upper bound it computes
and the complexity of $P'$ is indeed
an asymptotic
upper bound on the complexity of its input $P$;
see also \cite[Definition 17]{DependencyTuple}.
\end{theorem}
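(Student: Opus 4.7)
The plan is to prove soundness of the reduction pair processor by showing the complexity inequality
$\irc{\langle \DD, \SSS, \RR\rangle}(n) \leq \irc{\langle \DD, \SSS\setminus{\succ}, \RR\rangle}(n) + \OO(n^k)$. Fix a basic term $t$ with $|t|\leq n$ and any $(\DD,\RR)$-chain tree $T$ for $\tup{t}$. It suffices to prove that the number of nodes of $T$ labelled with DTs from $\SSS\cap{\succ}$ is bounded by $\Pol(\tup{t})$, and that $\Pol(\tup{t}) \in \OO(n^k)$.

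For the first step, I would introduce a potential function $\Phi(v) = \Pol(\tup{u}\mu)$ for each node $v$ of $T$ labelled with $(\tup{u}\to\FCom_m(\tup{v}_1,\ldots,\tup{v}_m)\mid \mu)$, and prove by induction on the subtree rooted at $v$ that the number of nodes of this subtree labelled with a DT from $\SSS\cap{\succ}$ is at most $\Phi(v)$. The key calculation at a single node is: since $\DD\subseteq{\succsim}$ and $\FCom_m$ is interpreted as the sum of its arguments, we have $\Pol(\tup{u}\mu) \geq \sum_{i=1}^{m}\Pol(\tup{v}_i\mu)$, with a strict inequality (and hence, over $\nat$, a gap of at least $1$) when the DT at $v$ lies in $\succ$. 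Since $\RR\subseteq{\succsim}$, each child with label $(\tup{p}_j\to\ldots\mid\delta_j)$ satisfies $\Pol(\tup{v}_{i_j}\mu)\geq \Pol(\tup{p}_j\delta_j) = \Phi(\text{child}_j)$; because the indices $i_j$ are pairwise different and all coefficients in $\Pol$ are non-negative, summing over the children gives $\Phi(v) \geq \sum_j \Phi(\text{child}_j)$, with a strictly positive gap whenever $v$'s DT is strict. The induction then closes by charging each strict node to a unit of potential from its parent.

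For the second step, I would use the CPI restriction on constructor symbols to show, by a straightforward induction on term structure, that for every constructor term $u$ there is a constant $c$ (independent of $u$) with $\Pol(u)\leq c\cdot |u|$. For the basic term $t = f(t_1,\ldots,t_a)$ with $f \in \DefSyms$ and $t_i$ constructor terms, this yields $\Pol(t_i) \leq c \cdot |t_i| \leq c\cdot n$. Plugging these bounds into $\Pol(\tup{t}) = \Pol(\tup{f})(\Pol(t_1),\ldots,\Pol(t_a))$ and noting that $\Pol(\tup{f})$ has degree at most $k$ then gives $\Pol(\tup{t}) \in \OO(n^k)$, as desired.

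The main obstacle is getting the bookkeeping at the chain tree node right: one has to be careful that not every argument $\tup{v}_i$ need correspond to a child, that the $i_j$ are distinct, and that the strict decrease under $\succ_\Pol$ holds on the substituted (hence ground or partially ground) terms $\tup{u}\mu$ rather than merely on the rule. Once this is set up, the counting argument by potential, together with the CPI bound on constructor contexts, gives the $\OO(n^k)$ bound. Taking the supremum over basic terms $t$ with $|t|\leq n$ and over chain trees yields the claimed inequality on $\irc{}$ functions, which is precisely the soundness statement for the reduction pair processor.
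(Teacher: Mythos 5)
Your proposal is correct, but note that the paper does not actually prove this theorem: it imports it verbatim as Theorem~27 of \cite{DependencyTuple}, and your argument reconstructs essentially the standard proof from that source --- split $|T|_\SSS$ into the nodes from $\SSS\setminus{\succ}$ and those from $\SSS\cap{\succ}$, bound the latter by the interpretation $\Pol(\tup{t})$ via a potential that weakly decreases down the tree (using $\Pol(\FCom_m)=\sum$, closure of $\succsim_\Pol$ and $\succ_\Pol$ under substitutions and weak monotonicity for the $\itos$ steps to the children) and drops by at least one at each strict node, and then use the CPI constructor restriction to get $\Pol(\tup{t})\in\OO(n^k)$. The only elisions are routine: for possibly infinite chain trees the counting should be applied to finite truncations, and variables occurring in basic terms need the usual treatment; neither is a genuine gap.
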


\begin{example}[\exref{ex:sizeDTs} continued]
\label{ex:size:irc}
For our running example,
consider the CPI
$\Pol$
with:
$\Pol(\FPLUS(x_1,x_2)) = \Pol(\Fsize(x_1)) = x_1,
\linebreak
\Pol(\FSIZE(x_1)) = 2 x_1 + x_1^2,
\Pol(\Fplus(x_1,x_2)) = x_1 + x_2,
\Pol(\FTree(x_1,x_2,x_3)) = 1 + x_2 + x_3,
\Pol(\FS(x_1)) = 1 + x_1,
\Pol(\FZero) = \Pol(\FNil) = 1$.
$\Pol$ orients all DTs in
$\SSS = \DT(\RR)$ with $\succ$ and all
rules in $\RR$ with $\succsim$.
Thus, with this CPI, the reduction pair processor
returns the solved DT problem
$\langle \DT(\RR), \emptyset, \RR \rangle$
and proves
$\ircR(n) \in \OO(n^2)$:
since the maximal degree of the CPI for a symbol $\tup{f}$ is 2,
the upper bound of $\OO(n^2)$ follows by
\thmref{thm:redpair}.
Polynomial interpretations such as those used in the
examples in this paper can be found automatically
using parametric interpretation templates for each function symbol and
SAT- or SMT-solvers \cite{satPolo,smtPolo}
to find the parameter values.

For example, we might use a parametric interpretation
$\Pol_p$ with
$\Pol_p(\FPLUS(x_1,x_2)) = p_0 + p_1 x_1 + p_2 x_2 +
p_3 x_1^2 + p_4 x_1 x_2 + p_5 x_2^2$,
$\Pol_p(\FS(x_1)) = p_6 + p_7 x_1$,
and $\Pol_p(\FCom_1(x_1)) = x_1$.
Here all $p_i$ are parameters that range over
$\mathbb{N}$, with $p_7$ additionally being restricted
to $\{0,1\}$. For the term constraint
$\FPLUS(\FS(x), y) \succ \FCom_1(\FPLUS(x, y))$,
we would get the following parametric polynomial
constraint that must be satisfied for all
$x,y \in \mathbb{N}$:
\[
p_0 + p_1 (p_6 + p_7 x) + p_2 y +
p_3 (p_6 + p_7 x)^2 + p_4 (p_6 + p_7 x) y + p_5 y^2 >
p_0 + p_1 x + p_2 y +
p_3 x^2 + p_4 x y + p_5 y^2
\]
We simplify the expression and get:
\[
 p_1 p_6 + p_1 p_7 x +
p_3 p_6^2 + 2 p_3 p_6 p_7 x + p_3 p_7^2 x^2 +
p_4 p_6 y + p_4 p_7 x y >
p_1 x + p_3 x^2 + p_4 x y
\]
We group parametric
coefficients for each monomial in $x,y$ together.
This yields:
\[
 (p_1 p_6 + p_3 p_6^2) \; + \;
 (p_1 p_7 + 2 p_3 p_6 p_7 - p_1) x \;
 + \; (p_3 p_7^2 - p_3) x^2 \; + \;
p_4 p_6 y \; + \; (p_4 p_7 - p_4) x y > 0
\]
The \emph{absolute positiveness criterion} \cite{positiveness} allows us
to reduce this $\exists\forall$ problem
to an $\exists$ problem such that a solution
for the latter problem is also a solution for the
former problem:
\[
 p_1 p_6 + p_3 p_6^2 > 0 \: \land \:
 p_1 p_7 + 2 p_3 p_6 p_7 - p_1 \geq 0 \: \land \:
 p_3 p_7^2 - p_3 \geq 0 \: \land \:
 p_4 p_6 \geq 0 \: \land \:
 p_4 p_7 - p_4 \geq 0
\]
This problem can now be passed to a constraint solver
for non-linear integer arithmetic, e.g., based on
SAT- or SMT-solving \cite{satPolo,smtPolo}.
In this example, the constraint solver might return a solution
with $p_1 = p_6 = p_7 = 1$ and $p_i = 0$ otherwise.
This solution for our constraint system allows us to refine $\Pol_p$ to
(part of) the above CPI $\Pol$ by replacing the parameters $p_i$
with the values returned from the constraint solver.
The full CPI $\Pol$ is obtained by considering all
term constraints simultaneously and passing the constraint
system to an integer constraint solver; for details,
we refer to \cite{satPolo}.
\end{example}

\section{Finding upper bounds for parallel complexity}
\label{sec:para_complex}

In this section we present our main contribution: an application of the
DT framework from innermost runtime complexity to \emph{parallel-innermost rewriting}.

The notion of parallel-innermost rewriting dates back at least
to \cite{parallelRewriting}. Informally, in a parallel-innermost
rewrite step, all innermost redexes are rewritten simultaneously.
This corresponds to executing all function calls in parallel using a
call-by-value strategy on a machine with unbounded parallelism \cite{BlellochFPCA95}.
In the literature \cite{LoopsUnderStrategies}, this strategy is also known as
``max-parallel-innermost rewriting''.

\begin{definition}[Parallel-Innermost Rewriting \cite{innermostOrdering}]
\label{def:pito}
A term $s$ \emph{rewrites innermost in parallel} to $t$ with a TRS $\RR$,
written $s \pito t$, iff $s \itop t$, and either
(a) $s \ito t$ with $s$ an innermost redex, or
(b) $s = f(s_1, \ldots, s_n)$, $t = f(t_1, \ldots, t_n)$, and for all
$1 \leq k \leq n$ either $s_k \pito t_k$ or $s_k = t_k$ is a normal
form.\footnote{The use of ``\,$\maxparallel$\,'' in the notation
  $\pito$ was suggested by van Oostrom \cite{vvO} to avoid
  confusion with the notation $\oldpto$, which is commonly
  used to denote rewriting 0 or more eligible parallel
  redexes \cite{BaaderNipkow}.
  In contrast, in this paper we require rewriting \emph{all}
  eligible (here: innermost) redexes in a step with $\pito$,
  and at least one such redex must be rewritten in such a step.}
\end{definition}

As common in parallel rewriting, the redexes that are rewritten
are at \emph{parallel positions} -- that is, no position is a prefix
of another. The literature \cite{LoopsUnderStrategies}
also contains definitions of
parallel rewrite strategies where an arbitrary selection of one or more
(not necessarily innermost) parallel redexes may be replaced in the
parallel rewrite step rather than all innermost redexes (which
are always parallel).
In the innermost case, such ``may'' parallel rewriting
would include (sequential) innermost rewriting: rewrite only one innermost
redex at a time. Thus, also the worst-case time complexity of ``may'' parallel
rewriting would be identical to that of sequential rewriting.
To capture the possible speed improvements enabled by parallel
rewriting on a (fictitious) machine with unbounded parallelism,
we have chosen a ``must'' parallel rewriting strategy where
\emph{all}
eligible
redexes (here: innermost) must be rewritten
(case (b) of \defref{def:pito} rewrites \emph{all}
arguments that are not in normal form).

\begin{example}[\exref{ex:size1} continued]
\label{ex:pito_sequence}
The TRS $\RR$ from \exref{ex:size1}
allows the following parallel-innermost rewrite sequence, where
innermost redexes are colourised in an analogous way to
\exref{ex:size1}:
\[
\begin{array}{rl}
&\colorbox{colorF}{\usebox\sizeoftwohightree}\\
\pito & \FS(\Fplus(\colorbox{colorB}{\usebox\sizeofnil}, \colorbox{colorE}{\usebox\sizeofonehightree}))\\
\pito & \FS(\Fplus(\FZero, \FS(\Fplus(\colorbox{colorD}{\usebox\sizeofnil}, \colorbox{colorA}{\usebox\sizeofnil}))))\\
\pito & \FS(\Fplus(\FZero, \FS(\colorbox{colorE}{\usebox\pluszerozero})))\\
\pito & \FS(\colorbox{colorF}{\usebox\pluszeroone})\\
\pito & \FS(\FS(\FZero))
\end{array}
\]

In the second and in the third step, two innermost steps
happen in parallel (which is not possible with standard
innermost rewriting: $\mathrel{\pito}\ \not\subseteq\ \mathrel{\ito}$).
An innermost rewrite sequence without parallel evaluation requires
two more steps to reach a normal form from this start term,
as shown in \exref{ex:size1}.
Note that switching from $\ito$ to $\pito$ in general
does not lead to such a ``speed-up'':
as we shall see in \thmref{thm:nopar},
the derivation height of a term does \emph{not} necessarily
decrease when $\pito$ is used instead of $\ito$.
\end{example}

Note that for all TRSs $\RR$, $\pito$ is terminating
iff $\ito$ is terminating \cite{innermostOrdering}.
\exref{ex:pito_sequence}
shows that
such an equivalence does \emph{not} hold for the derivation height
of a term.

The question now is: given a TRS $\RR$, how much of a speed-up might
we get by
a switch
from innermost to parallel-innermost rewriting?
To investigate,
we extend the notion of innermost
runtime complexity to parallel-innermost rewriting.

\begin{definition}[Parallel-Innermost Runtime Complexity $\pirc{}$]
\label{def:pirc}
For $n \in \mathbb{N}$, we define the
\emph{parallel-innermost runtime complexity} function
as \journal{the maximum of all derivations heights
of parallel executions from basic terms of size at most $n$:}
\[
\pircR(n) = \sup \{ \Dh(t, {\pito}) \mid t \in \BasicTerms^{\RR}, \tsize{t} \leq
n \}.
\]
\end{definition}

In the literature on parallel computing
\cite{BlellochFPCA95,HoffmannESOP15,BaillotPiESOP21},
the terms \emph{depth} or
\emph{span} are commonly used for the concept of the runtime
of a function on a machine with unbounded parallelism (``wall time''), corresponding
to the complexity measure of $\pircR$.
In contrast, $\ircR$ would describe the \emph{work} of a function
(``CPU time'').

In the following, given a TRS $\RR$, our goal shall be to infer
(asymptotic) upper bounds for $\pircR$ fully automatically.
Of course, an upper bound for (sequential) $\ircR$
is also an upper bound for $\pircR$.
We will now introduce
techniques to find upper bounds for $\pircR$
that are strictly tighter than these trivial bounds.

To find upper bounds for runtime complexity of parallel-innermost
rewriting, we can \emph{reuse} the notion of DTs
from \defref{def:dt} for sequential innermost rewriting along with
existing techniques~\cite{DependencyTuple} as illustrated in the following example.

\begin{example}
\label{ex:sizePDTs}
In the recursive $\Fsize$-rule, the two calls to $\Fsize(l)$ and
$\Fsize(r)$ happen \emph{in parallel} (they are \emph{structurally
independent})
and take place at \emph{parallel positions} in the term.
Thus, the cost (number of rewrite steps with $\pito$ until a normal form is reached) for these two
calls is not the \emph{sum}, but the \emph{maximum} of their
individual costs.
Regardless of which of these two calls has the higher cost,
we still need to add the cost for the call to $\Fplus$
on the results of the two calls:
$\Fplus$ starts evaluating only after both calls to $\Fsize$ have
finished, or equivalently, $\Fsize$ calls \emph{happen before} $\Fplus$.
With $\sigma$ as the used matcher for the rule
and with $t \downarrow$ as the (here unique)
normal form resulting from repeatedly rewriting a term $t$
with $\pito$ (the ``result'' of evaluating $t$), we have:
\[
\begin{array}{rl}
&\Dh(\Fsize(\FTree(v, l, r)) \sigma, \pito)\\
=&
1 + \max( \Dh(\Fsize(l) \sigma, \pito), \Dh(\Fsize(r) \sigma, \pito)) + \Dh(\Fplus(\Fsize(l) \sigma \!\downarrow, \Fsize(r) \sigma \!\downarrow), \pito)
\end{array}
\]
In the DT setting, we could
introduce a new symbol $\FComPar_n$
that explicitly expresses that its arguments
are evaluated in parallel. This symbol would
then be interpreted as the maximum
of its arguments in an extension of \thmref{thm:redpair}:
\begin{align*}
\FSIZE(\FTree(v, l, r)) & \to
   \FCom_2(\FComPar_2(\FSIZE(l), \FSIZE(r)), \FPLUS(\Fsize(l), \Fsize(r)))
\end{align*}
Although automation of the search for
polynomial interpretations extended by the maximum function is
readily available~\cite{maxpolo},
we would still have to extend the notion of
Dependency Tuples and also adapt all existing techniques
in the Dependency Tuple framework to work with $\FComPar_n$.

This is why we have chosen the following alternative approach,
which is equally powerful on theoretical level and
enables immediate reuse of
all techniques in the existing DT
framework \cite{DependencyTuple}.
Equivalently to the above, we can ``factor in'' the cost of
calling $\Fplus$ into the maximum function:
\[
\begin{array}{rl}
&\Dh(\Fsize(\FTree(v, l, r)) \sigma, \pito)\\
=&
\max( 1 + \Dh(\Fsize(l) \sigma, \pito) + \Dh(\Fplus(\Fsize(l) \sigma \!\downarrow, \Fsize(r) \sigma \!\downarrow), \pito),\\
&\phantom{\max(}
1 + \Dh(\Fsize(r) \sigma, \pito) +
\Dh(\Fplus(\Fsize(l) \sigma \!\downarrow, \Fsize(r) \sigma \!\downarrow), \pito))
\end{array}
\]
Intuitively, this would correspond to evaluating
$\Fplus(\ldots, \ldots)$
twice, in two parallel threads of execution,
which costs the same amount of (wall) time as evaluating
$\Fplus(\ldots, \ldots)$
once.
We can represent this maximum of the execution times of two threads
by introducing \emph{two} DTs for our recursive $\Fsize$-rule:
\[
\begin{array}{rcl}
\FSIZE(\FTree(v, l, r)) & \to &
   \FCom_2(\FSIZE(l), \FPLUS(\Fsize(l), \Fsize(r)))\\
\FSIZE(\FTree(v, l, r)) & \to &
   \FCom_2(\FSIZE(r), \FPLUS(\Fsize(l), \Fsize(r)))
\end{array}
\]
To express the cost of a concrete rewrite sequence, we would
non-deterministically choose the DT that
corresponds to the ``slower thread''.
\end{example}

In other words,
when a rule $\ell \to r$ is used,
the cost of the function call to the instance of $\ell$
is $1$ $+$ the sum
of the costs of the function calls in the resulting instance of $r$ \emph{that are in structural dependency with each other}.
The actual cost of the function call to the instance of $\ell$ in a concrete
rewrite sequence is the \emph{maximum} of all the possible costs
caused by such \emph{chains} of structural dependency
(based on the prefix order $>$ on positions of defined
function symbols in $r$).
Thus, \emph{structurally independent} function calls are considered in
separate DTs, whose non-determinism models the parallelism
of these function calls.

The notion of \emph{structural dependency} of function calls is captured by
\defref{def:structdeps}. Basically, it comes from the fact that a term
cannot be evaluated before all its subterms have been reduced to normal forms
(innermost rewriting/\emph{call by value}).
This induces a ``happens-before'' relation for the computation
\cite{Lamport78}.

\begin{definition}[Structural Dependency, Maximal Structural Dependency Chain Set $\msdcSym$]
  \label{def:structdeps}
For positions $\pi_1, \ldots, \pi_k$, we call $\lmsdc \pi_1, \dots, \pi_k \rmsdc$
a \emph{structural dependency chain} for a term $t$ iff
$\pi_1,\! \ldots,$  $\pi_k \in \PosDef(t)$ and $\pi_1 > \ldots > \pi_k$.
Here $\pi_i$ \emph{structurally depends on} $\pi_j$ in $t$ iff $i > j$.
A structural dependency chain $\lmsdc \pi_1, \dots, \pi_k \rmsdc$
for a term $t$ is \emph{maximal} iff
$k = 0$ and $\PosDef(t) = \emptyset$, or $k > 0$ and
$\forall \pi \in \PosDef(t), \big(\pi \ngtr \pi_1 \wedge (\pi_1 > \pi
\Rightarrow \pi \in \{\pi_2, \ldots, \pi_k\})\big)$.
We write $\msdcSym(t)$ for the set of all maximal structural
dependency chains for $t$.
\end{definition}

In the formula specifying \emph{maximal} structural
dependency chains for $k > 0$, the first conjunct states that
the first element $\pi_1$ must be the position of an innermost
defined symbol, and the second conjunct states that all
positions of defined symbols above $\pi_1$ must be part of the
chain as well.
In other words, it is not possible to add further elements anywhere
in a maximal structural dependency chain and obtain a larger
structural dependency chain.
Note that $\msdcSym(t) \neq \emptyset$ always holds:
if $\PosDef(t) = \emptyset$, then
$\msdcSym(t) = \{ \lmsdc \rmsdc \}$.
Note also that since we consider maximality of structural dependency
chains w.r.t.~subset inclusion rather than cardinality of the sets of
their elements, $\msdcSym(t)$ may contain
structural dependency chains with different numbers of elements,
as we shall now see in \exref{ex:chains}.

\begin{example}
\label{ex:chains}
Let $t = \FS(\Fplus(\Fsize(\FNil), \Fplus(\Fsize(x),\FZero)))$.
In our running example,
$t$ has the
 following
structural dependencies:
$\msdcSym(t) = \{ \lmsdc 1.1, 1 \rmsdc, \lmsdc 1.2.1, 1.2, 1 \rmsdc \}$.
The
chain $\lmsdc 1.1, 1 \rmsdc$
corresponds to the nesting of
$t|_{1.1}=\Fsize(\FNil)$ below $t|_{1}=\Fplus(\Fsize(\FNil),
\Fplus(\Fsize(x),\FZero))$, so the evaluation of
$t|_{1}$ will have to wait at least until $t|_{1.1}$ has been
fully evaluated.

If
$\pi$ structurally depends on $\tau$ in a
term $t$,
neither $t|_\tau$ nor $t|_\pi$ need to \emph{be} a redex.
Rather, $t|_\tau$ could be \emph{instantiated} to a
redex and
an instance of $t|_\pi$ could become a redex after
its subterms, including the instance of $t|_\tau$, have been
evaluated.
\end{example}

We thus revisit the notion of DTs
as \emph{Parallel Dependency Tuples}, which now embed
structural
dependencies in addition to the algorithmic dependencies
already captured in DTs.

\begin{definition}[Parallel Dependency Tuples $\DTpar$,
Canonical Parallel DT Problem]
\label{def:pdt}
For a rewrite rule $\ell \to r$, we define the set of its
\emph{Parallel Dependency Tuples (PDTs)} $\DTpar(\ell \to r)$:
$\DTpar(\ell \to r) = \{ \tup{\ell} \to \FCom_k(\tup{r|}_{\pi_1},\ldots,\tup{r|}_{\pi_k})
 \mid
 \msdc{r}{\pi_1,\ldots, \pi_k} \}$.
For a TRS $\RR$, let $\DTpar(\RR) = \bigcup_{\ell \to r \in \RR}
\DTpar(\ell \to r)$.

The \emph{canonical parallel DT problem} for $\RR$ is
$\langle \DTpar(\RR), \DTpar(\RR), \RR \rangle$.
\end{definition}

\begin{example}
For our recursive $\Fsize$-rule
$\ell \to r$,
we have
$\PosDef(r) = \{ 1, 1.1, 1.2 \}$
and
$\msdcSym(r) = \{ \lmsdc 1.1, 1 \rmsdc, \lmsdc 1.2, 1 \rmsdc \}$.
With $\mathit{r}|_1 = \Fplus(\Fsize(l), \Fsize(r))$,
$\mathit{r}|_{1.1} = \Fsize(l)$, and $\mathit{r}|_{1.2} = \Fsize(r)$,
we get the PDTs
from \exref{ex:sizePDTs}. For
the rule $\Fsize(\FNil) \to \FZero$,
we have $\msdcSym(\FZero) = \{ \lmsdc \rmsdc \}$,
so we get
$\DTpar(\Fsize(\FNil) \to \FZero) = \{ \FSIZE(\FNil) \to \FCom_0 \}$.
\end{example}

Our goal is now to prove that with the canonical
  PDT problem for $\RR$ as a starting point, we can
  \emph{reuse} the existing Dependency Tuple Framework
  to find bounds on \emph{parallel-}innermost runtime complexity,
  even though the DT Framework was originally introduced
  only with sequential innermost rewriting in mind.
  This allows for reuse both on theory level and on
  implementation level.
  A crucial step towards this goal is
our main correctness statement:

\begin{theorem}[$\Cplx$ bounds Derivation Height for $\pito$]
\label{thm:cplxPar}
Let $\RR$ be a TRS, let $t = f(t_1,\ldots,t_n) \in \TT(\Signature,\VV)$
such that all $t_i$ are in normal form
(e.g., when $t \in \BasicTerms^{\RR}$). Then we have $\Dh(t,\pito) \leq
\Cplx_{\langle \DTpar(\RR), \DTpar(\RR), \RR \rangle}(\tup{t})$.

If $\pito$ is confluent, then  $\Dh(t,\pito) =
\Cplx_{\langle \DTpar(\RR), \DTpar(\RR), \RR \rangle}(\tup{t})$.
\end{theorem}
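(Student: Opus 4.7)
The plan is to mirror the proof strategy of Theorem~\ref{thm:cplxSeq} for the sequential case, adapting the counting argument to reflect how PDTs encode the max-cost semantics of parallel rewriting. For the upper bound, I proceed by induction on $\Dh(t,\pito)$. The base case is immediate: when $t$ is already in $\pito$-normal form, $\Dh(t,\pito) = 0$, and any single-node chain tree suffices. For the inductive step, since $t = f(t_1,\ldots,t_n)$ with all $t_i$ in normal form and $\pito$ is innermost, the first parallel step must fire a single redex at the root, so $t = \ell\sigma \pito r\sigma$ with $\ell \to r \in \RR$, and thus $\Dh(t,\pito) = 1 + \Dh(r\sigma,\pito)$. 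The key lemma underpinning the whole argument is
\[
\Dh(r\sigma,\pito) \;=\; \max\bigl\{\,\textstyle\sum_{i=1}^{k} \Dh(\tup{r|}_{\pi_i}\sigma,\pito) \,\bigm|\, \msdc{r}{\pi_1,\ldots,\pi_k}\,\bigr\},
\]
which formalises the ``sum along a chain, max across chains'' intuition already highlighted in Example~\ref{ex:sizePDTs}: redexes sitting at pairwise parallel positions are fired simultaneously (so their depths combine by $\max$), whereas positions in a strict prefix relation must be evaluated one after the other (so their depths combine by $+$). With this lemma, we select a maximal chain $\lmsdc \pi_1,\ldots,\pi_k \rmsdc$ attaining the maximum, use the corresponding PDT $\tup{\ell} \to \FCom_k(\tup{r|}_{\pi_1},\ldots,\tup{r|}_{\pi_k}) \in \DTpar(\RR)$ (which exists by Definition~\ref{def:pdt}) as the root label, and invoke the induction hypothesis on each argument $\tup{r|}_{\pi_i}\sigma$ (after routing the necessary sequential $\itos$-evaluations through the chain tree definition) to construct the subtrees, giving a tree with at least $\Dh(t,\pito)$ nodes.

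The main obstacle will be establishing the central equality above rigorously. It requires a secondary induction on $\Dh(r\sigma,\pito)$ together with a careful analysis of how a single $\pito$-step decomposes on a compound term via case~(b) of Definition~\ref{def:pito}. One must argue that, at each parallel step, the set of fired innermost redexes lies at pairwise parallel positions, so each fired redex contributes to at most one maximal structural dependency chain, and that when the step later ``lifts'' an inner redex into a shallower-positioned one, this corresponds precisely to moving from $\pi_{i}$ to $\pi_{i-1}$ along a chain (with the intermediate normal form feeding as an argument into the next position). A subtle point is that $\sigma$ already normalises all variables of $\ell$, so no work is ever ``wasted'' inside the matched arguments; all further parallel activity is confined to $\PosDef(r)$ and its instances.

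For the equality under confluence, the plan is to run the above construction in reverse: given any $(\DTpar(\RR),\RR)$-chain tree $T$ for $\tup{t}$, we build a $\pito$-derivation from $t$ of length at least $|T|_{\DTpar(\RR)}$. Confluence of $\pito$ (equivalently, unique normal forms, as noted in the footnote) ensures that the $\itos$-normalisations appearing in the chain tree definition agree with the normal forms reached by $\pito$, so each PDT application in $T$ can be realised by a concrete parallel step. Subderivations for structurally dependent positions are glued together sequentially along chains, while structurally independent subderivations are folded into single parallel steps using case~(b) of Definition~\ref{def:pito}. Combined with the upper bound, this yields the claimed equality.
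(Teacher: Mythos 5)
Your overall architecture is the right one and is essentially the paper's: induct on $\Dh(t,\pito)$, observe that the first step must fire at the root since $t$ is argument-normalised, decompose the residual derivation height along maximal structural dependency chains (``sum along a chain, max across chains''), and realise the maximising chain as a PDT node whose subtrees come from the induction hypothesis. The converse direction you sketch (replaying an arbitrary chain tree as a $\pito$-derivation under confluence) is a legitimate alternative to the paper's treatment, which instead threads equality through the very same induction using the fact that under confluence the argument normal form $\maxanf{t}$ is unique.

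However, the ``key lemma underpinning the whole argument'' is not merely an obstacle to be made rigorous --- as stated it is false, and the proof does not go through with it. Two separate problems. First, the summands must be derivation heights of \emph{argument-normalised} residuals, not of the raw subterms $r|_{\pi_i}\sigma$: for nested positions the derivation height of the outer subterm already contains the cost of normalising the inner one, so your sum double-counts (and if $\varepsilon\in\PosDef(r)$ the formula is outright circular, since $r|_\varepsilon\sigma=r\sigma$). Concretely, for the recursive $\Fsize$-rule with $\Dh(\Fsize(l)\sigma,\pito)=5$, $\Dh(\Fsize(r)\sigma,\pito)=3$ and cost $2$ for the final $\Fplus$ call, the true value of $\Dh(r\sigma,\pito)$ is $\max(5,3)+2=7$, whereas your maximal chain $\lmsdc 11,1\rmsdc$ yields $5+7=12$; the corresponding chain tree has only $1+5+2=8$ nodes, so the inequality ``your bound $\leq\Cplx$'' that you need in the next step is simply unavailable. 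This is exactly why the paper introduces the maximal parallel argument normal form and states \lemref{lem:nested} in terms of $\Dh(\maxanf{u|_{\pi_i}},\pito)$; it is also what makes the induction hypothesis applicable at all, since your $r|_{\pi_i}\sigma$ are not argument-normalised and hence not instances of the statement being proved. Second, the equality cannot hold unconditionally: without confluence, reaching a different argument normal form can trade steps below the root against steps above it, so only ``$\leq$'' holds in general --- an unconditional equality in your lemma would give an unconditional equality in the theorem, which is false. Finally, you never treat the case $\Dh(t,\pito)=\omega$, which the induction on $\Dh(t,\pito)$ cannot reach and which the paper handles by a separate construction of an infinite path in a chain tree.
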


To prove \thmref{thm:cplxPar}, we need some further definitions
and lemmas.
Intuitively, the notion of
\emph{maximal parallel argument normal form} of a term
$t$
captures the result of reducing all
its arguments
to such normal forms
that
the number of $\pito$ steps at root position
will have
``worst-case cost''.

\begin{definition}[Argument Normal Form \cite{DependencyTuple}, Maximal Parallel Argument Normal Form]
A term $t$ is an \emph{argument normal form} iff $t \in \VV$ or
$t = f(t_1,\ldots,t_n)$ and all $t_i$ are in normal form.
A term $\maxanf{t}$ is a \emph{maximal parallel argument normal form} of a
term $t$ iff $\maxanf{t}$ is an argument normal form such that
$t \pitosbelow \maxanf{t}$ and for all argument normal forms
$t'$ with $t \pitosbelow t'$, we have $\Dh(t',\pito) \leq \Dh(\maxanf{t},\pito)$. Here $u \pitosbelow v$ denotes a rewrite sequence
with $\pito$ where all steps are at positions $> \varepsilon$.
\end{definition}

\begin{example}
  Consider the TRS $\{ \Fa \to \Fb, \Fa \to \Fc, \Ff(\Fb) \to \Fd, \Ff(\Fc) \to \Ff(\Fd), \Ff(\Fd) \to \Fd \}$.
  The terms $\Ff(\Fb)$ and $\Ff(\Fc)$ are both in argument normal form,
  and we have $\Ff(\Fa) \pitosbelow \Ff(\Fb)$ and $\Ff(\Fa) \pitosbelow \Ff(\Fc)$.
  The term $\Ff(\Fc)$ is a maximal parallel argument normal form of $\Ff(\Fa)$
  because for any other term $t$ that satisfies both of these criteria
  (argument normal form and $\Ff(\Fa) \pitosbelow t$), we always have
  $\Dh(t, \pito) \leq \Dh(\Ff(\Fc), \pito)$.
\end{example}

The following lemma is adapted to the parallel
setting from \cite{DependencyTuple}.

\begin{lemma}[Parallel Derivation Heights of Nested Subterms]
\label{lem:nested}
Let $t$ be a term, let $\RR$ be a TRS such that all
reductions of $t$ with $\pito$ are finite. Then
\begin{align*}
\Dh(t,\pito) &\leq
\max
\{
 \sum_{1 \leq i \leq k} \Dh(\maxanf{t|_{\pi_i}},\pito)
 \mid  \msdc{t}{\pi_1, \dots, \pi_k}
\}
\end{align*}

If $\pito$ is confluent, then we additionally have:
\begin{align*}
\Dh(t,\pito) &=
\max
\{
 \sum_{1 \leq i \leq k} \Dh(\maxanf{t|_{\pi_i}},\pito)
 \mid
 \msdc{t}{\pi_1, \dots, \pi_k}
\}
\end{align*}
\end{lemma}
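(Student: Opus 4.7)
The plan is to prove both the inequality and, under confluence, the equality by structural induction on $t$. The base case---$t$ a variable or a constant---is immediate: if $t \in \VV$ or $\rt(t) \in \ConSyms$, both sides equal $0$; if $t$ is a defined constant, the only maximal chain is $\lmsdc \varepsilon \rmsdc$ with $\maxanf{t} = t$, so both sides equal $\Dh(t, \pito)$.

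For the inductive step $t = f(t_1, \ldots, t_n)$, I distinguish whether $f$ is a constructor or defined symbol. If $f \in \ConSyms$, no $\pito$-step can fire at the root, so $\Dh(t, \pito) = \max_i \Dh(t_i, \pito)$; moreover every maximal chain of $t$ arises by taking a maximal chain of some $t_j$ with $\PosDef(t_j) \neq \emptyset$ and prefixing its positions with $j$, and $t|_{j\pi} = t_j|_\pi$. Applying the induction hypothesis to each such $t_j$ yields both directions.

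The main case is $f \in \DefSyms$, where every maximal chain of $t$ ends at $\varepsilon$ and has the form $\lmsdc j\pi'_1, \ldots, j\pi'_{k-1}, \varepsilon \rmsdc$ for a maximal chain $\lmsdc \pi'_1, \ldots, \pi'_{k-1} \rmsdc$ of some $t_j$. The crux is to establish
\[
\Dh(t, \pito) \leq \max_i \Dh(t_i, \pito) + \Dh(\maxanf{t}, \pito),
\]
with equality under confluence. For ``$\leq$'': since $t$ is not an innermost redex until all $t_i$ are in normal form, any $\pito$-derivation of $t$ begins with a purely below-root phase of length at most $\max_i \Dh(t_i, \pito)$, reaching some argument normal form $u$ with $t \pitosbelow u$; by definition of $\maxanf{t}$, the remaining derivation from $u$ has length at most $\Dh(\maxanf{t}, \pito)$. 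Combining this with the induction hypothesis on some $t_{j^*}$ attaining $\max_i \Dh(t_i, \pito)$ along a maximal chain of $t_{j^*}$---and using $t|_{j^*\pi'_i} = t_{j^*}|_{\pi'_i}$ together with $t|_\varepsilon = t$---delivers the RHS.

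The hard part is the equality direction under confluence, which requires exhibiting a $\pito$-derivation of $t$ achieving the sum. Confluence together with finiteness of $\pito$-derivations of $t$ ensures that each $t_i$'s normal form is unique, so $\maxanf{t}$ is uniquely determined and can be reached in exactly $\max_i \Dh(t_i, \pito)$ below-root steps by running a longest $\pito$-derivation of each argument in parallel. Appending a longest derivation from $\maxanf{t}$ gives $\Dh(t, \pito) \geq \max_i \Dh(t_i, \pito) + \Dh(\maxanf{t}, \pito)$; invoking the equality direction of the induction hypothesis on $t_{j^*}$ then converts $\Dh(t_{j^*}, \pito)$ into the sum over a maximal chain of $t_{j^*}$, producing a matching maximal chain of $t$ whose sum meets $\Dh(t, \pito)$. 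The delicate point is that confluence-induced uniqueness of the argument normal form is precisely what allows the below-root phase to attain length $\max_i \Dh(t_i, \pito)$ while still terminating at $\maxanf{t}$, rather than forcing a trade-off between the two phases.
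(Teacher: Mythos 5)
Your proposal is correct and follows essentially the same route as the paper's proof: induction on the structure/size of $t$, the key decomposition $\Dh(t,\pito) \leq \max_i \Dh(t_i,\pito) + \Dh(\maxanf{t},\pito)$ (with equality under confluence because the argument normal form is then unique and reachable by running longest derivations of the arguments in parallel), followed by the induction hypothesis on the arguments and the correspondence between maximal structural dependency chains of $t$ and those of its arguments extended by $\varepsilon$. The only cosmetic difference is that you treat the constructor-root and defined-root cases separately, whereas the paper handles them uniformly and observes at the end that $\Dh(\maxanf{t},\pito)=0$ when the root is not defined.
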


\begin{proof}
By induction on the term size $|t|$.
If $|t| = 1$, the statement follows immediately
since $\maxanf{t}\ = t$.
Now consider the case $|t| > 1$. Let $n$ be the arity
of the root symbol of $t$. In (parallel-)innermost
rewriting, a rewrite step at the root of $t$ requires
that the arguments of $t$ have been rewritten to normal
forms. Since rewriting of arguments takes place in parallel
(case (b) of \defref{def:pito} applies), we have
\[
\Dh(t, \pito) \leq \Dh(\maxanf{t}, \pito) +
   \max \{\quad \Dh(t|_j, \pito) \quad \mid 1 \leq j \leq n \}
\]
If $\pito$ is confluent, then $\maxanf{t}$ is uniquely determined and
we have equality in the previous as well as in the next (in)equalities.

\medskip
As $|t_j| < |t|$, we can apply the induction hypothesis:
\begin{align*}
\Dh(t, \pito) &\leq \Dh(\maxanf{t}, \pito) \\
&+ \max \left\{
  \max \left\{
         \sum_{1 \leq i \leq m} \Dh(\maxanf{t|_{j.\tau_i}},\pito)
        \, \middle|\, \msdc{t|_j}{\tau_1, \dots, \tau_m}
  \right\}
\,\middle|\,  1 \leq j \leq n \right\}
\end{align*}
Equivalently:
\begin{align*}
\Dh(t, \pito) &\leq
\max \left\{
  \Dh(\maxanf{t}, \pito) +
  \sum_{1 \leq i \leq m} \Dh(\maxanf{t|_{j.\tau_i}},\pito)
  ~\middle|~
  \begin{gathered}
    1 \leq j \leq n \\
    \msdc{t|_j}{\tau_1, \dots, \tau_m}
  \end{gathered}
\right\} \\
&=
\max \left\{
  \sum_{1 \leq i \leq k} \Dh(\maxanf{t|_{\pi_i}},\pito)
  ~\middle|~ \msdc{t}{\pi_1, \dots, \pi_k}
\right\}
\end{align*}

For the last equality, consider that
the maximal structural dependency chains $\pi_1, \dots, \pi_k$ of
$t$ can have two forms. If the root of $t$ is a defined symbol,
we have $\msdc{t}{j.\tau_1, \dots, j.\tau_m, \varepsilon}$.
Otherwise $\Dh(\maxanf{t}, \pito) = 0$ and thus
$\msdc{t}{j.\tau_1, \dots, j.\tau_m}$.
\end{proof}

We now can proceed with the proof of~\thmref{thm:cplxPar}.
\begin{proof}[of~\thmref{thm:cplxPar}]
  \begin{itemize}
  \item   As the first case, consider $\Dh(t, \pito) = \omega$.
Since $t$ is in argument normal form, the first rewrite step
from $t$ must occur at the root. Thus, there are
$\ell_1 \to r_1 \in \RR$ and a substitution $\sigma_1$
such that $t = \ell_1\sigma_1 \pito r_1\sigma_1$ and
$\Dh(r_1\sigma_1,\pito) = \omega$. Hence, there is a minimal subterm
$r_1\sigma_1|_{\pi_1}$ of $r_1\sigma_1$ such that
$\Dh(r_1\sigma_1|_{\pi_1},\pito) = \omega$ and all proper subterms of
$r_1\sigma_1|_{\pi_1}$ terminate w.r.t.\ $\pito$. As $\sigma_1$
must instantiate all variables with normal forms, we have
$\pi_1 \in \PosDef(r_1)$, i.e., $r_1\sigma_1|_{\pi_1} =
r_1|_{\pi_1}\sigma_1$.
In the infinite $\pito$-reduction of $r_1|_{\pi_1}\sigma_1$,
all arguments are again reduced to normal forms first, and
we get a term $t'$ with $\Dh(t', \pito) = \omega$.
Since $t'$ is in argument normal form, the first rewrite step from
$t'$ must occur at the root. Thus, there are
$\ell_2 \to r_2 \in \RR$ and a substitution $\sigma_2$
such that $t' = \ell_2\sigma_2 \pito r_2\sigma_2$ and
$\Dh(r_2\sigma_2,\pito) = \omega$.
This argument can be continued \emph{ad infinitum}, giving rise to an
infinite path of chain tree nodes
\[
(\tup{\ell}_1 \to \FCom_{n_1}(\ldots,\tup{r_1|}_{\pi_1},\ldots) \mid
\sigma_1), \qquad
(\tup{\ell}_2 \to \FCom_{n_2}(\ldots,\tup{r_2|}_{\pi_2},\ldots) \mid
\sigma_2), \qquad \dots
\]
Thus, $\tup{\ell}_1\sigma_1 = \tup{t}$ has an infinite chain
tree, and
$\Cplx_{\langle \DTpar(\RR), \DTpar(\RR), \RR \rangle}(\tup{t}) =
\omega$.

\item  Now consider the case where $\Dh(t,\pito) \in \nat$.
We use induction on $\Dh(t,\pito)$.

If $\Dh(t,\pito) = 0$,
the term $t$ is in normal form w.r.t.\ $\RR$. Thus, $\tup{t}$ is in
normal form w.r.t.\ $\DTpar(\RR) \cup \RR$, and
$\Cplx_{\langle \DTpar(\RR), \DTpar(\RR), \RR \rangle}(\tup{t}) = 0$.

If $\Dh(t,\pito) > 0$, since $t$ is in argument normal form, there are
$\ell \to r \in \RR$ and a substitution $\sigma$
such that $t = \ell\sigma \pito r\sigma = u$ and
\begin{align}
\Dh(t, \pito) &= 1 + \Dh(u, \pito) \label{eq:topstep}
\end{align}
As $\sigma$
must instantiate all variables with normal forms, we have
that $u|_\pi = r\sigma|_\pi$ is in normal form for all
$\pi \in \PosDef(u) \setminus \PosDef(r)$. For these positions $\pi$,
$\maxanf{u|_\pi} = u|_\pi$ and $\Dh(u|_\pi, \pito) = 0$.
From \lemref{lem:nested}, we get:
\begin{align}
&\quad\:\Dh(u,\pito) \notag\\
& \leq
\max
\left\{
 \sum_{1 \leq i \leq k} \hspace*{-1ex} \Dh(\maxanf{u|_{\pi_i}},\pito)
 ~\middle|~ \msdc{u}{\pi_1, \dots, \pi_k}
\right\}\notag\\[1.5ex]
& =
\max
\left\{
 \sum_{1 \leq i \leq j} \hspace*{-1ex} \Dh(\maxanf{u|_{\pi_i}},\pito) +
 \hspace*{-2ex}
 \sum_{j+1 \leq i \leq k}
 \hspace*{-2ex}
 \Dh(\maxanf{u|_{\pi_i}},\pito)
 ~\middle|~
  \begin{gathered}
    \msdc{u}{\pi_1, \dots, \pi_k} \\
    \pi_1, \dots, \pi_j \in \PosDef(u) \setminus \PosDef(r) \\
    \pi_{j+1}, \dots, \pi_k \in \PosDef(r)
  \end{gathered}
\right\}\notag\\[1.5ex]
& =
\max
\left\{
 \sum_{1 \leq i \leq j} \hspace*{-1ex} \Dh(u|_{\pi_i},\pito) \,\:\; +
 \hspace*{-2ex}
 \sum_{j+1 \leq i \leq k}
 \hspace*{-2ex}
 \Dh(\maxanf{u|_{\pi_i}},\pito)
 ~\middle|~
  \begin{gathered}
    \msdc{u}{\pi_1, \dots, \pi_k} \\
    \pi_1, \dots, \pi_j \in \PosDef(u) \setminus \PosDef(r) \\
    \pi_{j+1}, \dots, \pi_k \in  \PosDef(r)
 \end{gathered}
\right\}\notag\\[1.5ex]
& =
\max
\left\{ \hspace*{4ex}
 \sum_{j+1 \leq i \leq k} \Dh(\maxanf{u|_{\pi_i}},\pito)
 ~\middle|~
  \begin{gathered}
    \msdc{u}{\pi_1, \dots, \pi_k} \\
    \pi_1, \dots, \pi_j \in \PosDef(u) \setminus \PosDef(r) \\
    \pi_{j+1}, \dots, \pi_k \in  \PosDef(r)
  \end{gathered}
\right\}\notag
\label{eq:maxanf}\\[-2ex]
& \hspace*{25ex}
\end{align}
Note that $\Dh(\maxanf{u|_{\pi}},\pito) \leq
\Dh(u|_{\pi},\pito) < \Dh(t,\pito)$ holds for all $\pi \in
\PosDef(r)$. Thus, with the induction hypothesis,
\eqref{eq:topstep} and \eqref{eq:maxanf}, we get:

\eject

\hbox{}
\vspace*{-15mm}

\begin{align}
&\quad\:\Dh(t, \pito)\notag\\
& = 1 + \Dh(u, \pito)\notag\\
& \leq
1 +
\max
\left\{
 \sum_{j+1 \leq i \leq k}
  \hspace*{-1ex}
  \Cplx_{\langle \DTpar(\RR), \DTpar(\RR), \RR
                                       \rangle}(\tup{\maxanf{u|_{\pi_i}}})
  ~\middle|~
  \begin{gathered}
    \msdc{u}{\pi_1, \dots, \pi_k} \\
    \pi_1, \dots, \pi_j \in \PosDef(u) \setminus \PosDef(r) \\
    \pi_{j+1}, \dots, \pi_k \in  \PosDef(r)
  \end{gathered}
\right\}
\label{eq:ind}
\end{align}
Let $\lmsdc \pi_1,\ldots,\pi_k \rmsdc$ be an arbitrary maximal structural
dependency chain for $r$. Then there exists a corresponding
chain tree for $\tup{t}$ whose root node is
$(\tup{\ell} \to \FCom_{k}(\tup{\maxanf{r_1|_{\pi_1}}},\ldots,
   \tup{\maxanf{r_k|_{\pi_k}}}) \mid \sigma)$
and where the children of the root node are maximal
chain trees for
$\tup{\maxanf{u|_{\pi_1}}},\ldots,\tup{\maxanf{u|_{\pi_k}}}$.
This follows because for all $1 \leq i \leq k$, we have
$r|_{\pi_i}\sigma = u|_{\pi_i}$ and so
$\tup{r|_{\pi_i}}\sigma \itos \tup{\maxanf{u|_{\pi_i}}}$.
Together with \eqref{eq:ind}, this gives
$\Dh(t,\pito) \leq
\Cplx_{\langle \DTpar(\RR), \DTpar(\RR), \RR \rangle}(\tup{t})$,
and for confluent $\pito$ we also get
$\Dh(t,\pito) =
\Cplx_{\langle \DTpar(\RR), \DTpar(\RR), \RR \rangle}(\tup{t})$.
\end{itemize}
\end{proof}

From \thmref{thm:cplxPar}, the soundness of our approach to parallel
complexity analysis via the DT framework follows analogously to
\cite{DependencyTuple}:

\begin{theorem}[Parallel Complexity Bounds for TRSs via Canonical Parallel DT
Problems]
\label{thm:canonical_pdt_problem}
Let $\RR$ be a TRS with canonical parallel DT problem
$\langle \DTpar(\RR), \DTpar(\RR), \RR \rangle$.
Then we have $\pircR(n) \leq \irc{\langle \DTpar(\RR), \DTpar(\RR), \RR \rangle}(n)$.

If $\pito$ is confluent, we have
$\pircR(n) = \irc{\langle \DTpar(\RR), \DTpar(\RR), \RR \rangle}(n)$.
\end{theorem}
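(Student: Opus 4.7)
The plan is to derive this theorem as an almost immediate corollary of \thmref{thm:cplxPar} by lifting its pointwise statement about derivation heights to a statement about suprema over all basic start terms of bounded size. Since both $\pircR(n)$ and $\irc{\langle \DTpar(\RR), \DTpar(\RR), \RR \rangle}(n)$ are defined as suprema indexed by the same set $\{ t \in \BasicTerms^{\RR} \mid \tsize{t} \leq n \}$, the argument reduces to checking that the index set is legitimate and that suprema are monotone under pointwise inequalities.

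First, I would unfold the two relevant definitions. By \defref{def:pirc}, $\pircR(n) = \sup \{ \Dh(t, \pito) \mid t \in \BasicTerms^{\RR}, \tsize{t} \leq n \}$, and by \defref{def:dt_problem} (specialised to the parallel canonical problem), $\irc{\langle \DTpar(\RR), \DTpar(\RR), \RR \rangle}(n) = \sup \{ \Cplx_{\langle \DTpar(\RR), \DTpar(\RR), \RR \rangle}(\tup{t}) \mid t \in \BasicTerms^{\RR}, \tsize{t} \leq n \}$. Next, I would observe that any basic term $t = f(t_1,\ldots,t_k) \in \BasicTerms^{\RR}$ satisfies the hypothesis of \thmref{thm:cplxPar}: by definition of basic terms, $t_1, \ldots, t_k \in \TT(\ConSyms^{\RR}, \VV)$, so none of them contain any defined symbol and thus each $t_i$ is trivially in normal form wrt $\RR$ (and therefore also wrt $\pito$).

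Hence \thmref{thm:cplxPar} applies pointwise, yielding $\Dh(t,\pito) \leq \Cplx_{\langle \DTpar(\RR), \DTpar(\RR), \RR \rangle}(\tup{t})$ for every $t \in \BasicTerms^{\RR}$ with $\tsize{t} \leq n$. Taking the supremum of both sides over this common index set, and using the standard fact that $\sup$ is monotone wrt pointwise inequalities (with the convention $\sup \emptyset = 0$ from \defref{def:rc} handling the empty case), we obtain $\pircR(n) \leq \irc{\langle \DTpar(\RR), \DTpar(\RR), \RR \rangle}(n)$. For the confluent case, \thmref{thm:cplxPar} gives the pointwise equality $\Dh(t,\pito) = \Cplx_{\langle \DTpar(\RR), \DTpar(\RR), \RR \rangle}(\tup{t})$, which lifts to equality of the two suprema by the same monotonicity reasoning applied in both directions.

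I do not foresee any serious obstacle; the entire statement is a ``supremum-lifting'' wrapper around \thmref{thm:cplxPar}. The only subtlety worth stating explicitly is that basic terms do indeed meet the ``arguments in normal form'' hypothesis of \thmref{thm:cplxPar}, which is what makes the pointwise bound applicable to every term in the index set. The author's parenthetical remark in \thmref{thm:cplxPar} (``e.g., when $t \in \BasicTerms^{\RR}$'') already anticipates this, so the proof can be kept to a few lines.
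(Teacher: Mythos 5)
Your proposal is correct and matches the paper's approach: the paper gives no explicit proof, stating only that the theorem ``follows analogously to [the sequential case]'' from \thmref{thm:cplxPar}, and your supremum-lifting argument (including the check that arguments of basic terms are constructor terms and hence normal forms) is precisely the intended fleshing-out of that remark.
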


This theorem implies that we can reuse arbitrary techniques to find
upper bounds
for \emph{sequential} complexity in the
DT framework also to find upper bounds for \emph{parallel}
complexity, without requiring any modification to the framework.
To analyse parallel complexity of a TRS $\RR$
  instead of sequential
  complexity, we need to make only a single adjustment:
  the input for the DT framework is now the
  canonical \emph{parallel} DT problem for $\RR$
  (\defref{def:pdt})
  instead of the canonical DT problem
  (\defref{def:dt_problem}).

Specifically, with \thmref{thm:canonical_pdt_problem}
we can use
the existing reduction pair processor with CPIs
(\thmref{thm:redpair}) in the
DT framework
to get upper bounds for $\pircR$.

\begin{example}[\exref{ex:sizePDTs} continued]
\label{ex:size_tight_bound}
For our TRS $\RR$ computing the $\Fsize$ function on trees, we get the
set $\DTpar(\RR)$ with the following
PDTs:
\[
\begin{array}{rcl}
\FPLUS(\FZero, y) & \to &  \FCom_0 \\
\FPLUS(\FS(x), y) & \to & \FCom_1(\FPLUS(x, y)) \\
\FSIZE(\FNil) & \to & \FCom_0 \\
\FSIZE(\FTree(v, l, r)) & \to &
   \FCom_2(\FSIZE(l), \FPLUS(\Fsize(l), \Fsize(r))) \\
\FSIZE(\FTree(v, l, r)) & \to &
   \FCom_2(\FSIZE(r), \FPLUS(\Fsize(l), \Fsize(r)))
\end{array}
\]
The
interpretation $\Pol$
from \exref{ex:size:irc} implies
$\pircR(n) \in \OO(n^2)$.
This bound is tight: consider $\Fsize(t)$ for a comb-shaped tree $t$
where the first argument of $\FTree$ is always $\FZero$ and the
third is always $\FNil$.
The function $\Fplus$, which needs time linear in its first argument,
is called linearly often on data linear in the size of the start term.
Due to the structural dependencies, these calls do not happen in parallel
(so call $k+1$ to $\Fplus$ must wait for call $k$).
\end{example}

\begin{example}
\label{ex:doubles}
Note that $\pircR(n)$ can be asymptotically lower than $\ircR(n)$, for instance for the TRS $\RR$ of Section~\ref{sec:example} with the following rules:
\[
\begin{array}{rcl@{\hspace*{5ex}}|@{\hspace*{5ex}}rcl}
\Fdoubles(\FZero) & \to & \FNil &
  \Fd(\FZero) & \to & \FZero \\
\Fdoubles(\FS(x)) & \to & \FCons(\Fd(\FS(x)), \Fdoubles(x)) &
  \Fd(\FS(x)) & \to & \FS(\FS(\Fd(x)))
\end{array}
\]
The upper bound $\ircR(n) \in \mathcal{O}(n^2)$ is tight:
from
$\Fdoubles(\FS(\FS(\ldots\FS(\FZero)\ldots)))$, we get
linearly many calls to the linear-time function $\Fd$ on arguments
of size linear in the start term.
However, the Parallel Dependency Tuples in this example are:
\[
\begin{array}{rcl@{\hspace*{5ex}}|@{\hspace*{5ex}}rcl}
\FDOUBLES(\FZero) & \to & \FCom_0 & \FD(\FZero) & \to & \FCom_0 \\
\FDOUBLES(\FS(x)) & \to & \FCom_1(\FD(\FS(x))) & \FD(\FS(x)) & \to & \FCom_1(\FD(x)) \\
\FDOUBLES(\FS(x)) & \to & \FCom_1(\FDOUBLES(x))
\end{array}
\]
Then the following polynomial
interpretation, which orients all DTs with $\succ$ and all
rules from $\RR$ with $\succsim$ \journal{so that the reduction
pair processor returns a solved DT problem},
proves $\pircR(n) \in \mathcal{O}(n)$:
$\Pol(\FDOUBLES(x_1)) = \Pol(\Fd(x_1)) = 2 x_1,
\Pol(\FD(x_1)) = x_1,
\Pol(\Fdoubles(x_1)) = \Pol(\FZero) =\linebreak
\Pol(\FCons(x_1,x_2)) = \Pol(\FNil) = 1,
\Pol(\FS(x_1)) = 1 + x_1$.
\end{example}

Interestingly enough, Parallel Dependency Tuples also allow us to
identify TRSs that have \emph{no} potential for parallelisation
by parallel-innermost rewriting.

\begin{theorem}[Absence of Parallelism by PDTs]
\label{thm:nopar}
Let $\RR$ be a TRS
such that for all rules $\ell \to r \in \RR$,
$|\msdcSym(r)| = 1$.
Then:
\begin{enumerate}
\item[(a)] $\DTpar(\RR) = \DT(\RR)$;
\item[(b)] for all basic terms $t_0$ and rewrite sequences
$t_0 \pito t_1 \pito t_2 \pito \dots$, also
$t_0 \ito  t_1 \ito  t_2 \ito  \dots$ holds (i.e., from basic terms,
$\pito$ and $\ito$ coincide);
\item[(c)] $\pirc{\RR}(n) = \irc{\RR}(n)$.
\end{enumerate}
\end{theorem}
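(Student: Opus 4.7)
My plan is to prove the three parts in order. Part (a) is a definition chase; part (b) requires an induction along the rewrite sequence that maintains a chain-shape invariant on defined positions; and part (c) falls out of (b).

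For part (a), fix a rule $\ell \to r$ with $\msdcSym(r) = \{\lmsdc \pi_1, \ldots, \pi_k \rmsdc\}$. I would argue this single chain enumerates every element of $\PosDef(r)$: if some $\tau \in \PosDef(r)$ were not in it, then either $\tau$ is parallel to $\pi_1$ (contradicting uniqueness of $\msdcSym(r)$, since one could extend $\tau$ to a second maximal chain) or $\pi_1 > \tau$ with $\tau \notin \{\pi_2,\ldots,\pi_k\}$, contradicting maximality as stipulated in \defref{def:structdeps}. Hence $\PosDef(r) = \{\pi_1,\ldots,\pi_k\}$ is totally ordered by the prefix order, so any total order $\gtrdot$ used in \defref{def:dt} agrees with the chain, and $\DTpar(\ell\to r) = \{\DT(\ell\to r)\}$. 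Taking the union over $\RR$ yields $\DTpar(\RR) = \DT(\RR)$.

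For part (b), I would show by induction on $k$ that every term $t_k$ along a $\pito$-sequence from a basic $t_0$ satisfies the invariant $(\ast)$: $\PosDef(t_k)$ is a chain under the prefix order $>$, and at the deepest defined position (when it exists) every proper subterm lies in $\TT(\ConSyms^{\RR},\VV)$. The base case is immediate since a basic term has $\PosDef(t_0) = \{\varepsilon\}$ and constructor-variable arguments. For the step, $(\ast)$ guarantees exactly one innermost redex in $t_k$, at the deepest defined position $\pi_m$, so the parallel-innermost step at that position coincides with the sequential step $t_k \ito t_{k+1}$. Because the rewrite at $\pi_m$ uses a matcher $\sigma$ whose image consists of the constructor-variable subterms below $\pi_m$, we get $\PosDef(r\sigma) = \PosDef(r)$, which by part (a) is itself a chain. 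Combining with the untouched positions above $\pi_m$ (a sub-chain of the old chain) preserves the chain shape. A short case distinction on whether $\PosDef(r)$ is empty then shows the new deepest defined position also has only constructor-variable proper subterms: these subterms are either pieces of $r\sigma$ strictly below $r$'s deepest defined position (constructor-variable by construction of $r$ and by the shape of $\sigma$) or old subterms of $t_k$ on branches parallel to $\pi_m$ (constructor-variable, since the chain invariant for $t_k$ forbids defined symbols on such branches).

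Part (c) then follows easily. By (b), every $\pito$-sequence from a basic term is an $\ito$-sequence, and conversely every $\ito$-sequence from a basic term is a $\pito$-sequence, because the invariant forces each step to act on the unique innermost redex; hence $\Dh(t,\pito) = \Dh(t,\ito)$ for all $t \in \BasicTerms^{\RR}$. Taking the supremum over basic terms of size at most $n$ yields $\pirc{\RR}(n) = \irc{\RR}(n)$.

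The main obstacle I expect is nailing down the invariant $(\ast)$ precisely, in particular justifying the constructor-variable claim for subterms strictly below the deepest defined position. This draws on two facts that must be combined carefully: innermost rewriting from a basic term keeps the matcher's image on constructor-variable subterms, and the chain shape of $\PosDef(t_k)$ prohibits any defined symbol from sitting on a branch parallel to the path down to the deepest defined position. Once $(\ast)$ is established, the remainder of the argument is routine position bookkeeping.
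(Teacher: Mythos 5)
Your part~(a) is essentially the paper's argument (the paper likewise observes that $|\msdcSym(r)|=1$ forces $\PosDef(r)$ to be totally ordered by the prefix order and then picks $\gtrdot$ as a total extension of $>$), and part~(c) is unproblematic once (b) is in place. The gap is in part~(b): your invariant $(\ast)$ is \emph{not} preserved by $\pito$-steps, and the step of your induction that relies on it is false. The problem is that the deepest defined position $\pi_m$ of $t_k$ need not carry a redex at all --- pattern matching can fail, leaving a defined-rooted \emph{normal form} there. Then the unique innermost redex (if any) sits at a \emph{higher} position of the chain, and its matcher $\sigma$ can capture that defined-rooted normal form inside a variable. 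Consequently $\sigma$'s image is not constructor-variable, $\PosDef(r\sigma) \neq \PosDef(r)$, and if the rule is duplicating, the chain property of $\PosDef(t_{k+1})$ breaks. Concretely, take $\RR = \{\Fh(\FS(x)) \to x,\; \Ff(\FS(x)) \to \Ff(\Fh(x)),\; \Ff(x) \to \Fc(x,x)\}$ with $\Fc,\FS,\FZero$ constructors; every right-hand side has a singleton $\msdcSym$. From the basic term $\Ff(\FS(\FZero))$ we get $\Ff(\FS(\FZero)) \pito \Ff(\Fh(\FZero))$; here $\Fh(\FZero)$ is a normal form, the deepest defined position $1$ is not a redex, the innermost redex is at the root with matcher $\{x \mapsto \Fh(\FZero)\}$, and the next step yields $\Fc(\Fh(\FZero),\Fh(\FZero))$ with $\PosDef = \{1,2\}$ --- two parallel defined positions, so $(\ast)$ fails. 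The theorem's conclusion survives (that term is a normal form), but your proof does not establish it.

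The repair is to weaken the invariant to the one the paper actually uses: \emph{each $t_i$ has at most one innermost redex}. This is preserved because (i) for an innermost step the matcher instantiates all variables by normal forms, so the defined positions it smuggles into $t_{i+1}$ sit inside normal-form subterms and can never host redexes; (ii) the only positions of $r\sigma$ that can ever become redexes are the instantiated positions of $\PosDef(r)$, which form a chain by the hypothesis $|\msdcSym(r)|=1$; and (iii) any remaining redexes of $t_{i+1}$ lie on the path above the rewrite position and are hence pairwise comparable, so at most one of all these candidates is innermost. With that invariant, each $\pito$-step rewrites exactly the unique innermost redex and coincides with an $\ito$-step (and vice versa), which is all that parts (b) and (c) need; the stronger chain-shape and constructor-variable claims you try to maintain are neither true nor necessary.
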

\begin{proof}
Let $\RR$ be a TRS such that for all rules $\ell \to r \in \RR$,
$|\msdcSym(r)| = 1$.

\medskip
We prove part (a) by showing that for each rule $\ell \to r \in \RR$,
we have $\DTpar(\ell \to r) = \{ \DT(\ell \to r) \}$.
Let $\ell \to r \in \RR$.
By construction of $\DTpar$, we get
from $|\msdcSym(r)| = 1$ that $|\DTpar(\ell \to r)| = 1$.
$|\msdcSym(r)| = 1$ implies that $\PosDef(r)$ is ordered by
the prefix order $>$ on positions.
Thus, by using
the extension of $>$
to
the lexicographic order on positions $\gtrdot$ used as an ingredient for the
construction of $\DT(\ell \to r)$, we obtain the result for
part (a).

\medskip
We now prove part (b).
Let $t_0$ be a basic term for $\RR$ with a rewrite sequence
$t_0 \pito t_1 \pito t_2 \pito \dots$.
We show by induction over $i$ that for all $t_i$,
$t_i$ contains at most one innermost redex.

For the base case, consider that the basic term $t_0$
contains only a single occurrence of a defined symbol,
at the root.
Thus, if $t_0$ is a redex, it is also the unique innermost
redex in $t_0$.

For the induction step, assume that $t_i$ has at most one innermost redex.
If $t_i$ has no redex, it is a normal form, and we are done.
Otherwise, $t_i$ has exactly one innermost redex at position $\tau$,
and in the parallel-innermost rewrite step $t_i \pito t_{i+1}$
a rule $\ell \to r$ with matcher $\sigma$ replaces $t_i|_\tau = \sigma(\ell)$
by $\sigma(r)$.
The premise $|\msdcSym(r)| = 1$ implies that there is exactly one
(empty or non-empty) maximal structural dependency chain
$\msdc{r}{\pi_1, \ldots, \pi_k}$.

Since the rewrite step $t_i \pito t_{i+1}$ uses
(parallel-)\emph{innermost} rewriting, $\sigma(x)$
is in normal form for all variables $x$.
Thus, potential redexes in term $t_{i+1}$ can only be at positions
$\tau . \pi_1, \ldots, \tau . \pi_k$.
As $\lmsdc \pi_1, \ldots, \pi_k \rmsdc$ is a structural
dependency chain, we have $\pi_1 > \dots > \pi_k$, which implies
$\tau . \pi_1 > \dots > \tau . \pi_k$. Thus, the term $t_{i+1}$
has at most one innermost redex $\tau \pi_i$.
This concludes part (b).

\medskip
Part (c) follows directly from part (b) and the definitions of
$\pirc{\RR}(n)$ and $\irc{\RR}(n)$.
\end{proof}

Thus, for TRSs $\RR$ where
\thmref{thm:nopar} applies,
no rewrite rule
can introduce
parallel redexes, and
specific analysis techniques for
$\pircR$
are not needed.

\section{From parallel DTs to innermost rewriting}
\label{sec:dt_to_irc}
As we have seen in the previous section, we can transform a TRS $\RR$ with
parallel-innermost rewrite relation to a DT problem whose
complexity provides an upper bound of $\pircR$ (or,
for
confluent $\pito$, corresponds exactly to $\pircR$).
However,
DTs are only one of many
available techniques
to find bounds for $\ircR$. Other techniques
include, e.g.,
Weak Dependency Pairs \cite{Hirokawa08IJCAR},
usable replacement maps \cite{HirokawaMoser14},
the Combination Framework \cite{ava:mos:16},
a transformation to complexity
problems for integer transition systems \cite{naa:fro:bro:fuh:gie:17},
amortised complexity analysis \cite{MoserS20},
or techniques for finding \emph{lower} bounds \cite{LowerBounds}.
Thus, can we benefit also from other
techniques for (sequential) innermost complexity to analyse parallel
complexity?

\medskip
In this section, we answer the question in the affirmative, via a
generic transformation from Dependency Tuple problems back to rewrite
systems whose innermost complexity can then be analysed using
arbitrary existing techniques.

\medskip
We use
\emph{relative rewriting},
which allows for labelling some of the rewrite rules such that
their use does not contribute to the derivation height of a term.
In other words, rewrite steps with these rewrite rules are
``for free'' from the perspective of complexity.
Existing state-of-the-art tools like \aprove~\cite{aprove-tool}
and \tct~\cite{tct} are
able to find bounds on (innermost) runtime complexity of such
rewrite systems.

\begin{definition}[Relative Rewriting]
\label{def:rel}
For two TRSs $\RRA$ and $\RRB$,  $\RRA/\RRB$ is a
 \emph{relative TRS}.
Its \emph{rewrite relation} $\to_{\RRA/\RRB}$ is
  $\to^*_{\RRB} \circ \to_{\RRA} \circ \to^*_{\RRB}$, i.e.,  rewriting with
$\RRB$ is allowed before and after each $\RRA$-step.
We define the \emph{innermost} rewrite relation by
$s \someito{\RRA/\RRB} t$ iff
  $s \to^*_{\RRB} s' \to_{\RRA} s'' \to^*_{\RRB} t$  for some terms $s',
  s''$ such that the
proper subterms of the
redexes of each step with $\to_\RRB$ or $\to_\RRA$ are in normal form
w.r.t.\ $\RRA \cup \RRB$.

The set $\BasicTerms^{\RRA/\RRB}$ of basic terms for a
relative TRS $\RRA/\RRB$ is
$\BasicTerms^{\RRA/\RRB} = \BasicTerms^{\RRA\cup\RRB}$.
The notion of innermost runtime complexity extends to
relative TRSs in the natural way:
$\irc{\RRA/\RRB}(n) =
\sup \{ \Dh(t, {\someito{\RRA/\RRB}}) \mid t \in \BasicTerms^{\RRA/\RRB}, \tsize{t}
\leq n \}$
\end{definition}

The rewrite relation $\someito{\RRA/\RRB}$ is essentially
the same as $\someito{\RRA \cup \RRB}$, but only steps using
rules from $\RRA$ count towards the complexity;
steps
using rules from $\RRB$ have no cost.
This can be useful, e.g., for representing that built-in functions
from programming languages modelled as recursive functions
have constant cost.

\begin{example}
\label{ex:size_relative}
Consider a variant of \exref{ex:size1} where
$\Fplus(\FS(x), y) \to \FS(\Fplus(x, y))$ is moved
to $\RRB$,
but all other rules are elements of $\RRA$.
Then
$\RRA/\RRB$ would provide a modelling of the $\Fsize$ function that is
closer to the
OCaml
function from \autoref{sec:intro}.
Let $\FS^n(\FZero)$ denote the term obtained by $n$-fold application
of $\FS$ to $\FZero$ (e.g., $\FS^2(\FZero) = \FS(\FS(\FZero))$).
Although $\Dh(\Fplus(\FS^n(\FZero),\FS^m(\FZero)), \someito{\RRA\cup\RRB})\linebreak =
n+1$, we would then get $\Dh(\Fplus(\FS^n(\FZero),\FS^m(\FZero)),
\someito{\RRA/\RRB}) = 1$, corresponding to a machine model where the
time of evaluating addition for integers is constant.
\end{example}

Note the similarity of a relative TRS and a Dependency Tuple
problem: only certain rewrite steps count towards the
analysed complexity.
We make use of this observation for the following transformation.

\begin{definition}[Relative TRS for a Dependency Tuple Problem, $\detup$]
\label{def:to_relative}
Let $\langle \DD, \SSS, \RR \rangle$ be a Dependency Tuple problem.
We define the corresponding relative TRS:
\[\detup(\langle \DD, \SSS, \RR \rangle) =
\SSS/((\DD \setminus \SSS) \cup \RR).
\]
\end{definition}

In other words, we omit the information that steps with our
dependency tuples can happen only on top level (possibly below
constructors $\FCom_n$, but above $\someto{\RR}$ steps). (As we shall see in \thmref{thm:lower_rel}, this information can be recovered.)

\medskip
The following example is taken from the
\emph{Termination Problem Data Base (TPDB)} \cite{tpdb},
a collection of examples used at the
annual \emph{Termination and Complexity Competition (termCOMP)}
\cite{termcomp,termcompWiki} (see also \autoref{sec:expe}):

\begin{example}[TPDB, \texttt{HirokawaMiddeldorp\_04/t002}]
\label{ex:to_relative}
Consider the following TRS $\RR$ from category
\texttt{Innermost\_Runtime\_Complexity}
of the TPDB:
\[
\begin{array}{rcl@{\hspace*{5ex}}|@{\hspace*{5ex}}rcl}
\Fleq(\Fzero, y) & \to & \Ftrue
& \Fif(\Ftrue, x, y) & \to & x\\
\Fleq(\Fs(x), \Fzero) & \to & \Ffalse
& \Fif(\Ffalse, x, y) & \to & y\\
\Fleq(\Fs(x), \Fs(y)) & \to & \Fleq(x, y)
& \Fminus(x, \Fzero) & \to & x\\
\Fmod(\Fzero, y) & \to & \Fzero
& \Fminus(\Fs(x), \Fs(y)) & \to & \Fminus(x, y)\\
\Fmod(\Fs(x), \Fzero) & \to & \Fzero\\
\Fmod(\Fs(x), \Fs(y)) & \to & \multicolumn{4}{l}{\Fif(\Fleq(y, x), \Fmod(\Fminus(\Fs(x), \Fs(y)), \Fs(y)), \Fs(x))
}
\end{array}
\]
This TRS has the following PDTs $\DTpar(\RR)$:
\[
\begin{array}{rcl@{\hspace*{5ex}}|@{\hspace*{5ex}}rcl}
\tup{\Fleq}(\Fzero, y) & \to & \FCom_0
& \tup{\Fif}(\Ftrue, x, y) & \to & \FCom_0\\
\tup{\Fleq}(\Fs(x), \Fzero) & \to & \FCom_0
& \tup{\Fif}(\Ffalse, x, y) & \to & \FCom_0\\
\tup{\Fleq}(\Fs(x), \Fs(y)) & \to & \FCom_1(\tup{\Fleq}(x, y))
& \tup{\Fminus}(x, \Fzero) & \to & \FCom_0\\
\tup{\Fmod}(\Fzero, y) & \to & \FCom_0
& \tup{\Fminus}(\Fs(x), \Fs(y)) & \to & \FCom_1(\tup{\Fminus}(x, y))\\
\tup{\Fmod}(\Fs(x), \Fzero) & \to & \FCom_0\\
\tup{\Fmod}(\Fs(x), \Fs(y)) & \to &
                                    \multicolumn{4}{l}{\FCom_2(\tup{\Fleq}(y,x), \tup{\Fif}(\Fleq(y, x), \Fmod(\Fminus(\Fs(x), \Fs(y)), \Fs(y)), \Fs(x)))}\\
\tup{\Fmod}(\Fs(x), \Fs(y)) & \to &
                                    \multicolumn{4}{l}{\FCom_3(\tup{\Fminus}(\Fs(x), \Fs(y)), \tup{\Fmod}(\Fminus(\Fs(x), \Fs(y)), \Fs(y)),}\\
&& \multicolumn{4}{r}{\tup{\Fif}(\Fleq(y, x), \Fmod(\Fminus(\Fs(x), \Fs(y)), \Fs(y)), \Fs(x)))}
\end{array}
\]

The canonical parallel DT problem \journal{for $\RR$} is
$\langle \DTpar(\RR), \DTpar(\RR), \RR \rangle$.
We get the relative TRS
$\detup(\langle \DTpar(\RR), \DTpar(\RR), \RR \rangle) = \DTpar(\RR)/\RR$.
\end{example}

\begin{remark}
One of the reviewers suggested that the use of
$\Fif$ in \exref{ex:to_relative} indicated that TRSs with
innermost rewriting were inherently unable to
provide a faithful representation of the
evaluation strategy for conditional statements used in
programming languages with call-by-value evaluation.
The reason was that in the recursive $\Fmod$ rule,
the function calls in the subterm
$\Fmod(\Fminus(\Fs(x), \Fs(y)), \Fs(y))$
are evaluated also if the call to
$\Fleq(y, x)$ rewrites to $\Ffalse$ so that the
``then''-branch of the conditional evaluation would never
be needed. In contrast, a language like
OCaml, C++, Rust, \ldots would
evaluate the ``then''-branch only after the $\Fleq(y, x)$ had
evaluated to $\Ftrue$.

There are several ways of dealing with this modelling issue
in term rewriting.
One way is to impose a context-sensitive rewrite strategy
\cite{context-sensitive}
that ``freezes'' the second and third argument of $\Fif$.
Like this, the evaluation of these arguments is delayed until
after the result of evaluating the first argument is known,
and only the appropriate branch is evaluated.

\medskip
Another way that does not necessitate a different rewrite
strategy is to reorganise the rewrite rules with a dedicated
symbol $\Fcond$ for the \emph{specific} conditional expression
rather than using generic $\Fif$ rules.
In \exref{ex:to_relative},
we could replace the recursive $\Fmod$ rule
with the following rules (and potentially remove the
$\Fif$ rules, which would no longer be needed):

\[
\begin{array}{rcl}
\Fmod(\Fs(x), \Fs(y)) & \to & \Fcond(\Fleq(y, x), x, y)\\
\Fcond(\Ftrue, x, y)  & \to & \Fmod(\Fminus(\Fs(x), \Fs(y)), \Fs(y))\\
\Fcond(\Ffalse, x, y) & \to & \Fs(x)
\end{array}
\]

In this way, an innermost rewrite sequence would require
first evaluating the instance of $\Fleq(y, x)$ to normal form,
and depending on the result $\Ftrue$ or $\Ffalse$, the
corresponding $\Fcond$ rule will evaluate (only) the
corresponding branch of the earlier $\Fif$ statement.
We refrained from doing so in \exref{ex:to_relative} to
keep the link with an existing TRS from the literature
that reflects different modelling choices and
that our method should be able to analyse as well.
\end{remark}

From the definition of relative TRS, we are now able to prove an upper bound (Theorem~\ref{thm:upper_rel}) as well as a lower bound (Theorem~\ref{thm:lower_rel}) for parallel complexities.

\begin{theorem}[Upper Complexity Bounds for $\detup(\langle \DD, \SSS,
\RR \rangle)$ from
\label{thm:upper_rel}
  $\langle \DD, \SSS, \RR \rangle$]
Let $\langle \DD, \SSS, \RR \rangle$ be a
DT problem.
Then
\begin{enumerate}
\item[(a)] for all
$\tup{t} \in \SharpTerms$ with $t \in \BasicTerms^\RR$,
we have $\Cplx_{\hspace*{-1pt}\langle \DD, \SSS, \RR \rangle}(\tup{t})
\,{\leq} \Dh(\tup{t}, \itodetup)$,
and
\item[(b)] $\irc{\langle \DD, \SSS, \RR \rangle}(n) \,{\leq}
\,\irc{\detupTRS}(n)$.
\end{enumerate}
\end{theorem}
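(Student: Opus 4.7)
The plan is to prove part (a) by extracting, from any $(\DD,\RR)$-chain tree $T$ for $\tup{t}$, a $\someito{\detupTRS}$-rewrite sequence starting at $\tup{t}$ that contains at least $|T|_\SSS$ \emph{counted} steps, i.e., steps using rules from $\SSS$. Taking the supremum over $T$ will then give $\Dh(\tup{t},\itodetup)\geq\Cplx_{\langle\DD,\SSS,\RR\rangle}(\tup{t})$. Part (b) is immediate from (a): for any $t\in\BasicTerms^\RR$ with $|t|\le n$, the sharpened term $\tup{t}$ lies in $\BasicTerms^{\detupTRS}$ with $|\tup{t}|=|t|$, because sharpening only renames the root symbol into a defined symbol of $\detupTRS$ while leaving the constructor arguments untouched, so $\Cplx_{\langle\DD,\SSS,\RR\rangle}(\tup{t})\le\Dh(\tup{t},\itodetup)\le\irc{\detupTRS}(n)$, and taking the supremum over $t$ yields the desired inequality.

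For the construction I would proceed by induction on $T$. Let its root be labelled $(\tup{\ell}\to\FCom_e(\tup{r}_1,\ldots,\tup{r}_e)\mid\sigma)$, so $\tup{\ell}\sigma=\tup{t}$. First, I perform the single top-level step $\tup{t}\longrightarrow\FCom_e(\tup{r}_1\sigma,\ldots,\tup{r}_e\sigma)$ using this DT; it contributes to the counted length iff the DT lies in $\SSS$, and is otherwise a ``free'' step of $(\DD\setminus\SSS)\cup\RR$. For each child of the root, attached to some index $i_j$ and labelled $(\tup{p}_j\to\cdots\mid\delta_j)$, the chain-tree condition supplies a sequence $\tup{r}_{i_j}\sigma\itos\tup{p}_j\delta_j$ of free $\RR$-steps, which I execute at the corresponding argument position of $\FCom_e$. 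Applying the induction hypothesis to each child subtree $T_j$ delivers a rewrite sequence from $\tup{p}_j\delta_j$ with at least $|T_j|_\SSS$ counted steps; since the children sit at pairwise parallel positions underneath $\FCom_e$, these sub-derivations can simply be concatenated. The total count is $[\text{root DT}\in\SSS]+\sum_j|T_j|_\SSS=|T|_\SSS$. The case $\Cplx_{\langle\DD,\SSS,\RR\rangle}(\tup{t})=\omega$ is handled uniformly via the $\sup$ in \defref{def:rc}: either a single infinite chain tree produces an infinite counted sequence, or arbitrarily large finite ones produce arbitrarily long sequences, both giving $\Dh(\tup{t},\itodetup)=\omega$.

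The step I expect to be most delicate is verifying that the sequence built above is genuinely an \emph{innermost} rewrite sequence with respect to the relative TRS, which by \defref{def:rel} demands that every proper subterm of every redex be in normal form wrt $\SSS\cup(\DD\setminus\SSS)\cup\RR=\DD\cup\RR$, not only wrt $\RR$. The chain tree guarantees only $\RR$-normality of $\tup{u}\mu$ at each node. I would close the gap by observing that every left-hand side of a rule in $\DD$ has a sharp symbol at its root, while a sharp term contains exactly one sharp symbol (at its root), so proper subterms of $\tup{u}\mu$ cannot match any DT rule; hence $\RR$-normality already implies $\DD\cup\RR$-normality there. The same observation handles innermostness of the intermediate $\RR$-reductions that drive $\tup{r}_{i_j}\sigma$ to $\tup{p}_j\delta_j$: these occur strictly below the unique sharp root of the enclosing sharp subterm and therefore cannot enable or overlap with any DT redex. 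Once this innermost condition is secured, the construction is unproblematic and parts (a) and (b) follow as sketched.
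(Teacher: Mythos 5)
Your proposal is correct and rests on the same core idea as the paper's proof: simulate a $(\DD,\RR)$-chain tree for $\tup{t}$ by an innermost $\DD\cup\RR$-rewrite sequence from $\tup{t}$ in which each $\SSS$-labelled node contributes one counted step, then pass to suprema for part (b). The bookkeeping differs, though. The paper first disposes of the case $|T|_\SSS=\omega$ by a K\H{o}nig-style argument (the tree is finitely branching, so it has an infinite path hitting infinitely many $\SSS$-nodes) and then inducts on $n=|T|_\SSS\in\nat$, decomposing $T$ at the \emph{topmost} $\SSS$-nodes into an ``above'' part rewritten for free and $m$ subtrees handled by the induction hypothesis. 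You instead do a uniform structural induction on $T$, treating the root node the same whether or not its DT is in $\SSS$; this is cleaner, but note that chain trees may be infinite even when $|T|_\SSS$ is finite, so ``induction on $T$'' is not well-founded as stated. The fix is exactly the device you already invoke for the $\omega$ case: every finite prefix of a chain tree is again a chain tree, and for any $k\leq|T|_\SSS$ there is a finite prefix $T'$ with $|T'|_\SSS\geq k$ (each $\SSS$-node sits at finite depth); so it suffices to prove the bound for \emph{finite} chain trees and take suprema, which simultaneously replaces the paper's infinite-path argument. Your discussion of why the constructed sequence is innermost for $\DD\cup\RR$ and not merely for $\RR$ --- sharp symbols occur only at the roots of sharp terms, so $\RR$-normality of proper subterms already gives $\DD\cup\RR$-normality, and the intermediate $\someito{\RR}$-reductions below a sharp root can neither be nor enable $\DD$-redexes --- is exactly the point the paper leaves implicit, and it is good that you made it explicit. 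Part (b) matches the paper's argument ($\tup{t}\in\BasicTerms^{\DD\cup\RR}$ with $\tsize{\tup{t}}=\tsize{t}$, so the supremum over sharpened basic terms of $\RR$ is dominated by the supremum over all basic terms of $\DD\cup\RR$).
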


\begin{proof}
  We first show part (a) of the statement.
  For a DT Problem
  $\langle \DD, \SSS, \RR \rangle$ and a term
  $\tup{t} \in \SharpTerms$, consider an arbitrary
  chain tree $T$. We will show that if $|T|_\SSS = n$, then also
  $\Dh(\tup{t},\itodetup) \geq n$.
  We consider two cases.
  \begin{itemize}
    \item First, $n = \omega$.
    The set of all dependency tuples is finite, thus a finite term
    can only have a finite number of immediate successors.
    Therefore, $T$ is finitely branching, so there must be an infinite path
    with infinitely many nodes of the form
    $(\tup{u}_1 \to \FCom_{n_1}(\ldots,\tup{v_1},\ldots) \mid
    \sigma_1),
    (\tup{u}_2 \to \FCom_{n_2}(\ldots,\tup{v_2},\ldots) \mid
    \sigma_2), \dots$ such that
    $\tup{u}_1 \to \FCom_{n_1}(\ldots,\tup{v_1},\ldots),
    \tup{u}_2 \to \FCom_{n_2}(\ldots,\tup{v_2},\ldots),\linebreak
    \ldots \in \DD$.
    For infinitely many $i_1 < i_2 < i_3 < \dots$, we also have
    $\tup{u}_i \to \FCom_{n_i}(\ldots,\tup{v_i},\ldots) \in \SSS$,
    and for all $i$, we have
    $\tup{v}_i\sigma_i \itos \tup{u}_{i+1}\sigma_{i+1}$.
    Then we also have a corresponding infinite rewrite sequence
    \begin{align*}
    \tup{t} = \tup{u}_1\sigma_1 &\itosdetup
    C_1[\tup{u}_{i_1}\sigma_{i_1}]
    \someito{\SSS}
    C_1[\FCom_{n_{i_1}}(\ldots,\tup{v}_{i_1},\ldots)\sigma_{i_1}]\\
    &\someitos{(\DD\setminus\SSS)\cup\RR}\quad\;\;
    C_2[\tup{u}_{i_2}\sigma_{i_2}]
    \someito{\SSS}
    C_2[\FCom_{n_{i_2}}(\ldots,\tup{v}_{i_2},\ldots)\sigma_{i_2}]\\
    &\someitos{(\DD\setminus\SSS)\cup\RR}\quad\;\;
    \dots
    \end{align*}

    \noindent
    for some contexts $C_1,C_2,\ldots$
    (which result from rewrite steps with rules from $\DD$).

    \item Now consider the case $n \in \nat$. We use induction.
    For $n = 0$, the statement trivially holds.
    For the induction step, let $n > 0$.

\smallskip
    The (potentially infinite)
    chain tree $T$ has $m$ subtrees $T'_i$ with roots
    $(\tup{u}_i \to \FCom_{q_i}(\tup{v}_{i,1}, \ldots, \tup{v}_{i,q_i})
    \mid \sigma_i)$ such that
    $\tup{u}_i \to \FCom_{q_i}(\tup{v}_{i,1}, \ldots, \tup{v}_{i,q_i}) \in
    \SSS$ and the path in the chain tree from the root to
    $(\tup{u}_i \to \FCom_{q_i}(\tup{v}_{i,1}, \ldots, \tup{v}_{i,q_i})
    \mid \sigma_i)$
    has no outgoing edges from a node with a DT in $\SSS$.

\smallskip
    We show two separate statements in the induction step,
    which together
    let us conclude that
    $\Dh(\tup{t}, \itodetup) \geq |T'_1|_\SSS + \dots + |T'_m|_\SSS  = n$:
    \begin{gather}
    \text{For each $T'_i$, the term $\tup{u}_i\sigma_i$ has
    $\Dh(\tup{u}_i\sigma_i, \itodetup) \geq
    |T'_i|_\SSS$.} \label{belowS}\\[1.5ex]
    \begin{split}
    \text{There are contexts $C_1,\ldots,C_m$ such that\qquad}\\[-3pt]
    \tup{t} \itosdetup
    \FCom_m(C_1[\tup{u_1}\sigma_1],\ldots,C_m[\tup{u_m}\sigma_m]).
    \end{split}
    \label{aboveS}
    \end{gather}

    On \eqref{belowS}:
    Let $i \in \{1,\ldots,m\}$ be arbitrary and fixed, let
    $u = u_i$, let $\sigma = \sigma_i$, let $T'=T'_i$ (to ease
    notation).
    $T'$ is a chain tree for $u\sigma$ and its root is
    $(\tup{u} \to \FCom_{q}(\tup{v}_{1}, \ldots, \tup{v}_{q})
    \mid \sigma)$.
    Let this node have children
    $N_1 = (\tup{w}_1 \to \FCom_{r_1}(\ldots) \mid \mu_1)$,
    $\ldots$,
    $N_q = (\tup{w}_q \to \FCom_{r_q}(\ldots) \mid \mu_q)$.
    For the corresponding trees $T''_j$ with $N_j$ at the root,
    we have $|T''_j|_\SSS < |T'|_\SSS \leq n$ by construction,
    so the induction hypothesis is applicable to the terms
    $\tup{w}_j\mu_j$, and we get
    $\Dh(\tup{w}_j\mu_j, \itodetup)\linebreak \geq |T''_j|_\SSS$ for all
    $1 \leq j \leq q$. We construct a rewrite sequence with
    $\itodetup$ using at least $1 + |T''_1|_\SSS + \dots +
    |T''_q|_\SSS = |T'|_\SSS$ steps with a rule from $\SSS$ as follows:
    \begin{align*}
      \tup{u}\sigma &\someito{\SSS}
      \FCom_q(\tup{v}_{1}\sigma, \ldots, \tup{v}_{q}\sigma)\\
      & \someitos{\RR}
      \FCom_q(\tup{w}_{1}\mu_1, \ldots, \tup{v}_{q}\sigma)\\
      & \someitos{\RR}
      \dots\\
      & \someitos{\RR}
      \FCom_q(\tup{w}_{1}\mu_1, \ldots, \tup{w}_{q}\mu_q)
    \end{align*}

    \noindent
    With this rewrite sequence, we obtain \eqref{belowS} using the induction hypothesis:
    \begin{align*}
      &\quad\;\Dh(\tup{u}\sigma, \itodetup)\\
      &\geq
      1 + \Dh(\tup{w}_{1}\mu_1, \itodetup) + \dots +
      \Dh(\tup{w}_{q}\mu_q, \itodetup)\\
      &\geq 1 + |T''_1|_\SSS + \dots + |T''_q|_\SSS\\
      &= |T'|_\SSS
    \end{align*}

    On \eqref{aboveS}:
    Let the root of $T$ be
    $(\tup{\ell} \to \FCom_p(\tup{r}_1,\ldots,\tup{r}_p) \mid \nu)$.
    With a construction similar to the one used in the case
    $n = \omega$, we get:
    \begin{align*}
      \tup{t} = \tup{\ell}\nu &\someito{\DD}\qquad\;\:\:
      \FCom_p(\tup{r}_1\nu,\ldots,\tup{r}_p\nu)\\
      &\someitos{(\DD\setminus\SSS)\cup\RR}
      \FCom_p(C_1[\tup{u_1}\sigma_1],\ldots,\tup{r}_p\nu)\\
      &\someitos{(\DD\setminus\SSS)\cup\RR}
      \dots\\
      &\someitos{(\DD\setminus\SSS)\cup\RR}
      \FCom_p(C_1[\tup{u_1}\sigma_1],\ldots,C_m[\tup{u_m}\sigma_m])
    \end{align*}

    \noindent
    for some contexts $C_1,\ldots,C_m$ (which result from rewrite
    steps with rules from $\DD$). Note that here it suffices
    to reduce only in those
    subterms with a symbol $\tup{f}$ at their root
    that are on a path to one of the
    $C_i[\tup{u_i}\sigma_i]$, and depending on the tree structure,
    each $\tup{r}_j\nu$ may yield 0 or more of these $m$ terms
    (note that $p$ and $m$ are not necessarily equal).

\smallskip
    This concludes the induction step and hence the overall proof
    of part (a).
  \end{itemize}

  Part (b) follows from part (a), as shown in the following:
  \begin{align*}
    \irc{\langle \DD, \SSS, \RR \rangle}(n)
    &= \sup \{ \Cplx_{\langle \DD, \SSS, \RR \rangle}(\tup{t}) \mid
       t \in \BasicTerms^\RR, \tsize{t} \leq n \} & \text{by \defref{def:dt_problem}}\\
    &\leq \sup \{ \Dh(\tup{t}, {\itodetup}) \mid
       t \in \BasicTerms^\RR, \tsize{t} \leq n \} & \text{by part (a)}\\
    &\leq \sup \{ \Dh(s, {\itodetup}) \mid
       s \in \BasicTerms^{\RR\cup\DD}, \tsize{s} \leq n \}\\
    &= \irc{\detupTRS}(n)
  \end{align*}

  \vspace*{-6mm}
\end{proof}

\begin{example}[\exref{ex:to_relative} continued]
\label{ex:to_relative_upper}
For the relative TRS $\DTpar(\RR)/\RR$
from
\exref{ex:to_relative},
the
tool \aprove\ uses a transformation to
integer transition systems~\cite{naa:fro:bro:fuh:gie:17} followed by an application of the
complexity analysis tool \cofloco~\cite{CoFloCo,CoFloCoFM16}
to find
a bound $\irc{\DTpar(\RR)/\RR}(n) \in \OO(n)$ and
to deduce the bound $\pirc{\RR}(n) \in \OO(n)$ for the original TRS $\RR$
from the TPDB. In contrast, using the techniques of~\autoref{sec:para_complex}
without the transformation to a relative
TRS from \defref{def:to_relative}, \aprove\ finds only a bound $\pirc{\RR}(n) \in \OO(n^2)$.
\end{example}

Intriguingly, we can use our transformation
from \defref{def:to_relative} not only for
finding upper bounds, but also for \emph{lower} bounds on $\pirc{\RR}$.

\begin{theorem}[Lower Complexity Bounds for $\detup(\langle \DD, \SSS, \RR \rangle)$ from
  $\langle \DD, \SSS, \RR \rangle$]
\label{thm:lower_rel}
Let $\langle \DD, \SSS, \RR \rangle$ be a
DT
problem.
Then
\begin{enumerate}
\item[(a)] there is a type assignment
s.t.\ for all
$\ell \to r \in \DD \cup \RR$, $\ell$ and $r$ get the same type, and
for all well-typed $t \in \BasicTerms^{\DD \cup \RR}$,
$\Cplx_{\langle \DD, \SSS, \RR \rangle}(\tup{t}) \geq
\Dh(t, \itodetup)$, and
\item[(b)] $\irc{\langle \DD, \SSS, \RR \rangle}(n) \geq
\irc{\detupTRS}(n)$.
\end{enumerate}
\end{theorem}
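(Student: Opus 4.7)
The plan is to establish part (a) by constructing, from any $\itodetup$-rewrite sequence starting at $t$, a $(\DD,\RR)$-chain tree for $\tup{t}$ whose number of $\SSS$-labelled nodes is at least the number of $\SSS$-steps in the sequence, and then to derive part (b) from part (a) by a case analysis on the root of each basic term for the relative TRS.

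For the type assignment, I introduce a fresh type $\tau_\sharp$ and assign to each sharp symbol $\tup{f}$ the same argument types as $f$ under a trivial one-sort discipline for $\Signature$, with $\tup{f}$ and every $\FCom_n$ returning $\tau_\sharp$ and each $\FCom_n$ taking $n$ arguments of type $\tau_\sharp$. Under this assignment all rules in $\DD \cup \RR$ preserve types, and any well-typed basic term $t \in \BasicTerms^{\DD \cup \RR}$ contains sharp symbols only at its root, if at all. I then split on this root: if it is an $\RR$-defined (non-sharp) symbol, then $t$ contains no sharp symbol whatsoever; since $\RR$-rules never introduce sharp symbols while every DT has a sharp root on its left-hand side, no $\DD$-step and hence no $\SSS$-step can ever be taken from $t$, giving $\Dh(t, \itodetup) = 0 \leq \Cplx_{\langle \DD, \SSS, \RR \rangle}(\tup{t})$.

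If instead $t = \tup{f}(t_1,\ldots,t_n)$ has sharp root, then $\tup{t} = t$. I would strengthen the induction hypothesis to cover every well-typed sharp-rooted term whose arguments are all in $\RR$-normal form, and induct on the total number of raw rewrite steps in a chosen optimal sequence. The first step must be at the root (the arguments being already normal forms), applying some DT $d : \tup{f}(\ell_1,\ldots,\ell_n) \to \FCom_m(\tup{v}_1,\ldots,\tup{v}_m) \in \DD$ with matcher $\sigma$. Commutation of innermost reductions at parallel positions lets me reorder the remainder so that the $m$ subterms $\tup{v}_i\sigma$ under the resulting $\FCom_m$-term are processed independently; each subsequence first reduces the arguments of $\tup{v}_i\sigma$ by $\RR$ to produce $\tup{v}_i\sigma \itos \tup{p}_i\delta_i$ with $\tup{p}_i\delta_i$ in argument normal form, then continues with DT applications and further $\RR$-reductions. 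By the induction hypothesis applied to each strictly shorter subsequence, I obtain a chain tree $T_i$ rooted at $(\tup{p}_i \to \FCom_{m_i}(\ldots) \mid \delta_i)$ with $|T_i|_\SSS$ at least the $\SSS$-count of that subsequence. Attaching the $T_i$ as children of the root node $(d \mid \sigma)$ yields a chain tree for $\tup{t}$ whose $\SSS$-count is at least the $\SSS$-count of the original sequence, namely $\Dh(t, \itodetup)$.

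For part (b), I argue separately for each $s \in \BasicTerms^{\DD \cup \RR}$ with $\tsize{s} \leq n$. If $s$ has an $\RR$-defined root, the typing argument above yields $\Dh(s, \itodetup) = 0$. Otherwise $s = \tup{f}(s_1,\ldots,s_k)$ has sharp root; setting $t = f(s_1,\ldots,s_k)$ gives $t \in \BasicTerms^\RR$ with $\tup{t} = s$ and $\tsize{t} = \tsize{s}$, so part~(a) yields $\Cplx_{\langle \DD,\SSS,\RR \rangle}(\tup{t}) \geq \Dh(s, \itodetup)$. Taking suprema over both sides delivers the desired inequality. The main technical obstacle I anticipate is the rigorous justification that an arbitrary $\itodetup$-rewrite sequence can be reordered into the canonical top-down form that mirrors the chain tree structure without changing its $\SSS$-count; this reduces to standard commutation arguments for innermost reductions at parallel positions below the non-rewritable compound symbols $\FCom_n$, but needs explicit verification.
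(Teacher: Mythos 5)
Your type assignment, the split on the root symbol of a basic term, and the construction of a chain tree from an $\itodetup$-derivation all match the paper's proof in substance; the only presentational difference in part (a) is that you organise the chain-tree construction as a clean induction with a commutation argument for parallel positions below $\FCom_m$, whereas the paper inserts nodes iteratively along the given derivation. However, your part (b) has a genuine gap. The quantity $\irc{\detupTRS}(n)$ is a supremum over \emph{all} of $\BasicTerms^{\DD\cup\RR}$, and this set contains \emph{ill-typed} basic terms: the constructors of $\DD\cup\RR$ include the compound symbols $\FCom_k$ (and possibly sharp symbols occurring only on right-hand sides of $\DD$), so terms such as $\tup{f}(\FCom_0)$ or $f(c(\FCom_0))$ are legitimate start terms. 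For a sharp-rooted ill-typed $s=\tup{f}(s_1,\ldots,s_k)$ your reduction fails: $t=f(s_1,\ldots,s_k)$ is \emph{not} in $\BasicTerms^{\RR}$ because the $s_i$ are not in $\TT(\ConSyms^{\RR},\VV)$, so part (a) gives you nothing, and there is no obvious term in $\BasicTerms^{\RR}$ of size at most $n$ whose $\Cplx$-value dominates $\Dh(s,\itodetup)$. The paper closes exactly this hole by invoking the persistence of innermost runtime complexity under type introduction \cite{irc_persistent}, which licenses restricting the supremum to well-typed start terms before applying part (a). You need that result (or an explicit ad hoc argument that ill-typed start terms cannot increase the supremum, which is not trivial because, e.g., replacing $\FCom_k$-subterms by fresh constants interacts with non-left-linear patterns).

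A second, smaller omission: in part (a) you induct on the length of ``a chosen optimal sequence,'' which presupposes $\Dh(t,\itodetup)\in\nat$ and that the supremum is attained. The case $\Dh(t,\itodetup)=\omega$ (an infinite derivation, or unboundedly long finite ones) needs a separate argument producing a chain tree with infinitely many, or unboundedly many, $\SSS$-nodes; the paper's iterative construction handles arbitrary (possibly infinite) derivations uniformly. Your commutation claim itself is fine in the innermost setting, since after the root DT step all sharp symbols sit at the roots of the pairwise parallel arguments of $\FCom_m$ and all deeper steps use only $\RR$, but you should also note that the $\RR$-normalisation of the arguments of each $\tup{v}_i\sigma$ necessarily precedes the next root DT step by innermost-ness, which is what makes the resulting node satisfy the normal-form condition in the definition of chain trees.
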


\begin{proof}
We first consider the proof for part (a).

\smallskip
We use the following (many-sorted first-order
monomorphic) type assignment $\Theta$ with two sorts $\sbot$
and $\sDT$, where the arities of the symbols are respected (note that here all
arguments of a given symbol have the same type):
\begin{align*}
\Theta(f) & = \sbot \times \dots \times \sbot \to \sbot
              \text{ for $f$ in $\symsof{\RR}$}\\
\Theta(\tup{f}) & = \sbot \times \dots \times \sbot \to \sDT
                    \text{ for $\tup{f}$ a sharp symbol}\\
\Theta(\FCom_k) & = \sDT \times \dots \times \sDT \to \sDT
\end{align*}
With this type assignment, for all rules $\ell \to r \in \DD \cup
\RR$, $\ell$ and $r$ are well typed and have the same type: if $\ell
\to r \in \RR$, then all occurring symbols have the same result type
$\sbot$, which carries over to $\ell$ and $r$. And if $\ell \to r \in \DD$, then $\ell$ and $r$ have type
$\sDT$. To see that $\ell$ and $r$ are well typed, consider that every
term $\ell$ has the shape $\tup{f}(s_1,\ldots,s_n)$, where $\tup{f}$
has result type $\sDT$ and expects all arguments to have type $\sbot$,
while all $s_i$ contain only subterms of type $\sbot$. Similarly,
$r$ has the shape $\FCom_k(\tup{f}_1(t_{1,1},\ldots,t_{1,n_1}),
\ldots, \tup{f}_k(t_{k,1},\ldots,t_{k,n_k}))$.
$\FCom_k$ has result type $\sDT$ and expects all arguments to have
type $\sDT$. This is the case since all $\tup{f_i}$ have result type
$\sDT$. And all $\tup{f_i}$, which are right below the root, expect
their arguments $t_{i,j}$ to have result type $\sbot$. This is the
case by construction.

In the following, we consider basic terms that
are well typed according to $\Theta$ as start terms.
For our relative TRS $\SSS/((\DD\setminus\SSS)\cup\RR)$,
we have
the following
two kinds of well-typed basic terms that we need to consider:

\begin{description}
\item[Case 1:]
$t = f(t_1,\ldots,t_n)$ with $f \in \DefSyms$ and $t_1,\ldots,t_n
\in \TT(\ConSyms,\VV)$. This term and all its subterms are of type
$\sbot$. Thus, this term can be rewritten by rules from $\RR$, but not
by rules from $\DD$ (and $\SSS$), which all have type $\sDT$. As
rewriting preserves the type of terms, $t$ is a normal form w.r.t.\ the relations
$\someito{\Theta(\SSS/((\DD\setminus\SSS)\cup\RR))}$ and
$\someito{\SSS/((\DD\setminus\SSS)\cup\RR)}$, and
$\Dh(t,\someito{\SSS/((\DD\setminus\SSS)\cup\RR)}) = 0$.
Since $\Cplx_{\langle \DD, \SSS, \RR \rangle}(s) \geq 0$
regardless of $s$, the claim follows for this case.

\medskip

\item[Case 2:]
$t = \tup{f}(t_1,\ldots,t_n)$ with $f \in \DefSyms$ and
$t_1,\ldots,t_n \in \TT(\ConSyms,\VV)$.
If $t$ is a normal form, there is no tree, and
$\Dh(t, \someito{\SSS/((\DD\setminus\SSS)\cup\RR)}) = 0 = \Cplx_{\langle \DD, \SSS, \RR \rangle}(t)$.

Otherwise, we can convert any ${\someito{\SSS \cup ((\DD\setminus\SSS)\cup\RR)}} =
{\someito{\DD \cup \RR}}$ rewrite
sequence to a $(\DD,\RR)$-chain tree $T$ for $t$ such that
$\Dh(t, \someito{\SSS/((\DD\setminus\SSS)\cup\RR)}) = |T|_\SSS$,
including any rewrite sequence that witnesses
$\Dh(t, \someito{\SSS/((\DD\setminus\SSS)\cup\RR)})$
in the following way:

\smallskip
As $t$ is a basic term, the first step in the rewrite sequence
rewrites at the root of the term. Since only rules from
$\DD$ are applicable to terms with $\tup{f}$ at the root,
this step uses a DT $\tup{s} \to \FCom_k(\ldots)$ from $\DD$.
With $\sigma$ as the used matcher for the rewrite step,
we obtain the root node $(\tup{s} \to \FCom_k(\ldots) \mid \sigma)$.

\smallskip
Now assume that we have a partially constructed chain tree $T'$ for the
rewrite sequence so far, which we have represented up until
the term $s$ that resulted from a $\someito{\DD}$ step.

If there are no further $\someito{\DD}$ steps in the rewrite sequence,
we have completed our chain tree $T = T'$
as the remaining $\someito{\RR}$
suffix of the rewrite sequence does not contribute
to $\Dh(t,\someito{\SSS/((\DD\setminus\SSS)\cup\RR)})$
(only steps using rules from $\SSS \subseteq \DD$ are counted).

Otherwise, our remaining rewrite sequence has the shape
$s \someitos{\RR} u \someito{\DD} v \someitom{\DD \cup \RR} \dots$
for some $m \in \nat \cup \{ \omega \}$.
The step $u \someito{\DD} v$ takes place at position $\pi$,
using the DT $\tup{p} \to \FCom_l(\tup{q}_1,
\ldots,\tup{q}_l) \in \DD$ and the matcher $\mu$.

We can reorder the rewrite steps
$s \someitos{\RR} u$ by advancing all $\someito{\RR}$ steps
at positions $\tau > \pi$, yielding
$s \someitos{\RR,>\pi} s' \someitos{\RR,\not>\pi} u$.
(This reordering is possible in our innermost setting.)
Here $\someito{\RR,>\pi}$ denotes an innermost rewrite step
using rules from $\RR$ at a position $\tau > \pi$, and
$\someito{\RR,\not>\pi}$ denotes an innermost rewrite step
using rules from $\RR$ at a position $\tau' \not > \pi$.
Now we change our remaining rewrite sequence to
$s \someitos{\RR,>\pi} s' \someito{\DD} u' \someitos{\RR,\not>\pi} v \someitom{\DD \cup \RR} \dots$.
Let $\FCom_k(\tup{q'}_1, \ldots,\tup{q'}_k) \delta = s|_\pi$.
Since the $s' \someito{\DD} u'$ rewrite step has not been encoded yet,
there is a $j$ such that $\tup{q'}_j \delta \someito{\RR} \tup{p}\mu$
has not yet been used in the construction.
Therefore, there exists a node $N = (\tup{p'} \to \FCom_k(\tup{q'}_1,
\ldots,\tup{q'}_k) \mid \delta)$
where for some $j$, the subterm $\tup{q'}_j\delta$ in the DT of $N$
has not yet been used for this purpose in the
construction before.
We encode the subsequence $s \someitos{\RR,>\pi} s' \someito{\DD} u'$
by adding the node $(\tup{p} \to \FCom_l(\tup{q}_1,
\ldots,\tup{q}_l) \mid \mu)$ to $T'$ as a child to $N$.

We obtain the chain tree $T''$, which we extend further
by encoding the rewrite sequence
$u' \someitos{\RR,\not>\pi} v \someitom{\DD \cup \RR} \dots$
following the same procedure.

Since our construction adds a node with a DT from $\SSS$ in
the first component of the label
whenever the rewrite sequence uses a rule from $\SSS$, we
have
$\Dh(t, \someito{\SSS/((\DD\setminus\SSS)\cup\RR)}) = |T|_\SSS$
as desired.
This concludes the proof for part (a).

We now prove part (b).

Innermost runtime complexity is known to be a persistent
property w.r.t.\ type introduction \cite{irc_persistent}.
For our relative TRS $\detupTRS$, this means that we may introduce
an arbitrary (many-sorted first-order monomorphic) type assignment
$\Theta$ for all symbols in the considered signature
such that the rules in $\RR\cup\DD$ are well typed.
We obtain a typed relative TRS $\Theta(\detupTRS)$, and
$\irc{\Theta(\detupTRS)}(n) = \irc{\detupTRS}(n)$
holds. Thus, only basic terms that are well typed according to
$\Theta$
need to be considered as start terms for $\irc{\detupTRS}$.
We write $\Theta(\BasicTerms^{\DD\cup\RR})$ for the set of well-typed
basic terms for $\detupTRS$.
\smallskip

We use the type assignment $\Theta$ from part (a)
to restrict the set of basic terms as start terms.
With this type assignment, we obtain:
\begin{align*}
&\phantom{{}={}}\irc{\detupTRS}(n)\\
&= \irc{\Theta(\detupTRS)}(n)
   & \text{by \cite{irc_persistent}}\\
&= \sup \{ \Dh(t, {\itodetup}) \mid
t \in \Theta(\BasicTerms^{\RR\cup\DD}), \tsize{t} \leq n \}
   & \text{by \defref{def:rel}}\\
&\leq \sup \{ \Cplx_{\langle \DD, \SSS, \RR \rangle}(\tup{t}) \mid
t \in \Theta(\BasicTerms^{\RR\cup\DD}), \tsize{t} \leq n \}
   & \text{by part (a)}\\
&\leq \sup \{ \Cplx_{\langle \DD, \SSS, \RR \rangle}(\tup{t}) \mid
t \in \BasicTerms^{\RR\cup\DD}, \tsize{t} \leq n \}
   & \text{drop types $\Rightarrow$ more start terms}\\
&\leq \irc{\langle \DD, \SSS, \RR \rangle}(n)
\end{align*}

\end{description}
\end{proof}

\thmref{thm:upper_rel}
and \thmref{thm:lower_rel}
hold
regardless of
whether the original
DT problem was obtained from a TRS with sequential or with parallel
evaluation.
So while this
kind of connection between DT (or DP) problems and
relative rewriting may be
folklore in the community, its application
to convert a TRS whose \emph{parallel} complexity is sought
to a TRS with the same \emph{sequential} complexity is new.
\smallskip

\begin{example}[\exref{ex:to_relative_upper} and \exref{ex:non-overlapping} continued]
\label{ex:lower}
We continue \exref{ex:to_relative_upper}.
\thmref{thm:lower_rel}
implies that a lower bound for
$\irc{\DTpar(\RR)/\RR}(n)$ of the relative TRS $\DTpar(\RR)/\RR$
from \exref{ex:to_relative} carries over to
$\langle \DTpar(\RR), \DTpar(\RR), \RR \rangle$ and,
presuming that $\pito$ is confluent, also to
$\pirc{\RR}(n)$ of the original TRS $\RR$ from the
TPDB.
\aprove uses rewrite lemmas \cite{LowerBounds} to find
the lower bound $\irc{\DTpar(\RR)/\RR}(n) \in \Omega(n)$.
Together with \exref{ex:to_relative_upper}, we have
automatically inferred
that this complexity bound is \emph{tight} if we can also prove
confluence of $\pito$:
$\pirc{\RR}(n) \in \Theta(n)$.
We shall see the missing confluence proof for $\pito$ in
\autoref{sec:confluence}.
\end{example}

Note that \thmref{thm:canonical_pdt_problem} requires confluence of
$\pito$ to derive lower bounds for $\pirc{\RR}$
from lower complexity bounds of the canonical parallel DT problem.
So
to use \thmref{thm:lower_rel} to search for \emph{lower}
complexity bounds with existing techniques~\cite{LowerBounds},
we need a criterion for confluence of parallel-innermost rewriting.
\autoref{sec:confluence} shall be dedicated to proposing
  two sufficient syntactic criteria for confluence that can
  be checked automatically.

\section{Confluence of parallel-innermost rewriting}
\label{sec:confluence}

Recall that confluence of a relation $\to$ means that
whenever we have $t_1 \to^* t_2$ and $t_1 \to^* t_3$, there is also some
$t_4$ with $t_2 \to^* t_4$ and $t_3 \to^* t_4$. In other words,
if $\to$ is confluent, any non-determinism
between steps with $\to$ that \emph{temporarily}
leads to different
outcomes can always be undone to reach a common object.
Methods for analysis of confluence have been
a topic of interest for many years (see, e.g.,
\cite{KB70,Rosen73,Huet80} for early work), motivated
both by applications in theorem proving and as a
topic of study in its own right. In recent years,
the development of automated tools for confluence analysis
of (sequential) term rewriting has flourished.
This is witnessed by the Confluence Competition
\cite{coco}, which has been running annually since 2012
to compare state-of-the-art tools for automated confluence analysis.
However, we are not aware of any tool at the
Confluence Competition that currently supports the analysis
of parallel-innermost rewriting.

As an alternative, it would be tempting to use an
existing tool for confluence analysis of standard
rewriting, possibly restricted to innermost
rewriting, as a decidable sufficient criterion for confluence
of parallel-innermost rewriting.
However, the following example shows that this approach
is in general not sound.

\begin{example}[Confluence of $\ito$ \journal{or $\someto{\RR}$} does not Imply Confluence of
$\pito$]
\label{ex:no_confluence}
To see that we cannot prove confluence of $\pito$ just by
using a standard off-the-shelf tool for
confluence analysis of innermost or full rewriting \cite{coco}, consider the TRS
$\RR = \{\Fa \to \Ff(\Fb,\Fb), \Fa \to \Ff(\Fb,\Fc), \Fb \to \Fc,
\Fc \to \Fb\}$. For this TRS, both $\ito$ and $\someto{\RR}$ are confluent.
However, $\pito$ is not confluent: we can rewrite both
$\Fa \pito \Ff(\Fb,\Fb)$ and $\Fa \pito \Ff(\Fb,\Fc)$,
yet there
is no term $v$ such that $\Ff(\Fb,\Fb) \pitos v$ and
$\Ff(\Fb,\Fc) \pitos v$. The reason is that the only possible rewrite
sequences with $\pito$ from these terms are
$\Ff(\Fb,\Fb) \pito \Ff(\Fc,\Fc) \pito \Ff(\Fb,\Fb) \pito \dots$
and
$\Ff(\Fb,\Fc) \pito \Ff(\Fc,\Fb) \pito \Ff(\Fb,\Fc) \pito \dots$,
with no terms in common.
\end{example}

Thus, in general a confluence proof for $\someto{\RR}$ or $\ito$
does not imply confluence for $\pito$. Yet it seems that other criteria on
$\ito$ or $\someto{\RR}$ may be sufficient: intuitively, the reason
for non-confluence for $\pito$ in \exref{ex:no_confluence}
is the non-termination of $\ito$.
\begin{proposition}
\label{conj:terminating_pito_ito}
Let $\RR$ be a TRS whose innermost rewrite relation $\ito$ is
terminating. Then $\ito$ is confluent iff $\pito$ is
confluent.
\end{proposition}

The next proposition is motivated by applying techniques
  for proving confluence of $\pito$, as developed in this paper,
  to proving confluence of $\ito$.

\begin{proposition}
\label{conj:pito_confluent_implies_ito_confluent}
Let $\RR$ be a (not necessarily terminating) TRS.
If $\pito$ is confluent, then $\ito$ is confluent.
\end{proposition}

We
conjectured \propref{conj:terminating_pito_ito} first in
our informal extended abstract \cite[Conjecture 1]{iwc2022}
and \propref{conj:pito_confluent_implies_ito_confluent}
in the preliminary conference version of this paper
\cite[Conjecture 1]{lopstr2022}.
We are grateful to van Oostrom for providing proofs for
both statements \cite{vvO} and closing our conjectures.

\subsection{Confluence of $\pito$ for non-overlapping rules}

We recall the standard notions of \emph{uniformly confluent} and
\emph{deterministic} relations, which are special cases of
confluent relations. In the following, we will use these
notions to identify sufficient
criteria for confluence of parallel-innermost term rewriting.

\begin{definition}
A relation $\to$ is
\emph{uniformly confluent} iff $s \to t$ and $s \to u$ imply that
 $t = u$ or that there exists an object $v$ with $t \to v$ and $u \to v$, and
$\to$ is \emph{deterministic} iff for every $s$ there is at most one
$t$ with $s \to t$.
\end{definition}

Let us work towards a first sufficient criterion for confluence of parallel-innermost rewriting.
Confluence means:
if a term $s$ can be rewritten to two
different terms $t_1$ and $t_2$ in 0 or more steps,
it is
always possible to rewrite $t_1$ and $t_2$ in 0 or more steps
to
the same
term $u$.
For
$\pito$, the redexes that get rewritten are
fixed: all
innermost redexes simultaneously.
Thus,
$s$ can
rewrite to two \emph{different} terms $t_1$ and $t_2$
only if at least one of these redexes can be rewritten in two
different ways using
$\ito$.

Towards a sufficient criterion for confluence of parallel-innermost
rewriting, we introduce the following standard definitions used in
confluence analysis:

\begin{definition}[Unifier, Most General Unifier, see also \cite{BaaderNipkow}]
Two terms $s$ and $t$ \emph{unify} iff there exists a substitution
$\sigma$ (called a \emph{unifier} of $s$ and $t$)
such that $s\sigma = t\sigma$. A unifier $\sigma$ of $s$ and $t$
is called a \emph{most general unifier} of $s$ and $t$ iff
for all unifiers $\delta$ of $s$ and $t$ there exists
some substitution $\delta'$ such that
$(s\sigma)\delta' = s \delta = t \delta = (t\sigma)\delta'$.
\end{definition}

Critical pairs capture the local non-determinism that arises if
a given redex may be rewritten by different rewrite rules
or at different positions in the same redex. They are defined
with the help of most general unifiers to determine which
instances of left-hand sides may lead to local non-determinism.

\begin{definition}[Critical Pair, Critical Peak, Critical Overlay \cite{KB70}, see also \cite{BaaderNipkow}]
For a given TRS $\RR$, let $\ell \to r, u \to v \in \RR$
be rules whose variables have been renamed apart and let
$\pi$ be a position in $u$ such that $\ell|_\pi \notin \VV$.
If $\pi = \varepsilon$, we require that $\ell \to r$ and $u \to v$
are not variants of the same rule, i.e., that we cannot obtain $\ell \to r$
by renaming variables in $u \to v$.
If $\ell$ and $u|_\pi$ unify with most general unifier $\sigma$,
then we call $\cp{u\sigma[r\sigma]_\pi}{v\sigma}$ a
\emph{critical pair},
resulting from the \emph{critical peak} (i.e., local non-determinism)
between the steps $u\sigma \someto{\RR} v\sigma$ and
$u\sigma = u\sigma[\ell\sigma]_\pi \someto{\RR} u\sigma[r\sigma]_\pi$.
If $\pi = \varepsilon$, we call the critical pair
$\cp{u\sigma[r\sigma]_\pi}{v\sigma} = \cp{r\sigma}{v\sigma}$
a \emph{critical overlay}, and we may write
$\ocp{r\sigma}{v\sigma}$.
\end{definition}

A critical peak is the concrete local non-determinism for rewriting
a term in two different ways (with overlapping redexes, or using
different rewrite rules). It results in a critical pair that describes
this non-determinism in an abstract way. Finite TRSs have only
finitely many critical pairs, which is very useful for analysis of
confluence.

\begin{example}
\label{ex:critical_pairs}
Consider the (highly artificial, but illustrative) TRS
$\RR = \{ \Ff(\Fa) \to \Fb, \Ff(x) \to \Fc, \Fa \to \Fd  \}$.
The rewrite relation $\someto{\RR}$ of this TRS is not
confluent: for example, we can rewrite $\Ff(\Fa) \someto{\RR} \Fb$
using the first rule and $\Ff(\Fa) \someto{\RR} \Fc$ using the
second rule, and neither $\Fb$ nor $\Fc$ can be rewritten any further.

\medskip
We have the following critical pairs/overlays for $\RR$:
\[
\begin{array}{cr}
\ocp{\Fb}{\Fc} & \text{from the first and second rule}\\
\ocp{\Fc}{\Fb} & \text{from the first and second rule}\\
\cp{\Ff(\Fd)}{\Fb} & \text{from the first and third rule}
\end{array}
\]
The left-hand sides of the first and the second rule,
$\Ff(\Fa)$ and $\Ff(x)$, unify at the root position $\varepsilon$
with the most general unifier $\sigma = \{ x \mapsto \Fa \}$.
Thus, these two rules have a critical pair, and since the
unification was at root position, this critical pair is also
a critical overlay. If we instantiate both rules using $\sigma$, we get
$\Ff(\Fa) \to \Fb$ and $\Ff(\Fa) \to \Fc$. The right-hand sides
of these instantiated rules are $\Fb$ and $\Fc$, and they are
the components of the critical overlays $\ocp{\Fb}{\Fc}$
and $\ocp{\Fc}{\Fb}$.

Note that every critical \emph{overlay} comes together with
its mirrored version: if two different rules
$\ell_1 \to r_1$ and $\ell_2 \to r_2$
unify at the root position (and can thus both be used for rewriting a
redex $\ell_1 \sigma = \ell_2 \sigma$ at the root), there is a
symmetry and thus a choice which of the terms $r_1 \sigma$
and $r_2 \sigma$ to write on the left and which one on the right
of the critical overlay. This is why the first and second rule
together produce \emph{two} critical overlays.

Now let us consider our first and our third rule.
The left-hand side $\Ff(\Fa)$ of the first rule has at its position $1$
the non-variable subterm $\Fa$ that unifies with the left-hand side
of the third rule, $\Fa$, using the identity substitution as the
most general unifier. Thus, we consider the instantiated
left-hand side of the first rule, $\Ff(\Fa)$, as the redex that
may be rewritten either
at position $1$ with the third rule, to $\Ff(\Fd)$, or
at the root with the first rule, to $\Fb$.
This leads to the critical pair $\cp{\Ff(\Fd)}{\Fb}$.

Note that critical pairs that are not critical overlays do not have
the symmetry mentioned earlier: the ``inside'' rewrite step at position
$\pi > \varepsilon$ is always written on the left.
\end{example}

\begin{definition}[Non-Overlapping]
A TRS $\RR$ is \emph{non-overlapping} iff $\RR$ has no critical pairs.
\end{definition}

A sufficient criterion that a given redex has a unique result from a
rewrite step is given in the following.

\begin{lemma}[\cite{BaaderNipkow}, Lemma 6.3.9]
\label{lem:unique_reduct}
If a TRS $\RR$ is non-overlapping,
$s \someto{\RR} t_1$ and
$s \someto{\RR} t_2$ with the redex of both rewrite steps at the same
position, then $t_1 = t_2$.
\end{lemma}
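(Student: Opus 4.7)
The plan is to unpack the two rewrite steps and then invoke non-overlapping to force the two applied rules to be variants of the same rule, which will yield equal results. Concretely, I would write the two steps as $s = s[\ell_1\sigma_1]_\pi \someto{\RR} s[r_1\sigma_1]_\pi = t_1$ for some rule $\ell_1 \to r_1 \in \RR$ with matcher $\sigma_1$, and $s = s[\ell_2\sigma_2]_\pi \someto{\RR} s[r_2\sigma_2]_\pi = t_2$ for some $\ell_2 \to r_2 \in \RR$ with matcher $\sigma_2$. After renaming variables apart, $\ell_1, r_1$ and $\ell_2, r_2$ can be assumed to share no variables, so $\sigma_1 \cup \sigma_2$ is a well-defined substitution, and the equality $\ell_1\sigma_1 = s|_\pi = \ell_2\sigma_2$ shows that $\ell_1$ and $\ell_2$ are unified by $\sigma_1 \cup \sigma_2$.

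Next, I would observe that the root position $\varepsilon$ is a non-variable position of $\ell_1$ (since left-hand sides of rewrite rules are not variables), so the unification of $\ell_1$ and $\ell_2|_\varepsilon = \ell_2$ satisfies the structural preconditions for a critical peak at $\varepsilon$. If $\ell_1 \to r_1$ and $\ell_2 \to r_2$ were not variants of the same rule, the definition of critical pair would yield a critical overlay, contradicting the non-overlapping hypothesis. Hence the two rules must be variants: there is a variable renaming $\rho$ with $\ell_2 = \ell_1\rho$ and $r_2 = r_1\rho$.

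It then remains to transfer this equivalence of rules to an equivalence of results. From $\ell_1\sigma_1 = \ell_1\rho\sigma_2$, the substitutions $\sigma_1$ and $\rho\sigma_2$ agree on $\VV(\ell_1)$. Using the standard TRS condition $\VV(r_1) \subseteq \VV(\ell_1)$ recalled earlier in the paper, this gives $r_1\sigma_1 = r_1(\rho\sigma_2) = (r_1\rho)\sigma_2 = r_2\sigma_2$, and hence $t_1 = s[r_1\sigma_1]_\pi = s[r_2\sigma_2]_\pi = t_2$, as desired.

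I do not anticipate a conceptual obstacle; the argument is essentially bookkeeping. The only delicate point is making sure that the critical-pair definition is applied at $\pi = \varepsilon$ of $\ell_1$ (not of $s$), that variable renamings between the two rules are handled so that $\sigma_1$ and $\sigma_2$ can be combined, and that the variant case is correctly excluded by non-overlapping. No induction or case analysis beyond ``variants vs.\ not variants'' is needed.
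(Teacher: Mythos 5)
The paper does not prove this lemma itself---it is imported verbatim from Baader and Nipkow (Lemma 6.3.9), so there is no in-paper proof to compare against. Your argument is correct and is essentially the standard textbook proof: since both redexes occupy the same position, the two left-hand sides unify at the root via the combined matcher, non-overlappingness forces the rules to be variants (otherwise a critical overlay would exist), and the variable condition $\VV(r)\subseteq\VV(\ell)$ then transfers equality of the instantiated left-hand sides to the instantiated right-hand sides. The only cosmetic caveat is that forming $\sigma_1\cup\sigma_2$ tacitly restricts each matcher to the variables of its own (renamed-apart) left-hand side, which is harmless bookkeeping.
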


In other words, for non-overlapping TRSs,
rewriting a specific redex has a deterministic result.
In a parallel-innermost rewrite step $s \pito t$,
we rewrite all innermost redexes in $s$ at the same time, so the choice
of redexes to use is also deterministic.
Together, this means that in a rewrite step $s \pito t$, the term $t$
is uniquely determined for $s$, so the relation $\pito$ is deterministic
as well.

With the above reasoning, this lemma directly gives us a sufficient criterion for confluence of
\emph{parallel-innermost} rewriting by determinism.

\begin{corollary}[Confluence of Parallel-Innermost Rewriting]
\label{cor:confluence}
If a TRS $\RR$ is non-overlapping, then $\pito$ \journal{is deterministic and hence}
confluent.
\end{corollary}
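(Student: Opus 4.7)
The strategy is to prove the two stated claims in sequence: first determinism of $\pito$, then that determinism implies confluence. The corollary essentially packages the hint given in the paragraph right before its statement.

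For determinism, my plan is to show that for every term $s$ there is at most one $t$ with $s \pito t$. I would argue by induction on the structure of $s$ using \defref{def:pito}: case (a) says $s$ is itself the (unique) innermost redex and we rewrite at the root, and case (b) reduces the parallel-innermost step on $f(s_1,\ldots,s_n)$ to a parallel step on each $s_i$ (or leaving $s_i$ unchanged if it is in normal form). Being in normal form is a property of the term, so the partition ``which arguments get rewritten'' is fixed by $s$ itself. This leaves me with two sources of a priori non-determinism at each innermost redex position: which rewrite rule to apply and which matcher to choose. Both are controlled simultaneously by \lemref{lem:unique_reduct}: since $\RR$ is non-overlapping, any two innermost $\RR$-steps at the same position in the same term produce the same result. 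Combining these observations shows that $t$ is uniquely determined by $s$, so $\pito$ is deterministic.

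For the confluence part, I would give a short separate argument (perhaps even as an auxiliary lemma) that any deterministic relation $\to$ is confluent. Suppose $s \to^* t_1$ in $n_1$ steps and $s \to^* t_2$ in $n_2$ steps with, without loss of generality, $n_1 \le n_2$. By determinism, each successor of a term in the sequence is uniquely determined, so the two reduction sequences must coincide for their first $n_1$ steps, yielding $t_1$ at that point in both; then continuing the second sequence for $n_2 - n_1$ further steps gives $t_1 \to^* t_2$. Hence $t_2$ is a common reduct of $t_1$ and $t_2$, witnessing confluence. Applying this with $\to = \pito$ finishes the proof.

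I do not expect serious obstacles here, as the real content sits in \lemref{lem:unique_reduct}, which is cited directly. The only point requiring a bit of care is to make sure the inductive description of case (b) in \defref{def:pito} actually forces the set of innermost-redex positions used in the step to be exactly those determined by $s$, so that non-overlappingness is applied at each of them independently; after that, the ``deterministic $\Rightarrow$ confluent'' step is a one-line induction on the length of the longer reduction.
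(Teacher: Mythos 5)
Your proposal is correct and follows essentially the same route as the paper: the paper's own argument (given in the paragraph preceding the corollary) likewise combines \lemref{lem:unique_reduct} with the observation that the set of rewritten redexes in a $\pito$-step is fixed by the term, concludes that $\pito$ is deterministic, and then notes that determinism implies confluence. You merely make the two ingredients explicit -- a structural induction along \defref{def:pito} for determinism and a short induction on derivation length for ``deterministic $\Rightarrow$ confluent'' -- which the paper leaves informal.
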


\begin{remark}
The reasoning behind \corref{cor:confluence} can be generalised
to \emph{arbitrary} parallel rewrite strategies
where the redexes that are rewritten are fixed,
such as (max-)parallel-\emph{outermost} rewriting
\cite{LoopsUnderStrategies}.
\end{remark}

\begin{remark}
Note that in contrast to similar confluence criteria for
full rewriting $\someto{\RR}$ \cite{Rosen73,Huet80},
here $\RR$ is \emph{not} required to be left-linear (i.e., $\RR$
may have rewrite rules where the left-hand side has more than one
occurrence of the same variable).

\corref{cor:confluence} is similar to a result by Gramlich for (uniform) confluence
of (sequential) innermost rewriting $\ito$ for non-overlapping TRSs
that are not necessarily left-linear \cite[Lemma 3.2.1, Corollary 3.2.2]{GramlichPhD}.
For parallel-innermost rewriting, we have the stronger property that $\pito$
is even deterministic: for innermost rewriting with $\ito$, there is still the
non-deterministic choice between different innermost redexes, whereas
the used redexes for a step with $\pito$ are uniquely determined.
\end{remark}

\begin{example}
\label{ex:non-overlapping}
The TRSs $\RR$ from \exref{ex:size1}, \exref{ex:doubles}, and
\exref{ex:to_relative} are all non-overlapping, and by
\corref{cor:confluence} their parallel-innermost rewrite
relations $\pito$ are confluent.
Thus, also the tight complexity bound
$\pirc{\RR}(n) \in \Theta(n)$ in \exref{ex:lower} is confirmed.
\end{example}

So, in those cases we can actually use this sequence of
transformations from a parallel-innermost TRS via a DT problem to an
innermost (relative) TRS to analyse both upper and lower bounds for
the original. Conveniently, these cases correspond to
programs
with deterministic small-step semantics, our motivation for this work!

\begin{example}
\label{ex:max}
\corref{cor:confluence}
already fails for such natural examples as a TRS with the
following rules to compute the maximum function on
natural numbers:
\[
\begin{array}{rcl}
\Fmax(\FZero,x) & \to & x \\
\Fmax(x,\FZero) & \to & x \\
\Fmax(\FS(x),\FS(y)) & \to & \FS(\Fmax(x,y)) \\
\end{array}
\]
Here we can arguably see immediately that the overlap
between the first and the second rule, at root position,
is harmless: if both rules are applicable to the same redex, the
result of a rewrite step with either rule will be the same
($\Fmax(\FZero,\FZero) \pito \FZero$).
Indeed, the resulting critical pair has the form
$\cp{\Fmax(\FZero,\FZero)}{\Fmax(\FZero,\FZero)}$,
with both components of the critical pair the same.
\end{example}

\subsection{Confluence of $\pito$ with trivial innermost critical overlays}

Critical pairs like $\cp{\Fmax(\FZero,\FZero)}{\Fmax(\FZero,\FZero)}$
where both components are identical are also called \emph{trivial}.
If a critical pair is trivial, it means that the non-determinism in the
choice of rules or positions in the redex does not lead to a non-determinism in the result
of the two rewrite steps described abstractly by the critical pair.
For the purposes of confluence, such critical pairs are always harmless.
For example, for analysing confluence of \exref{ex:max}, the critical pair
$\cp{\Fmax(\FZero,\FZero)}{\Fmax(\FZero,\FZero)}$ can be ignored.

For innermost rewriting, critical pairs resulting from an overlap between
a redex with another redex as a subterm can be ignored as well: the outer
redex is not enabled for an innermost rewrite step. For example, in
\exref{ex:critical_pairs}, the critical pair $\cp{\Ff(\Fd)}{\Ff(\Fb)}$
can be ignored. The reason is that for \emph{innermost} rewriting, here
everything is deterministic: only the rewrite step
$\Ff(\Fa) \ito \Ff(\Fd)$ is innermost, but the rewrite step
$\Ff(\Fa) \someto{\RR} \Fb$ is not.
Thus, for confluence of innermost rewriting it suffices to consider
critical \emph{overlays}, which describe rewriting a redex at the
\emph{same} position of a term using two different rewrite rules,
and we can ignore all other critical pairs.

Moreover, for innermost rewriting, we can even ignore those
critical overlays that can result only from non-innermost rewrite steps,
such as the two critical overlays $\ocp{\Fb}{\Fc}$ and $\ocp{\Fc}{\Fb}$
in \exref{ex:critical_pairs}: the term $\Ff(\Fa)$ that causes these
critical overlays in a critical peak cannot be rewritten
\emph{innermost} at the root because the subterm $\Fa$ must be rewritten
first.

This considerations give rise to the following definitions
(see, e.g., \cite{GramlichPhD}):

\begin{definition}[Trivial critical pair]
Critical pairs of the form $\cp{t}{t}$ are called \emph{trivial}.
\end{definition}

\begin{definition}[Innermost critical overlay]
A critical overlay $\ocp{s}{t}$
resulting from a critical peak $\ell \sigma = u \sigma \someto{\RR} s$
and $\ell \sigma = u \sigma \someto{\RR} t$
is called \emph{innermost} for $\RR$ iff we also have
$\ell \sigma = u \sigma \ito s$ and $\ell \sigma = u \sigma \ito t$.
\end{definition}

Gramlich combines the above observations on critical pairs
in his PhD thesis \cite{GramlichPhD} to the following
sufficient criterion for (uniform) confluence of innermost rewriting:

\begin{theorem}[\cite{GramlichPhD}, Theorem 3.5.6]
\label{thm:ito_confluent_with_cps}
Let $\RR$ be a TRS such that all innermost critical overlays of $\RR$
are trivial. Then $\ito$ is uniformly confluent and hence confluent.
\end{theorem}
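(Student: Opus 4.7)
The plan is to prove uniform confluence of $\ito$ directly, which immediately implies confluence. That is, I will show that whenever $s \ito t_1$ and $s \ito t_2$, either $t_1 = t_2$, or there exists $v$ with $t_1 \ito v$ and $t_2 \ito v$. Suppose the two steps use rules $\ell_i \to r_i$ at positions $\pi_i$ with matchers $\sigma_i$ (for $i \in \{1,2\}$), so $s|_{\pi_i} = \ell_i \sigma_i$ is an innermost redex. I would distinguish three cases depending on the relationship between $\pi_1$ and $\pi_2$.

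The easy case is when $\pi_1$ and $\pi_2$ are parallel: the two redexes are disjoint, so I let $v$ be the term obtained from $s$ by contracting both. Rewriting $t_1$ at position $\pi_2$ (still the same innermost redex $\ell_2 \sigma_2$ since $\pi_1 \parallel \pi_2$ preserves $t_1|_{\pi_2} = s|_{\pi_2}$ and does not introduce new redexes above it) yields $v$, and symmetrically from $t_2$. The impossible case is $\pi_1 < \pi_2$ (or the symmetric one): since $s \ito t_1$ at $\pi_1$, every proper subterm of $\ell_1 \sigma_1 = s|_{\pi_1}$ is in normal form; but $s|_{\pi_2}$ is a proper subterm of $s|_{\pi_1}$ and is itself a redex $\ell_2 \sigma_2$, a contradiction.

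The main obstacle — and the substantive case — is $\pi_1 = \pi_2$, where both rules rewrite the same root-level redex $\ell_1 \sigma_1 = \ell_2 \sigma_2$. Here I would rename variables apart and observe that $\ell_1$ and $\ell_2$ unify (via $\sigma_1 \cup \sigma_2$), so there is a most general unifier $\tau$ and a substitution $\delta$ with $\sigma_i = \tau \cdot \delta$ on the relevant variables. If $\ell_1 \to r_1$ and $\ell_2 \to r_2$ are variants of the same rule, then matching against the redex determines the matcher uniquely on $\VV(\ell_1) = \VV(r_1)$, and $t_1 = t_2$ follows. Otherwise, the MGU $\tau$ yields a critical overlay $\ocp{r_1\tau}{r_2\tau}$. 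The key sub-step is to verify that this critical overlay is \emph{innermost}: if some proper subterm of $\ell_1 \tau$ were a redex, then instantiating with $\delta$ would produce a proper subterm of $\ell_1 \sigma_1$ that is also a redex, contradicting that $\ell_1 \sigma_1$ is an innermost redex in $s$. Hence the overlay is innermost, and by hypothesis it is trivial, so $r_1 \tau = r_2 \tau$; instantiating by $\delta$ gives $r_1 \sigma_1 = r_2 \sigma_2$, whence $t_1 = t_2$.

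Collecting the three cases establishes uniform confluence of $\ito$, and confluence follows (uniform confluence implies that every local peak can be closed in at most one further step from each side, which by a straightforward induction on the lengths of the two reductions gives the diamond-style confluence property). The delicate point requiring care is purely the argument that innermostness of the concrete redex $\ell_1 \sigma_1$ in $s$ transfers upward to innermostness of the abstract critical peak at the MGU level, which is what makes the hypothesis on trivial \emph{innermost} critical overlays (rather than all critical overlays) directly applicable.
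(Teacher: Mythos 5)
Your proof is correct. Note that the paper does not prove this statement itself --- it imports it from Gramlich's thesis --- but its proof of the parallel-innermost analogue (\thmref{thm:pito_confluent_with_cps}) rests on exactly the argument you give for the case $\pi_1 = \pi_2$: rename apart, pass to the most general unifier, observe that the resulting critical overlay is innermost, and use triviality to conclude the two contracta coincide. Your treatment is in fact slightly more careful at the one delicate point, namely the explicit lifting argument that a redex in a proper subterm of $\ell_1\tau$ would instantiate under $\delta$ to a redex in a proper subterm of $\ell_1\sigma_1$, contradicting innermostness; the paper's proof of the parallel version asserts this transfer without spelling it out. The additional case analysis on the relative position of the two redexes (parallel positions joined in one step each; nested positions impossible by innermostness) is exactly what the sequential setting requires beyond the parallel one, and you handle both cases correctly.
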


We can apply a similar reasoning to \thmref{thm:ito_confluent_with_cps}
to get a stronger criterion for $\pito$ being deterministic.

\begin{theorem}[Parallel-innermost confluence from
    only trivial innermost critical overlays]
\label{thm:pito_confluent_with_cps}
Let $\RR$ be a TRS such that all innermost critical overlays of $\RR$
are trivial. Then $\pito$ is deterministic and hence confluent.
\end{theorem}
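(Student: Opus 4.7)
The plan is to strengthen the reasoning behind \corref{cor:confluence} by replacing ``non-overlapping'' with the weaker hypothesis that all innermost critical overlays are trivial, and to conclude determinism (and hence confluence) of $\pito$ in the same way. Recall that a step $s \pito t$ rewrites \emph{all} innermost redexes of $s$ simultaneously, so the set of rewritten positions is already uniquely determined by $s$. The only potential source of non-determinism therefore lies in the choice of rule (and matcher) used at each individual innermost redex position. Showing that this choice is also forced, under our hypothesis, will be the heart of the argument.

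Fix $s$ and an innermost redex position $\pi$ in $s$, and suppose that $s|_\pi = \ell_1\sigma_1 = \ell_2\sigma_2$ for two rules $\ell_1 \to r_1$ and $\ell_2 \to r_2$ of $\RR$ (with variables renamed apart). Then $\ell_1$ and $\ell_2$ unify, so they have a most general unifier $\tau$, giving rise to a critical overlay $\ocp{r_1\tau}{r_2\tau}$. I first verify that this critical overlay is \emph{innermost} in the sense of the paper's definition: because $\ell_1\sigma_1$ is an innermost redex, no proper subterm of it is a redex, and since substitution never destroys a redex, no proper subterm of $\ell_1\tau$ can be a redex either. Hence both root rewrite steps from $\ell_1\tau = \ell_2\tau$ are innermost steps, so $\ocp{r_1\tau}{r_2\tau}$ is an innermost critical overlay. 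By hypothesis it is trivial, i.e.\ $r_1\tau = r_2\tau$. Instantiating with the substitution $\delta$ which factorises $\sigma_1$ and $\sigma_2$ through $\tau$ yields $r_1\sigma_1 = r_1\tau\delta = r_2\tau\delta = r_2\sigma_2$. Thus the result of rewriting the redex at $\pi$ is uniquely determined by $s|_\pi$.

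Combining this with the fact that the positions of innermost redexes in $s$ are fixed, every parallel-innermost step from $s$ must yield the same term $t$, so $\pito$ is deterministic. Determinism then gives confluence for free: a straightforward induction on the length of the two reduction sequences $s \pitos t_1$ and $s \pitos t_2$ shows that one is a prefix of the other, so they have a common reduct (in fact one of $t_1$ or $t_2$ itself).

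The main subtlety, and what I expect to be the only delicate point, is the verification that the critical overlay arising from a concrete innermost redex is itself an innermost critical overlay in the paper's technical sense; this is the step that rules out pathologies like \exref{ex:no_confluence} and that is handled by observing that substitutions preserve redex-hood. Everything else is routine bookkeeping analogous to the proofs of \lemref{lem:unique_reduct} and \corref{cor:confluence}, and, on a higher level, parallels Gramlich's argument for \thmref{thm:ito_confluent_with_cps}, with the additional observation that the \emph{choice of redex} — which still causes non-determinism for $\ito$ — is eliminated when one insists on rewriting all innermost redexes in parallel.
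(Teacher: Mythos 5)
Your proposal is correct and follows essentially the same route as the paper's own proof: reduce determinism of $\pito$ to uniqueness of the result at each individual innermost redex, pass from the concrete redex to the critical overlay via the most general unifier, and use triviality of the (innermost) overlay to conclude. Your explicit justification that the resulting overlay is innermost (substitution instances preserve redexes, so a redex below the root of $\ell_1\tau$ would yield one below the root of $\ell_1\sigma_1$) is a point the paper's proof only asserts, and is a welcome addition; the only detail you gloss over is the case where both steps use variants of the same rule, which is immediate since $\VV(r)\subseteq\VV(\ell)$ forces equal results.
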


\thmref{thm:pito_confluent_with_cps} subsumes \corref{cor:confluence}:
if there are no critical pairs at all, then there are also no non-trivial
innermost critical overlays.

\begin{proof}
We prove that the relation $\pito$ is deterministic if
all innermost critical overlays of $\RR$ are trivial.
To this end, we will show that if $t_0 \pito t_1$ and
$t_0 \pito t_2$, we have $t_1 = t_2$.

Assume $t_0 \pito t_1$ and $t_0 \pito t_2$ for some terms $t_0, t_1, t_2$.
Since the rewrite step is parallel-innermost,
all used redexes are fixed: all the innermost redexes.
Let $s_0$ be an arbitrary innermost redex in $t_0$.
If $s_0 \ito s_1$ and $s_0 \ito s_2$ implies $s_1 = s_2$,
the statement $t_1 = t_2$ would directly follow.

Thus, assume $s_0 \ito s_1$ and $s_0 \ito s_2$.
As these rewrite steps are innermost, rewriting must take place
at the root of $s_0$.
Assume that two different rules $\ell \to r, u \to v \in \RR$
with variables renamed apart are used for these rewrite steps
(otherwise the claim follows directly). Let
$\delta, \theta$ be substitutions such that
$s_0 = u \delta \ito v \delta = s_1$ and
$s_0 = \ell \theta \ito r \theta = s_2$.

There are corresponding critical peaks
$u\sigma \ito v\sigma$ and $\ell\sigma \ito r\sigma$
with $\sigma$ a most general unifier of $u$ and $\ell$.
Since the rewrite steps are innermost, the critical peak
gives rise to innermost critical overlays
$\ocp{v\sigma}{r\sigma}$ and $\ocp{r\sigma}{v\sigma}$.

As $\sigma$ is a most general unifier of $u$ and $\ell$,
we
have $s_0 = u \delta = u \sigma \delta'$ and
$s_0 = \ell \theta = \ell\sigma \theta'$ with
$\delta'(x) = \theta'(x)$ for all variables
$x \in \VV(u\delta) \cup \VV(\ell\theta)$.
Thus, $s_1 = v \delta = v \sigma\delta'$ and
$s_2 = r \theta = r \sigma\theta' = r\sigma\delta'$.
As the critical overlays are trivial by precondition of
our theorem, we have $v \sigma = r \sigma$ and hence also
$s_1 = v \sigma\delta' = r\sigma\delta' = s_2$.
This concludes our proof.
\end{proof}

\begin{example}
Since the only innermost critical overlay
$\ocp{\Fmax(\FZero,\FZero)}{\Fmax(\FZero,\FZero)}$
for the TRS $\RR$ from
\exref{ex:max} is trivial, $\pito$ is confluent.
\end{example}

\begin{example}
Since none of the critical overlays in the TRS $\RR$ from
\exref{ex:critical_pairs} is innermost, $\pito$ is confluent.
\end{example}

\begin{remark}
From \thmref{thm:pito_confluent_with_cps} one may be tempted
to claim that $\pito$ is deterministic iff $\ito$ is uniformly
confluent. The ``$\Rightarrow$'' direction clearly holds.
But for the ``$\Leftarrow$'' direction consider the following
counterexample:
$\RR = \{ \Fa \to \Fb, \Fa \to \Fc, \Fb \to \Fd, \Fc \to \Fd \}$.
The relation $\ito$ is uniformly confluent, but $\pito$ with
$\Fa \pito \Fb$ and $\Fa \pito \Fc$ is not deterministic.
\end{remark}

With \corref{cor:confluence} and
\thmref{thm:pito_confluent_with_cps} we have proposed two
sufficient criteria for proving
confluence of parallel-innermost rewriting for given TRSs
that can be automated using syntactic checks only, without
any search problems.
These criteria specifically capture TRSs corresponding to
deterministic programs.

\section{Implementation and experiments}
\label{sec:expe}
\emph{Implementation.}
We have implemented the contributions of this paper in the
automated termination and complexity analysis tool
\aprove~\cite{aprove-tool}. We added or modified
\journal{about 730} lines of Java code,
including
\begin{itemize}
\itemsep=3pt
\item the framework of parallel-innermost rewriting;
\item the generation of parallel DTs (\thmref{thm:canonical_pdt_problem});
\item a processor to convert them to TRSs with the same complexity
(\thmref{thm:upper_rel}, \thmref{thm:lower_rel});
\item the confluence tests of~\corref{cor:confluence}
  and of \thmref{thm:pito_confluent_with_cps}.
\end{itemize}
As far as we are aware, this is the first implementation of a fully
automated inference of complexity bounds for parallel-innermost
rewriting.
A preliminary implementation of our techniques in
  \aprove\ participated successfully in the new demonstration
  category ``Runtime Complexity: TRS Parallel Innermost''
  at termCOMP 2022 and 2023.

\medskip
This implementation is now part of the \aprove\ release versions and
can be downloaded or used via a web interface~\cite{aprove-tool}. The
input format is an extension of the human-readable text format that was
used to represent TRSs in early versions of the TPDB.
For example, a file \texttt{size.trs} for \exref{ex:size1} would
have the content shown in \autoref{fig:size.trs}.

\begin{figure}[h]
\vspace*{2mm}
\begin{verbatim}
(GOAL COMPLEXITY)
(STARTTERM CONSTRUCTORBASED)
(STRATEGY PARALLELINNERMOST)
(VAR v l r x y)
(RULES
  size(Tree(v, l, r)) -> S(plus(size(l), size(r)))
  size(Nil) -> Zero
  plus(Zero, y) -> y
  plus(S(x), y) -> S(plus(x, y))
)
\end{verbatim} \vspace*{-4mm}
\caption{Input file for \exref{ex:size1}.}
\label{fig:size.trs}
\end{figure}

In this format, we can designate a rewrite rule as a
\emph{relative} rule, as in \exref{ex:size_relative},
by writing ``\verb!->=!'' instead of ``\verb!->!''.

\emph{Experiments.}
To demonstrate the effectiveness of our implementation, we have
considered the 663 TRSs from category
\texttt{Runtime\_Complexity\_Innermost\_Rewriting} of the
TPDB, version 11.2~\cite{tpdb}.\footnote{Version 11.3 of the
  TPDB was released in July 2022, but does not contain changes
  over version 11.2 for the category
  \texttt{Runtime\_Complexity\_Innermost\_Rewriting}.}
This category
of the TPDB
is the benchmark collection used at
termCOMP to compare tools that
infer complexity bounds for runtime complexity of innermost
rewriting, $\ircR$.
To get meaningful results, we first applied \thmref{thm:nopar} to
exclude TRSs $\RR$ where $\pircR(n) = \ircR(n)$ trivially holds.
We obtained
294 TRSs with potential for parallelism as our benchmark set.
We conducted our experiments
on the \starexec\ compute cluster \cite{starexec} in
the \texttt{all.q} queue. The timeout per example and tool
configuration was set to 300 seconds.
Our experimental data with analysis times
and all examples are available online~\cite{evalPageJournal}.

As remarked earlier, we always have $\pircR(n) \leq \ircR(n)$,
so an upper bound for $\ircR(n)$ is always a legitimate
upper bound for $\pircR(n)$.
Thus,
we include
upper bounds for $\ircR$
found by the state-of-the-art tools \aprove\ and \tct~\cite{versionPage,tct} from
termCOMP 2021\footnote{For analysis of $\ircR$, both tools
  participated in termCOMP 2022 with their versions from
  termCOMP 2021.}
as a ``baseline'' in our evaluation.
We compare with several configurations of \aprove\ and \tct\ that use
the
techniques of this paper for $\pircR$:
``\aprove\ $\pircR$ Section~3'' also uses
\thmref{thm:canonical_pdt_problem} to produce canonical
parallel DT problems as input for the DT framework.
``\aprove\ $\pircR$ Sections~3~\&~4'' additionally uses
the transformation from \defref{def:to_relative}
to convert a TRS $\RR$ to a relative TRS
$\DTpar(\RR)/\RR$
and then to analyse $\irc{\DTpar(\RR)/\RR}(n)$ (for lower bounds
only together with a confluence proof
either via \corref{cor:confluence} \journal{or via
\thmref{thm:pito_confluent_with_cps}, as indicated where relevant}).
We also extracted each of the TRSs $\DTpar(\RR)/\RR$
and used the files as inputs
for the analysis of $\irc{\DTpar(\RR)/\RR}$ by \aprove\ and
\tct\ from termCOMP 2021.
``\aprove\ $\pircR$ Section~4''
and ``\tct\ $\pircR$ Section~4'' provide the results
for $\pircR$ obtained by analysing $\irc{\DTpar(\RR)/\RR}$
(for lower bounds, only where $\pito$ had been proved confluent).

\begin{table}[!h]
\vspace*{-2mm}
\begin{center}
\tabcolsep=5pt
\caption{Upper bounds for runtime complexity of (parallel-)innermost rewriting}
\label{table:ui}
{\scalebox{0.95}{
\begin{tabular}{|l||c|c|c|c|c||c|}
\hline
Tool & $\Oh(1)$ & $\leq\Oh(n)$ & $\leq\Oh(n^2)$ & $\leq\Oh(n^3)$
     & $\leq\Oh(n^{\geq 4})$ & avg.\ time (s) \\\hline\hline
\tct\ $\ircR$
& 4 & 32 & 51 & 62 & 67 & 202.9 \\ 
  \aprove\ $\ircR$
     & \textbf{5} & 50 & 111 & 123 & 127 & 193.8 \\ 
    \hline
\aprove\ $\pircR$ Section \ref{sec:para_complex}
& \textbf{5} & \textbf{70} & \textbf{125} & 139 & 141 & 222.0 \\ 
\aprove\ $\pircR$ Sections \ref{sec:para_complex} \&
  \ref{sec:dt_to_irc}
& \textbf{5} & \textbf{70} & \textbf{125} & \textbf{140} & \textbf{142} & 211.5 \\ 
\hline
\tct\ $\pircR$ Section \ref{sec:dt_to_irc}
& 4 & 46 & 66 & 79 & 80 & \textbf{189.7} \\ 
\aprove\ $\pircR$ Section \ref{sec:dt_to_irc}
& \textbf{5} & 64 & 99 & 108 & 108 & 219.8 \\ 
\hline
\end{tabular} } }
\end{center}
\end{table}

\autoref{table:ui}
gives an overview over our experimental results for upper bounds.
For each configuration, we state the number of examples for
which the corresponding asymptotic complexity bound
was
inferred.
A
column ``$\leq \Oh(n^k)$'' means that the corresponding tools proved a bound
$\leq \Oh(n^k)$ (e.g., the configuration
``\aprove\ $\ircR$'' proved constant or linear upper bounds in
50
cases). Maximum values in a column are highlighted in \textbf{bold}.
We observe that upper complexity bounds improve in a
noticeable number of cases, e.g., linear bounds on $\pircR$ can now be inferred
for
70 TRSs rather than for
50 TRSs (using upper bounds on $\ircR$
as an over-approximation), an improvement by 40\%.
Note that this does \emph{not} indicate deficiencies in the existing tools for $\ircR$,
which had not been designed with analysis of $\pircR$ in mind -- rather,
it shows that specialised techniques for analysing $\pircR$ are a worthwhile
subject of investigation.
Note also that \exref{ex:size:irc} and
\exref{ex:size_tight_bound} show that even for TRSs
with potential for parallelism, the actual parallel and sequential
complexity may still be asymptotically identical,
which further highlights the need for dedicated analysis techniques for $\pircR$.

Improvements from $\ircR$ to $\pircR$ can be drastic:
for example, for the TRS \texttt{TCT\_12/recursion\_10}, the
bounds found by \aprove\ change from an upper bound of sequential
complexity of $\OO(n^{10})$ to a (tight) upper bound for parallel
complexity of $\OO(n)$. This TRS models a specific recursion
structure, with rules
$\{ \Ff_0(x) \to \Fa \} \cup
 \{ \Ff_i(x) \to \Fg_i(x,x), \; \Fg_i(\Fs(x), y) \to \Fb(\Ff_{i-1}(y),
 \Fg_i(x,y)) \mid 1 \leq i \leq 10 \}$, and is highly amenable to parallelisation.
This TRS resembles a classical ``syntactically vectorisable loop''. In such cases, vectorisation indeed accelerates the program by this order of magnitude.
Our analysis captures this
kind of acceleration, however we should keep in mind that apart
from vectorisation, even perfect ``thread-based'' parallelisation
on $N$ cores does not achieve $N$ times acceleration, due to the
cost of context switching. Our complexity result should be taken
as an estimation of ``parallelism potential''.

We observe that adding the
techniques from \autoref{sec:dt_to_irc} to the techniques from
\autoref{sec:para_complex} leads to only few examples for which better
upper bounds can be found (one of them is \exref{ex:to_relative_upper}).

\begin{table}[!h]
\tabcolsep=5pt
\begin{center}
\caption{Lower bounds for runtime complexity of parallel-innermost
rewriting}
\label{table:li}
\begin{tabular}{|l||c||c|c|c|c|}
\hline
Tool
& confluent
& $\geq\Omega(n)$ & $\geq\Omega(n^2)$ & $\geq\Omega(n^3)$
  & $\geq\Omega(n^{\geq 4})$ \\\hline\hline
\begin{tabular}{@{}l@{}}
  \aprove\ $\pircR$ Sections \ref{sec:para_complex} \&
    \ref{sec:dt_to_irc},\\
  \quad confluence by \corref{cor:confluence}
\end{tabular}
  & 165
& 116 & \textbf{22} & \textbf{5} & \textbf{1} \\ 
\begin{tabular}{@{}l@{}}
  \aprove\ $\pircR$ Sections \ref{sec:para_complex} \&
    \ref{sec:dt_to_irc},\\
  \quad confluence by \thmref{thm:pito_confluent_with_cps}
\end{tabular}
& \textbf{190}
& 133 & \textbf{22} & \textbf{5} & \textbf{1} \\ 
\hline
\begin{tabular}{@{}l@{}}
\tct\ $\pircR$ Section \ref{sec:dt_to_irc},\\
  \quad confluence by \corref{cor:confluence}
\end{tabular}
& 165
& 112 & 0 & 0 & 0 \\ 
\begin{tabular}{@{}l@{}}
\tct\ $\pircR$ Section \ref{sec:dt_to_irc},\\
  \quad confluence by \thmref{thm:pito_confluent_with_cps}
\end{tabular}
& \textbf{190}
& 131 & 0 & 0 & 0 \\ 
\hline
\begin{tabular}{@{}l@{}}
\aprove\ $\pircR$ Section \ref{sec:dt_to_irc},\\
  \quad confluence by \corref{cor:confluence}
\end{tabular}
& 165
& 140 & 21 & \textbf{5} & \textbf{1} \\ 
\begin{tabular}{@{}l@{}}
\aprove\ $\pircR$ Section \ref{sec:dt_to_irc},\\
  \quad confluence by \thmref{thm:pito_confluent_with_cps}
\end{tabular}
& \textbf{190}
& \textbf{159} & 21 & \textbf{5} & \textbf{1} \\ 
\hline
\end{tabular}
\end{center}
\end{table}
\begin{table}[h]
\tabcolsep=5pt
\begin{center}
\caption{Tight bounds for runtime complexity of parallel-innermost
rewriting}
\label{table:tight}
\begin{tabular}{|l||c|c|c|c||c|}
\hline
Tool & $\Theta(1)$ & $\Theta(n)$ & $\Theta(n^2)$ & $\Theta(n^3)$ & Total \\\hline\hline
\begin{tabular}{@{}l@{}}
  \aprove\ $\pircR$ Sections \ref{sec:para_complex} \&
    \ref{sec:dt_to_irc},\\
  \quad confluence by \corref{cor:confluence}
\end{tabular}
& \textbf{5} & 27 & 0 & \textbf{3} & 35 \\ 
\begin{tabular}{@{}l@{}}
  \aprove\ $\pircR$ Sections \ref{sec:para_complex} \&
    \ref{sec:dt_to_irc},\\
  \quad confluence by \thmref{thm:pito_confluent_with_cps}
\end{tabular}
& \textbf{5} & 33 & 0 & \textbf{3} & 41 \\ 
\hline
\begin{tabular}{@{}l@{}}
\tct\ $\pircR$ Section \ref{sec:dt_to_irc},\\
  \quad confluence by \corref{cor:confluence}
\end{tabular}
& 4 & 19 & 0 & 0 & 23 \\
\begin{tabular}{@{}l@{}}
\tct\ $\pircR$ Section \ref{sec:dt_to_irc},\\
  \quad confluence by \thmref{thm:pito_confluent_with_cps}
\end{tabular}
& 4 & 22 & 0 & 0 & 26 \\
\hline
\begin{tabular}{@{}l@{}}
\aprove\ $\pircR$ Section \ref{sec:dt_to_irc},\\
  \quad confluence by \corref{cor:confluence}
\end{tabular}
& \textbf{5} & 33 & 0 & \textbf{3} & 41 \\
\begin{tabular}{@{}l@{}}
\aprove\ $\pircR$ Section \ref{sec:dt_to_irc},\\
  \quad confluence by \thmref{thm:pito_confluent_with_cps}
\end{tabular}
& \textbf{5} & \textbf{41} & 0 & \textbf{3} & \textbf{49} \\
\hline
\end{tabular}
\end{center}\vspace*{-4mm}
\end{table}

\autoref{table:li} shows our results for lower bounds on $\pircR$.
Here we evaluated only
configurations including
\defref{def:to_relative} to make inference techniques
for lower bounds of $\ircR$ applicable to $\pircR$.
The reason is that a lower
bound on $\ircR$ is not necessarily also a lower bound for $\pircR$
(the whole \emph{point} of performing innermost rewriting in parallel is to
reduce the asymptotic complexity!), so
using results by tools that compute lower bounds on $\ircR$ for
comparison would not make sense.
\journal{As a precondition for applying our approach to lower
bounds inference for $\pircR$ for a TRS $\RR$, we also need to
find a proof for confluence of $\pito$.
The confluence criterion of \corref{cor:confluence}
is applicable to 165 TRSs in our benchmark set, about 56.1\%
of the benchmark set. Our new contribution
\thmref{thm:pito_confluent_with_cps} proves confluence
of $\pito$ for a superset of 190 TRSs, about 64.6\% of our
benchmark set.
This indicates that the search for more powerful criteria
for proving confluence of parallel-innermost term rewriting
is worthwhile.}

Regarding lower bounds, we
observe that non-trivial lower
bounds can be inferred for
140
out of the
165 examples proved confluent via
\corref{cor:confluence}, and for 159 out of
the 190 examples proved confluent via
\thmref{thm:pito_confluent_with_cps}.
This shows that
our transformation from \autoref{sec:dt_to_irc}
has practical value since it
produces relative TRSs
that are generally amenable to analysis by existing program analysis
tools. It also shows that the more powerful confluence
  analysis by \thmref{thm:pito_confluent_with_cps} improves
  also the inference of lower complexity bounds.
Finally, \autoref{table:tight} shows that for overall
49 TRSs, the bounds that were found are asymptotically
\emph{precise}.\footnote{Unfortunately, the implementation of our
  confluence check used in the experiments for the conference
  version \cite{lopstr2022} had a bug that caused it to consider
  some TRSs confluent where \corref{cor:confluence} was not
  applicable (specifically, overlaps due to certain
  non-trivial critical overlays were not detected as such).
  Overall, 21 TRSs were considered confluent even though
  \corref{cor:confluence} could not be applied, and
  18 of the lower bounds claimed in the experimental
  evaluation of \cite{lopstr2022} were affected.
  This bug has now been fixed.}

However, the nature of the benchmark set also plays
a significant role for assessing the applicability of criteria
for proving confluence.
Therefore, we also considered COPS \cite{cops},
the benchmark collection used in the
\emph{Confluence Competition}~\cite{coco}.
As a benchmark collection for our second experiment to assess our
confluence analysis,
we downloaded the \journal{577} unsorted unconditional TRSs
of COPS.\footnote{The download took place on \journal{29 May 2023}.}
While the TRSs in this subset of COPS are usually analysed for confluence of
full rewriting, we analysed whether the TRSs are confluent for
parallel-innermost rewriting (which currently does not have a
dedicated category in COPS).
Our implementation determined that 60 of the 577
TRSs (about 10.4\%) are non-overlapping, which implies parallel-innermost
confluence by \corref{cor:confluence}.
In contrast, \thmref{thm:pito_confluent_with_cps}
finds 274 confluence proofs (about 47.5\%), a significantly
better result.

Still, the success rate for this benchmark set is
significantly lower than for the examples from the TPDB.
This
is not surprising: COPS collects TRSs that provide a challenge
to confluence analysis tools, whereas the analysed subset of the TPDB
contains TRSs which are interesting specifically for runtime complexity
analysis and often correspond to programs with deterministic results.

\emph{Runtime of the analysis.}
\autoref{table:ui} shows the (mean) average time used by the respective
tool configurations to analyse their inputs for (parallel-)innermost
runtime complexity (the search for upper and lower bounds was run
concurrently).
With the used timeout of 300 seconds, all configurations needed
between 180 and 240 seconds on average per example. It may perhaps
be surprising that even the fastest configuration used over
180 seconds per example.
The likely reason is that determining the asymptotic runtime complexity
of a rewrite system is an optimisation problem: as long as the highest
lower bound and the lowest upper bound found so far do not coincide,
there is the possibility that applying further techniques in the
analysis may lead to tighter bounds. Thus, complexity analysis tools
like \aprove\ and \tct\ are usually configured to exhaust most of the
available time by a search for upper and lower bounds, finishing their
search only when
(a) the current upper and lower bound coincide and can be reported
    as the best possible result,
(b) all available techniques for the given configuration have been tried, or
(c) the timeout is almost reached and the current result can be reported
    as the best result obtained within the given timeout.
Indeed, in our experiments with a timeout of 300 seconds, both
\aprove\ and \tct\ have runtimes between 290 and 300 seconds on many
benchmarks.

Regarding the confluence-only analysis, on most TRSs in our
collection both used criteria usually return a result within a few
milliseconds. We believe that this very quick result is due to the
fact that our criteria are purely syntactic and do not involve any
search problems.
For \corref{cor:confluence}, the highest runtime on our benchmark
suite was observed for \texttt{Frederiksen\_Glenstrup/int}
(183 rewrite rules)
with 76 ms on a computer with an Intel Core i7-10750H CPU @ 2.60GHz.
For \thmref{thm:pito_confluent_with_cps}, the highest runtime
was observed for
\texttt{Transformed\_CSR\_04/LISTUTILITIES\_complete\_noand\_GM}
(407 rewrite rules)
with 135 ms on the same computer. Our implementation is not
particularly optimised, so we anticipate that the criteria can
be made to scale to larger examples as well.

\section{Related work, conclusion, and future work}
\label{sec:related}
\label{sec:conclusion}

\emph{Related work.}
We provide pointers to work on automated analysis of (sequential) innermost runtime complexity of
TRSs at the start of \autoref{sec:dt_to_irc}, and we
discuss the apparent absence of work on confluence
of parallel-innermost rewriting in \autoref{sec:confluence}.
We now focus on
automated techniques
for
complexity analysis of
parallel/concurrent computation.

Our
notion of
parallel complexity follows a large
tradition of static \emph{cost analysis}, notably
for
concurrent programming. The two notable
works~\cite{AlbertLCTES11,AlbertTOCL18} address
async/finish programs where tasks are explicitly launched.
The authors propose several metrics such as the total number of spawned
tasks (in any execution of the program) and
a notion of
parallel complexity that is roughly the same as ours. They provide
static analyses that build on
techniques for estimating costs
of imperative languages with functions calls~\cite{Albert2012}, and/or
recurrence equations.
Recent approaches for the Pi
Calculus~\cite{BaillotPiESOP21,BaillotPiCONCUR21} compute
the
\emph{span} (our parallel complexity)
through a new typing system. Another type-based calculus for the same
purpose has been proposed with session types~\cite{HoffmannICFP18}.

\medskip
For logic programs, which -- like TRSs -- express an implicit parallelism,
parallel complexity can be inferred
using recurrence solving~\cite{GallagherLOPSTR19}.

The tool \raml~\cite{raml}
derives bounds on the worst-case evaluation cost of
first-order functional programs with list and pair
constructors as well as pattern matching and both sequential and
parallel composition \cite{HoffmannESOP15}.
They use two typing derivations with specially annotated types, one for the \emph{work}
and one for the \emph{depth} (parallel complexity).
Our
setting is more flexible w.r.t.\ the shape of user-defined
data structures (we allow for tree constructors of arbitrary arity),
and our analysis
deals with both data structure and control in an integrated manner.

\medskip
\emph{Conclusion and future work.}
We have defined parallel-innermost
runtime complexity for TRSs and proposed an approach to its
automated analysis.
Our approach allows for finding both upper and lower bounds and builds
on existing techniques and tools. Our experiments on the TPDB indicate that our
approach is practically usable, and we are
confident that it captures the potential parallelism of programs with pattern matching.

Parallel rewriting is a topic of active research, e.g., for GPU-based
massively parallel rewrite engines \cite{gpu_trs,gpu_trs_journal}.
Here our work could
be useful to determine which functions to evaluate on the GPU.
More generally, parallelising
compilers which need to determine which function calls should be
compiled into parallel code may benefit from an analysis of
parallel-innermost runtime complexity such as ours.

DTs have been used \cite{lctrsComplexity} in runtime complexity analysis of
\emph{Logically Constrained TRSs (LCTRSs)}\ \cite{lctrs13,lctrsTocl}, an extension of
TRSs by built-in data types from SMT theories
(integers, arrays, \ldots).
This work could be extended to parallel rewriting.
Moreover, analysis of
\emph{derivational complexity} \cite{derivational89}
of parallel-innermost term rewriting can be a promising direction. Derivational complexity
considers the length of rewrite sequences from arbitrary start terms, e.g.,
$\Fd(\Fd(\dots(\Fd(\FS(\FZero)))\dots))$ in our motivating example (\ref{sec:example}, \exref{ex:doubles}),
which can have longer derivations than basic terms of the same size.
Finally, towards
automated parallelisation
we aim to
infer
complexity bounds w.r.t.\ term
\emph{height}
(terms = trees!), as suggested in~\cite{wst16trs}.

\medskip
For \emph{confluence analysis},
an obvious next step
towards
more powerful criteria
would be to adapt the classic
confluence criterion by Knuth and Bendix \cite{KB70}.
By this criterion, a TRS $\RR$ has a confluent rewrite
relation $\someto{\RR}$ if it is terminating and for each of
its critical pairs $\cp{t_1}{t_2}$, there exists some term
$s$ such that $t_1 \sometos{\RR} s$ and $t_2 \sometos{\RR} s$
(i.e., $t_1$ and $t_2$ are \emph{joinable}).
Termination can in many cases be proved automatically
by modern termination analysis tools \cite{termcomp},
and for terminating TRSs, it is decidable whether
two terms are joinable (by rewriting them to all possible
normal forms and checking whether a common normal form
has been reached).
This criterion has been adapted for confluence of
innermost rewriting \cite[Theorem 3.5.8]{GramlichPhD}
via critical overlays $\ocp{t_1}{t_2}$.
A promising next step would be to investigate if
further modifications are needed for proving confluence
of parallel-innermost rewriting.

\medskip
Towards handling larger and more difficult inputs,
it would be worth investigating
to what extent confluence criteria for (parallel-)innermost
rewriting can be made \emph{compositional}
\cite{CompositionalConfluence22,CompositionalConfluence24}
or integrated into the \emph{Confluence Framework}
\cite{confident}.
This would allow
combinations of different confluence criteria to work together
by focusing on different parts of a TRS for the overall
confluence proof.

In a different direction, formal \emph{certification}
of the proofs
found using the techniques in this paper would be highly desirable.
Unfortunately, automated tools for program verification such as
\aprove\ are not immune to logical errors in their programming.
For TRSs, many proof techniques for properties such as
termination, complexity bounds, and confluence have been
formalised in trusted proof assistants such as
\coq\ or \isabelleHOL\ \cite{cime,color,ceta,nijn}.
Based on these formalisations, proof certifiers have been
created to check proof traces (for a given property).
Such proof traces are usually
generated by automated tools specialised in \emph{finding}
proofs, such as \aprove, whose source code has not
been formally verified. The certifier then either verifies that
the proof trace indeed correctly instantiates the formalised
proof techniques for the given TRS, or it points out that this
was not the case (ideally with a pointer to the specific step
in the proof trace that could not be verified).

\subsection*{Acknowledgements}
We thank Vincent van Oostrom and the anonymous reviewers
  of earlier and the current versions of this paper for helpful discussions and comments.

\end{document}